\titleformat{\section}{\normalfont\normalsize\bfseries}{\thesection.}{0.5em}{}
\titleformat{\subsection}{\normalfont\normalsize}{\thesubsection.}{0.5em}{\itshape}
\titleformat{\subsubsection}{\normalfont\normalsize}{\thesubsubsection.}{0.5em}{\itshape}
\titlespacing{\section}{0pt}{2em}{1em}
\titlespacing{\subsection}{0pt}{1em}{0.5em}
\titlespacing{\subsubsection}{0pt}{1em}{0.5em}
\renewenvironment{abstract}
{\par\noindent\ignorespaces
  \begin{center}
  \begin{minipage}{0.90\textwidth}\fontsize{8pt}{10pt}}
    {
    \end{minipage}
\end{center}\par\medskip}
\title{Degrees-of-freedom penalized piecewise regression}
\date{} 
\author{} 
\newcommand{\authorinfo}[4]{%
  \begin{center}
    {
      \textsc{#1} \\
      \emph{#2, #3} \\
      \emph{#4}
    }
  \end{center}
}
\definecolor{linkclr}{cmyk}{0.973,0.957,0,0.04}
\def\svadd#1{{\color{black}#1}}
\definecolor{mpl-blue}{HTML}{1f77b4}
\definecolor{mpl-orange}{HTML}{fba402}
\definecolor{mpl-green}{HTML}{2ca02c}
\definecolor{mpl-red}{HTML}{d62728}
\definecolor{mpl-purple}{HTML}{9467bd}
\definecolor{mpl-brown}{HTML}{8c564b}
\definecolor{mpl-pink}{HTML}{e377c2}
\definecolor{mpl-grey}{HTML}{7f7f7f}
\definecolor{mpl-gray}{HTML}{7f7f7f}
\definecolor{mpl-lime}{HTML}{bcbd22}
\definecolor{mpl-cyan}{HTML}{17becf}
\definecolor{backcolour}{rgb}{1,1,1}
\algnewcommand\algorithmicmatch{\textbf{match}}
\algnewcommand\algorithmiccase{\textbf{case}}
\lstdefinestyle{mystyle}{
  backgroundcolor=\color{backcolour},
  commentstyle=\color{mpl-green},
  keywordstyle=\color{mpl-cyan},
  numberstyle=\tiny\color{mpl-gray},
  stringstyle=\color{mpl-green},
  basicstyle=\ttfamily\scriptsize,
  breakatwhitespace=false,
  breaklines=true,
  captionpos=b,
  keepspaces=true,
  numbers=left,
  numbersep=5pt,
  showspaces=false,
  showstringspaces=false,
  showtabs=false,
  tabsize=2
}
\newtheorem{theorem}{Theorem}
\newtheorem{lemma}[theorem]{Lemma}
\newtheorem{example}[theorem]{Example}
\newtheorem{corollary}[theorem]{Corollary}
\theoremstyle{definition}
\newtheorem{assumption}{Assumption}
\theoremstyle{remark}
\newtheorem{remark}{Remark}
\def\R{\ensuremath{\mathbb{R}}}
\def\N{\ensuremath{\mathbb{N}}}
\def\OPart{\operatorname{OPart}}
\def\Seg{\operatorname{Seg}}
\def\Norm#1{\lVert#1\rVert}
\def\det{\operatorname{det}\,}
\def\st{s.\,t.\xspace}
\def\resp{resp.\xspace}
\def\ie{i.\,e.\xspace}
\def\GHz{GHz\xspace}
\def\argmin{\operatornamewithlimits{arg\, min}}
\def\CV{\operatorname{CV}}
\def\O{\mathcal{O}}
\def\Ortho{\mathbf{O}}
\def\GL{\mathbf{GL}}
\def\sgn{\operatorname{sgn}}
\def\dof{\operatorname{dof}}
\def\ElemCount#1{\# #1}
\def\sUnique{RMGT-unique\xspace}
\newlist{inlinelist}{enumerate*}{1}
\setlist[inlinelist,1]{label=\emph{(\roman*)}, before=\unskip{~}\quad, after=\unskip{~}\quad, itemjoin={\quad}}
\begin{document}

\maketitle
\thispagestyle{empty}

\vspace{-1.7cm}

\authorinfo
{
  \href{https://orcid.org/0000-0001-7007-7773}{\includegraphics[scale=0.06]{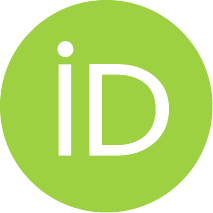}\hspace{1mm}\textcolor{black}{Stefan Volz}},
  \href{https://orcid.org/0000-0003-1427-0776}{\includegraphics[scale=0.06]{orcid.pdf}\hspace{1mm}\textcolor{black}{Martin Storath}}
}
{Lab for Mathematical Methods in Computer Vision and Machine Learning}
{Technische Hochschule Würzburg-Schweinfurt}
{Ignaz-Schön-Str. 11, 97421 Schweinfurt, Germany}

\begin{center}\textsc{and}
\end{center}

\authorinfo
{\href{https://orcid.org/0000-0002-4969-7609}{\includegraphics[scale=0.06]{orcid.pdf}\hspace{1mm}\textcolor{black}{Andreas Weinmann}}}
{Department of Mathematics and Natural Sciences}
{Hochschule Darmstadt}
{Schöfferstraße 3, 64295 Darmstadt, Germany}

\begin{center}\emph{Preprint}
\end{center}
\begin{center}\textsc{\today}
\end{center}

\begin{abstract}
  Many popular piecewise regression models rely on minimizing a cost function on the model fit with a linear penalty on the number of segments. However, this penalty does not take into account varying complexities of the model functions on the segments potentially leading to overfitting when models with varying complexities, such as polynomials of different degrees, are used. In this work, we enhance on this approach by instead using a penalty on the sum of the degrees of freedom over all segments, called degrees-of-freedom penalized piecewise regression (DofPPR).
  We show that the solutions of the resulting minimization problem  are unique for almost all input data in a least squares setting.
  We develop a fast algorithm which does not only compute a minimizer but also  determines an optimal hyperparameter -- in the sense
  of rolling cross validation with the one standard error rule -- exactly.
  This eliminates manual hyperparameter selection.
  Our method supports optional user parameters for incorporating domain knowledge.
  We provide an  open-source Python/Rust code for the piecewise polynomial least squares case which can be extended to further models. We demonstrate the practical utility through a simulation study and by applications to real data.
  A constrained variant of the proposed method gives state-of-the-art results in the Turing benchmark for unsupervised changepoint detection.
\end{abstract}



\part{Main Text}

\section{Introduction}

Assume we are given the the noisy samples $y_i = g(t_i)$
of a function $g \colon T \to X$ at time points $t_1 < \ldots< t_N$, where $(X, d)$ is a metric space and $T$ a real interval.
In many practical applications, the  signal $g$ can be well described by a piecewise function of some sort.
For example, piecewise constant signals appear in
the reconstruction of brain stimuli \citep{winkler2005don},
single-molecule analysis \citep{joo2008advances,loeff2021autostepfinder},
cellular ion channel functionalities \citep{hotz2013idealizing},
the rotations of the bacterial flagellar motor
\citep{sowa2005direct, nord2017catch} or medication use research \citep{wagner2002segmented}.
Similarly we find higher order piecewise polynomial functions in
fuel consumption estimation in automotive engineering \citep{paper:energy-management-for-the-electric-powernet}, the
modeling of human learning in quantitative psychology \citep{paper:piecewise-power-laws-in-individual-learning-curves},
and
studies of animal ecology based on biotelemetry \citep{paper:using-piecewise-regression-to-identify-biological-phenomena}.

The goal is to find a functional description of the underlying piecewise signal. This leads to a piecewise regression problem (also known as segmented regression).
Popular approaches for this task are \textit{partition-penalized models};
they are based on a cost functional of the competing objectives of data fidelity cost and parsimony of the segments, weighted by a parameter $\gamma > 0$ that represents their relative importance:
\begin{align}\label{main:eq:partition_penalized_model}
  \min_{P \text{ partition on } \{1,\ldots,n\}} \  \gamma \cdot \# P + \sum_{I \in P} \min_{\omega \in \Omega}d_I( \omega, y).
\end{align}
Here $d_I(\omega, y)$ measures the goodness of fit the model function $\omega$ to the data on the interval $I$;
for example $d_I(\omega, y) =  \sum_{i \in I} (\omega(t_i)-  y_i)^2,$
$P$ is a partition into discrete intervals $I$ of the time indices $\{1, \ldots, n\}$
and $\# P $ denotes the number of segments in the partition.
Typical instances of \eqref{main:eq:partition_penalized_model} are piecewise polynomial models and piecewise smooth models with $\ell^p$ data fidelity \citep{winkler2002smoothers, friedrich2008complexity, little2011generalized,jackson2005algorithm,killick2012optimal,weinmann2014l1potts,  blake1987visual, mumford1989optimal, storath2019smoothing, storath2023smoothing, yu2022localising}. The approach has been generalized to manifold-valued data
\citep{weinmann2016mumford, storath2017jumppenalized} and indirectly measured data \citep{storath2014jump, weinmann2015iterative}.
An important condition on $d_I$ and $\Omega$ is that $\min_{\omega \in \Omega}d_I( \omega, y)$ can be computed within a reasonable time frame.
A typical application of the piecewise regression model \eqref{main:eq:partition_penalized_model} is changepoint detection
as the boundaries of an optimal partition can be interpreted as changepoints, see e.g. \cite{killick2012optimal, paper:ruptures}.

In applications, the  complexities of
the used models on each segements may vary from segment to segment.
A simple yet instructive kind of a toy example is the timeseries of the distance traveled by a vehicle which may be parked, drive at a constant speed, or accelerate.
A parsimonious model would use a constant model for the parked phase, a linear model for the driving phase, and a quadratic model for the acceleration phase.
Another class of examples are signals that
can be well described by a simple model until reaching a change point, after which a more complex model provides a better fit;
specific examples are accelerometer readings from a motorcycle before and during an accident \citep{silverman1984spline},
heart rates of brown bears after spotting a drone \citep{paper:using-piecewise-regression-to-identify-biological-phenomena},
and bolus arrival times in dynamic contrast-enhanced MRI \citep{cheong2003automatic,bendinger2019bolus}.
The Turing changepoint dataset \citep{paper:an-evaluation-of-change-point-detection-algorithms}
contains further examples where models with mixed complexities may provide a satisfactory fit,
for example  the total private construction spending in the U.S. and the concentration of carbon dioxide in the atmosphere.
Generally speaking, allowing mixed complexities is reasonable whenever we cannot assume a priori that the models on different segments have the same complexity.

A limitation of the penalty based on the number of segments $\# P$ as in \eqref{main:eq:partition_penalized_model} is that it does not take into account the complexity of the models on the segments $I$. Each new segment has the same
cost, independently of the degrees of freedom of the regression function on the segment.
As an example, consider a piecewise polynomial regression: If the regression function on each piece is polynomial up to degree 3, then a regression function typically has degree 3, even if a polynomial of lower degree gives an almost as good but \enquote{sparser} representation.
To address this, one may introduce an extra penalty on the degrees of freedom on the segments, but this comes at the cost of introducing a new hyperparameter which complicates  model selection.

\subsection{Proposed method and contributions}\label{main:sec:proposed_method}
In this work, we study a model that penalizes not the number of segments but instead the degrees of freedom (dof) of the regression function on those segments:
\begin{align}\label{main:eq:proposed_model_intro}
  \min_{\substack{P \text{ partition of } \{1,\ldots,n\} \\ \lambda \colon P \to \N}} \  \sum_{I \in P} \min_{\omega \in \Omega^{\leq \lambda_I}}d_I( \omega, y) + \gamma \lambda_I.
\end{align}
Here $\Omega^{\leq \nu}$ is a space of regression functions with at most $\nu \in \N$ degrees of freedom (for example a $\nu$ dimensional linear space), $\lambda$ a function assigning each segment $I$ of $P$ a number of degrees of freedom $\nu \leq \nu_{\mathrm{max}}$, and $\nu_{\mathrm{max}}$ is an upper bound on the degrees of freedom per discrete interval.
To fix the ideas, one may think of the $\Omega^{\leq \nu}$ as the space of polynomials of maximum degree $\nu -1.$
We refer to \eqref{main:eq:proposed_model_intro} as \textit{degrees-of-freedom penalized piecewise regression model (DofPPR).}
A particularly important instance of \eqref{main:eq:proposed_model_intro} is the case of (weighted) least square fitting:
\begin{align}\label{main:eq:proposed_model_intro_poly}
  \min_{\substack{P \text{ partition of } \{1,\ldots,n\} \\ \lambda \colon P \to \N}} \  \sum_{I \in P} \left[\min_{\omega \in \Omega^{\leq {\lambda_I}}} \sum_{i \in I} w_i | \omega(t_i) - y_i|_2^2 + \gamma {\lambda_I} \right],
\end{align}
where $w\in \R^n$ is a weight vector with positive entries.

A crucial point of \eqref{main:eq:proposed_model_intro} (and of \eqref{main:eq:proposed_model_intro_poly}) is that it involves a model selection procedure on each segment, and the problem can be regarded as minimizing the sum of the AIC scores over all segments.
In contrast to the partition penalized model \eqref{main:eq:partition_penalized_model},
the DofPPR model \eqref{main:eq:proposed_model_intro} takes into account the complexity
of an estimator on the intervals, and this allows more flexible adaption to signals with mixed complexities.
Figure \ref{main:fig:comparison_pcw_DofPPR} illustrates the difference of
the partition penalized model \eqref{main:eq:partition_penalized_model}
and the DofPPR model \eqref{main:eq:proposed_model_intro} for piecewise polynomial least squares regression.
(The example is designed to illustrate the difference between the models
clearly and isolatedly from other factors, and results on more diverse data will be shown later in the paper.)

\begin{figure}
  \centering
  \includegraphics[width=1\textwidth]{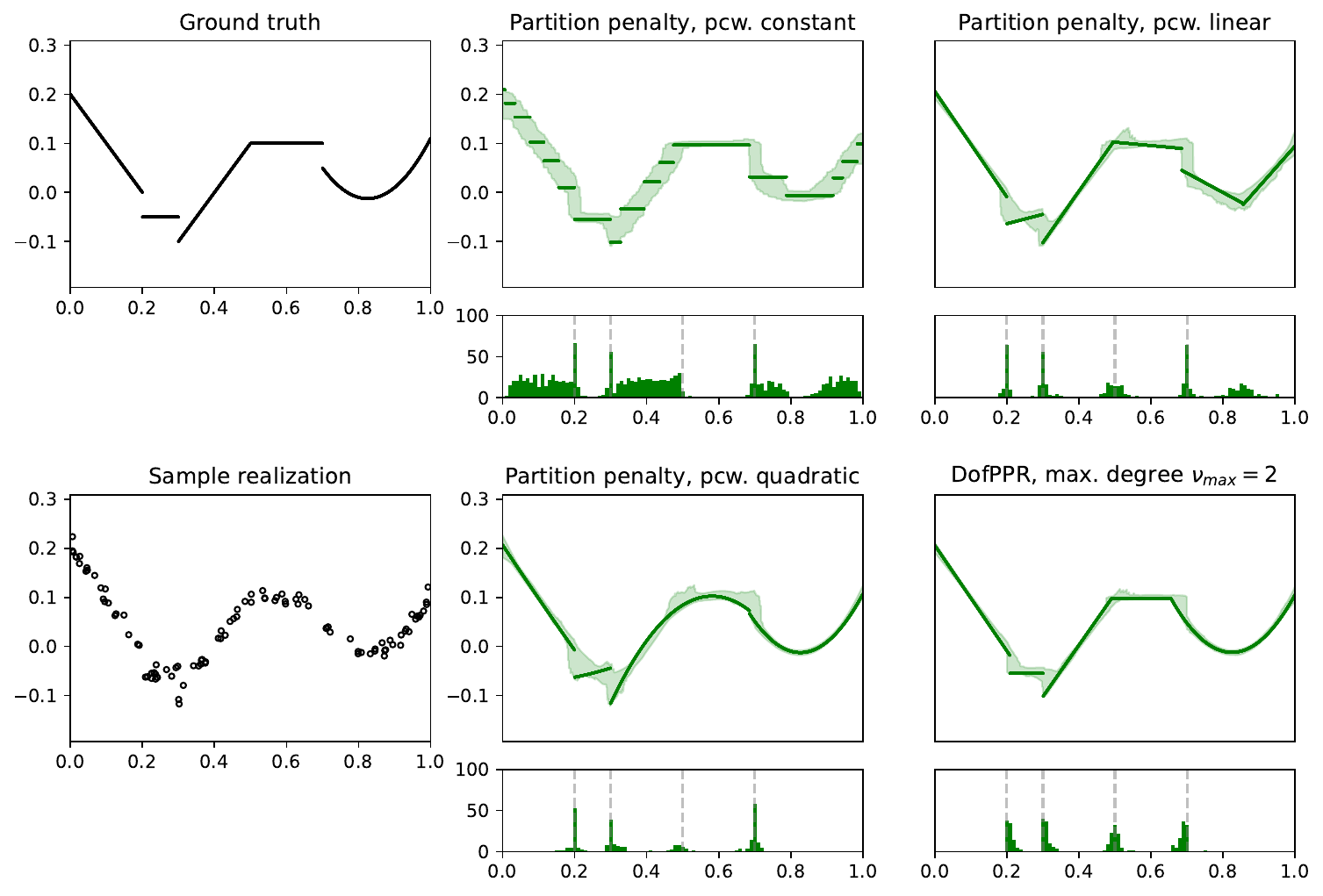}
  \caption{Comparison of the partition penalized model
    and the degrees-of-freedom penalized model for piecewise polynomial least squares regression: A realization (bottom left) consists of the piecewise polynomial ground truth signal (top left) sampled at 100 randomly (uniformly in $[0,1]$) chosen data sites, corrupted by i.i.d. Gaussian noise with standard deviation $\sigma = 0.01.$
    The top central, top right and bottom central tiles show the results of the partition penalized
    model \eqref{main:eq:partition_penalized_model} with  constant, linear and quadratic least squares polynomial fit, respectively.
    The solid lines represent the estimate of the depicted sample realization,
    and the shadings depict the $0.025$ to $0.975$  quantiles over 100 realizations.
    The last tile shows the result of the degrees of freedom penalized model (Equations \eqref{main:eq:proposed_model_intro}, \eqref{main:eq:proposed_model_intro_poly}),
    with least squares polynomial fit up to degree two.
    The taken breakpoints are depicted as histogram below along with the true breakpoints (dashed).
    The penalty $\gamma = 0.001$ was chosen for the DofPPR manually by visual inspection.
    The parameters for the other model were chosen such that the penalties on the segments
    match the respective penalty of the DofPPR model for constant, linear and quadratic polynomials, so $\gamma = 0.001, 0.002, 0.003$ for top central, top right, and bottom central, respectively.
  }
  \label{main:fig:comparison_pcw_DofPPR}
\end{figure}

Just as for the partition penalized models \eqref{main:eq:partition_penalized_model}, minimizers of the DofPPR model \eqref{main:eq:proposed_model_intro} may not be unique for general data fidelity terms.
For the least squares case \eqref{main:eq:proposed_model_intro_poly},
we show under certain mild assumptions on the basis function systems and up to excluding interpolating parts that the minimizing partition $P^*_\gamma$ is unique for almost all (w.r.t. Lebesgue measure) input data $y \in \R^n$ (Theorem~\ref{main:thm:UniquenessAEPart}).
The corresponding discrete function estimate $\hat y$ (which is the evaluation of the corresponding piecewise regression function at the discrete sample points $t_1, \ldots, t_n$) is shown to be unique even without these assumptions (Theorem~\ref{main:thm:uniquessFunction}).

For a fixed $\gamma$-parameter, problem \eqref{main:eq:proposed_model_intro} can be solved by adapting
standard dynamic programming methods \citep{winkler2002smoothers, friedrich2008complexity,killick2012optimal}.
A main contribution of this work is a new fast algorithm that provides the full regularization path for \eqref{main:eq:proposed_model_intro}; i.e. it provides the mapping $\gamma \mapsto (P^*_\gamma, \lambda^*_\gamma)$ for all $\gamma \geq 0.$
(In case of non-unique minimizers,  $(P^*_\gamma, \lambda^*_\gamma)$ denotes a distinguished minimizer.) The result is exact up to usual numerical errors.
In Theorem \ref{main:thm:complexity} we show that, under the assumption that a model of complexity $m$ can be fit to data of length $n$ in $\O(mn)$ time, the complete algorithm's time complexity is no worse than $\O(n^3m^2),$
where $n$ is the length of the input timeseries and $m$ is an upper bound on the (local) model complexity -- \ie only considering $\lambda$ such that $\lambda_I \leq m$ for all $I$. In particular for the case of least squares polynomial models of degrees no larger than $m$ for real data, we show that the presented algorithm is in $\O(n^3m)$.

Selecting the hyperparameter $\gamma$ is a common practical issue,
as the value of the parameter is difficult to interpret.
For the partition-penalized model
a series of criteria have been proposed, for example information based criteria \citep{zhang2007modified, yao1988estimating},  an interval criterion \citep{winkler2005don}, Bayesian methods \citep{frecon2017bayesian}, and cross validation
\citep{segmentation-of-the-mean-of-heteroscedastic-data-via-cv}.
We adopt the latter approach, and -- as we deal with time series -- use \textit{rolling cross validation.}
To obtain more parsimonious results it is common to apply the \emph{one standard error rule (OSE)} to cross-validated models rather than directly choosing the model with the strictly lowest score \citep{book:elements-of-statistical-learning}.
Remarkably,  we are able to compute a globally optimal hyperparameter with respect to rolling cross validation and the one standard error rule exactly without noteworthy additional extra computational effort.
The key observation is that $\gamma \mapsto \CV(\gamma)$ is a piecewise constant function, and that the proposed algorithm
admits its exact computation for all $\gamma \geq 0.$

We provide a Python/Rust implementation on Github at \url{https://github.com/SV-97/pcw-regrs}, featuring a fully implemented piecewise polynomial weighted least squares method. The code can be easily modified to accommodate other model spaces and cost functions.
One key benefit of our method is that it does not require additional tuning, such as adjusting model parameters or hyperparameters for an optimizer. However, it is easy to add optional parameters so experts can use their knowledge or impose restrictions on the models. This makes our method useful for both experts and non-experts. Here, we implemented the maximum number of degrees of freedom $\nu_{\mathrm{total}} \in \{ 1, \ldots, N\}$ as additional optional hyperparameter.

Potential applications include dimensionality reduction of time series data and serving as a foundation for a change point detector in exploratory data analysis.
A simulation study using piecewise polynomials of mixed degrees shows the advantage of the proposed DofPPR with OSE parameter selection over partition penalized models. Additionally, we highlight the method's effectiveness in exploratory data analysis for real-world time series data.
A simple variant of our method that uses the same constraint on  the number of changepoints as the current state-of-the-art method gives  state-of-the-art results on the Turing benchmark for unsupervised changepoint detection \citep{paper:an-evaluation-of-change-point-detection-algorithms}.

\subsection{Prior and related work}\label{main:ssec:relatedWork}
To the best of our knowledge, the piecewise regression model, as defined in Equation \eqref{main:eq:proposed_model_intro}, has not been previously explored, and as a result, its properties and associated algorithms have not been examined in prior research. However, there are various approaches that are closely related.

Piecewise regression models have a long history.
Early works employ a hard constraint on the maximum number of segments  \citep{bruce1965optimum,bellman1969curve, auger1989algorithms}.
Partition-penalized (or jump-penalized) piecewise regression
with a linear penalty on the number of segments as in \eqref{main:eq:partition_penalized_model} were studied
in various forms with different approximation spaces
and different cost functions.
The piecewise constant least squares regression case seems to be best understood:  \cite{wittich2008complexity} has proven uniqueness results and
\cite{boysen2009consistencies} obtained several consistency results.
Piecewise constant regression with  robust $\ell^1$ cost function have been
studied by \cite{friedrich2008complexity}, \cite{weinmann2014l1potts}, and \cite{storath2017jumppenalized}.
To account for variations on the segments, several approaches have been proposed.
\cite{mumford1985boundary} and  \cite{blake1987visual} studied first and second order splines, respectively, where  \cite{blake1987visual} obtained results on the jump localization and on robustness to noise.
Piecewise polynomial model have been studied by \cite{storath2019smoothing} and \cite{baby2020adaptive}.
\cite{zheng2022consistency} have shown consistency for
models with linear segment penalty
as presented in \eqref{main:eq:partition_penalized_model}.
For piecewise  polynomial regression,
\cite{yu2022localising} provide upper bounds on the localisation error
and derive global information-theoretic lower bounds.
\cite{romano2022detecting} use random walks
and autocorrelated noise via an AR(1) process
which leads to an energy functional combining Mumford-Shah/Blake-Zisserman type  penalties
and autoregressive terms.
\cite{hallgren2022changepoint} model the variations within a segment using an
$m$-dependent model where the $m$ is unknown.

Regarding parameter selection, some references have been given earlier in Section \ref{main:sec:proposed_method}.
Among the common approaches are information criteria, such as AIC, BIC, mBIC, MDL \citep{shi2022comparison}, and different kinds of cross validation \citep{arlot2011segmentation, storath2023smoothing}.
\cite{pein2021cross} discuss pitfalls and solutions when using an interleaved cross validation procedure.

Fast algorithms for piecewise regression models have been studied in a series of works. \cite{bellman1969curve} proposed a dynamic programming algorithm
for piecewise linear estimation. \cite{auger1989algorithms} proposed a dynamic programming algorithm for the fixed number of segments in $O(n^2 T(n))$ where $T(n)$ is the time needed for fitting a model to a segment.
An $O(n^2)$  algorithm for complexity penalized  least squares regression with piecewise constant model functions was proposed in \cite{winkler2002smoothers}.
Further algorithms for the piecewise constant regresssion problems were proposed by \cite{jackson2005algorithm}, \cite{little2011generalized2}, \cite{little2011generalized}, \cite{johnson2013dynamic}.
The PELT method \citep{killick2012optimal} uses a pruning strategy to improved the time complexity to $O(n)$ if the number of discontinuities grows sufficiently fast with the length of the signal.
A related pruning strategy with similar observations has been proposed by \cite{storath2014fast}.
For the first order spline model, an algorithm of cubic worst-case complexity  has been proposed by \cite{blake1989comparison}, and been improved to quadratic complexity by \cite{hohm2015algorithmic}. For the second order spline model, algorithms of quadratic complexity have been proposed by \cite{storath2019smoothing} and \cite{storath2023smoothing}, in a discrete and the continuous setting, respectively.
\cite{lindelov2020mcp} proposed a Bayesian piecewise regression approach.
Several papers consider piecewise regression with continuity contraints
such as change in slope problems \citep{fearnhead2019detecting,runge2020change,fearnhead2024cpop}.
Further variants include $\ell^1$ data terms \citep{friedrich2008complexity, weinmann2014l1potts}, manifold-valued data terms \citep{weinmann2016mumford, storath2017jumppenalized}
and an inverse problem setup
\citep{storath2014jump, weinmann2015iterative}.

Regularization paths of partition penalized  models have been investigated  by \cite{winkler2002smoothers} for the piecewise constant least squares model. They have shown that the  parameter space
for $\gamma$ is partitioned into finitely many intervals which give identical solutions.
Regularization paths of further piecewise  models have been given by \cite{friedrich2008complexity},  for the piecewise constant least absolute deviation model by \cite{storath2017jumppenalized}, and in a changepoint detection context by \cite{computationally-efficient-changepoint-detection}.
Further related are works on the regularization paths of the lasso  \citep{friedman2010regularization}.

Piecewise regression is closely related to changepoint detection.
Detected changepoints can define the pieces for piecewise regression,
and conversely, the segment boundaries of piecewise regression can be considered as changepoints.
Besides the discussed piecewise regression methods, there are a series of other changepoint estimation methods.  CUSUM-based methods \citep{page1954continuous, romano2023fast}, Bayesian changepoint inference \cite{fearnhead2006exact}, methods based on binary segmentation \citep{fryzlewicz2014wild, kovacs2023seeded}, a narrowest over the threshold method,  \citep{baranowski2019narrowest}, or a
Bayesian ensemble approach  \citep{zhao2019detecting}, nonparametric maximum likelihood approaches \citep{zou2014nonparametric, haynes2017computationally}, multiscale testing \citep{frick2014multiscale}, random forests \citep{londschien2023random},  to mention only a few.
We refer to \cite{paper:an-evaluation-of-change-point-detection-algorithms} and the references therein for a more detailed overview and a comparison over selected changepoint detection methods.
We note that while both concepts deal with segmenting data, changepoint detection emphasizes locating points of change, while piecewise regression focuses on fitting separate models to each segment. In particular, changepoint detection methods that are not of piecewise regression type do not necessarily come with an associated regression function on the segments.

\subsection{Preliminaries and notation}
We call an (ordered) set of the form $I = \{ l, l+1 , \ldots, r\}$ a discrete interval,
and we abbreviate it by $l:r .$
Throughout the paper we assume that
the elements of a partition on $\{1, \ldots, n\}$ are discrete intervals.
We occasionally use the notation $y_I = (y_i)_{i \in I}$ for extracting a subvector with indices $I.$
We define the set of ordered partitions of a set $A$ to be the set of all partitions of $A$, such that for any two elements $I,I'$ of a partition we have either $a < a'$ or $a' < a$ uniformly for all $a \in I, a' \in I'$. We denote this set by $\OPart(A)$ and frequently identify its elements and the elements of its elements with the correspondingly ordered tuples. For example $((1), (2), (3,4), (5,6)) = (1:1, 2:2, 3:4, 5:6)$ is an ordered partition of $1:6$.

For any ordered partition $P$ of a discrete interval let $\Lambda(P) := \{ (\lambda_I)_{I \in P} \colon \lambda_I \in 1: \ElemCount{I} \}$ be the space of valid sequences of degrees of freedom for $P$. We usually denote elements of this set by $\lambda$. For any $I \in P$ the value $\lambda_I$ is a \emph{degree of freedom}.

$\bigsqcup_{i \in I} A_i$ denotes the disjoint union of a family of sets $\{A_i\}_{i \in I}$ indexed by a set $I$. Any elements of $\bigsqcup_{i \in I} A_i$ is naturally in bijection to a pair $(i, a)$ such that $i \in I, a \in A_i$.

Throughout this paper, we assume that the data sites satisfy
$t_1 < t_2 < \ldots < t_n.$
If the data does not satisfy this constraint,
we may merge data sites into a single data point by weighted averaging over the $y$-values
of coinciding $t$-values; see e.g. \cite{hutchinson1986algorithm}.

An illustration of the notation is given in Figure~\ref{main:fig:overview}.
\begin{figure}
  \centering
  \begin{tikzpicture}
    \node[inner sep=0, anchor=south west] (image1) {
      \includegraphics[width=\textwidth]{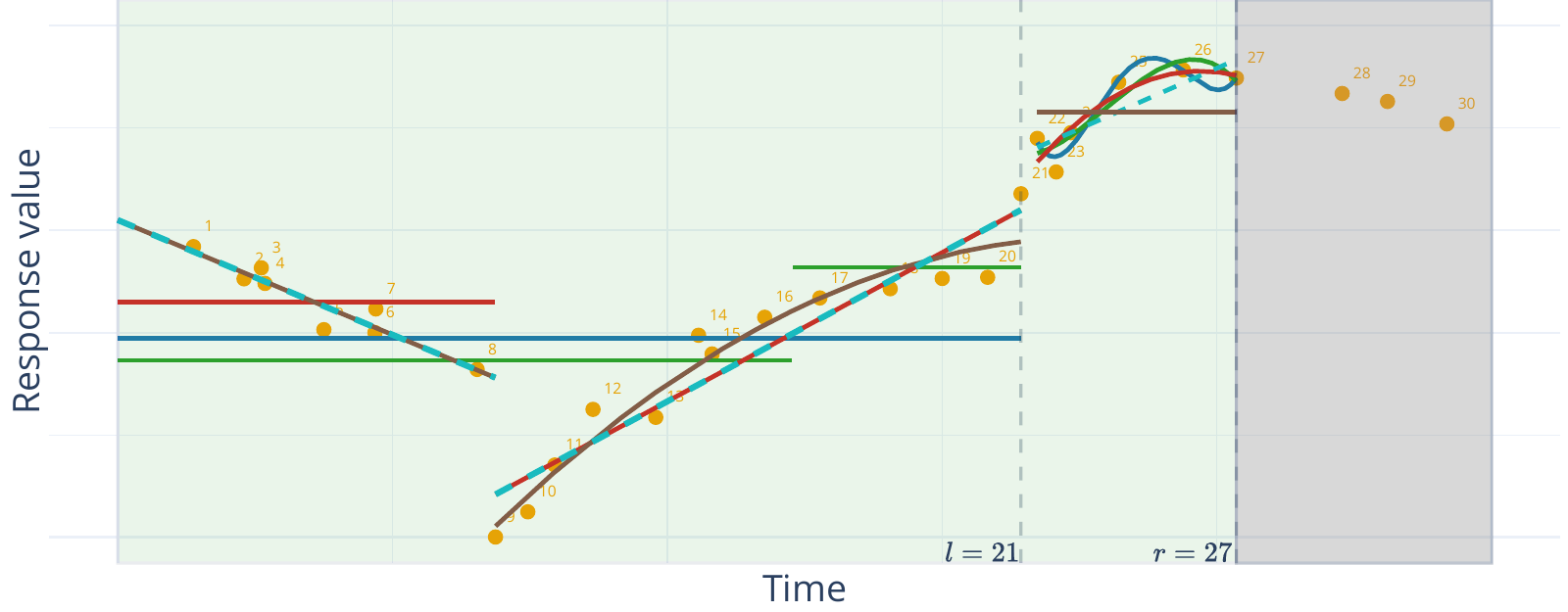}
    };

    \pgfmathsetmacro{\xStart}{0.07}
    \pgfmathsetmacro{\xL}{-0.34}
    \pgfmathsetmacro{\xR}{-0.212}

    \draw[decorate, decoration={brace,amplitude=10pt}, thick]
    ([xshift=\xStart\textwidth]image1.north west)
    -- ([xshift=\xR\textwidth]image1.north east)
    node[midway, above=10pt] {Active segment $I=1:27$};
    \draw[decorate, decoration={brace, mirror, amplitude=10pt}, thick]
    ([xshift=\xL\textwidth]image1.south east)
    -- ([xshift=\xR\textwidth]image1.south east)
    node[midway, below=10pt] {Last segment of partition};

    \draw[decorate, decoration={brace, mirror, amplitude=10pt}, thick]
    ([xshift=\xL\textwidth]image1.south east)
    -- ([xshift=\xR\textwidth]image1.south east)
    node[midway, below=10pt] {Last segment of partition};
  \end{tikzpicture}
  \caption{
    Illustration of notation used in this paper: A timeseries $(t_i, y_i),$ with $i=1,...,30,$ is represented by orange dots.
    $I = 1:27$ represents the first $27$ indices and $y_I$ the corresponding part of the signal.
    The plot shows five different piecewise polynomial functions for the data on $I$
    which all have exactly six degrees of freedom (dofs).
    We first look at the brown line. Its ordered partition is given by $P = (1:8,~9:21,~22:27),$ and the associated dof sequence $\lambda_{\text{brown}}$ is given by $(2, 3, 1);$
    its entries sum up to $\nu = 2+ 3 + 1 = 6.$
    The brown line represents the least squares fit to the data on the three segments
    using polynomial functions with the indicated dofs, so linear, quadratic, and constant.
    The cyan and the red candidate have the same partition, the same degrees of freedom ($\nu = 6$),
    but the different dof sequences $\lambda_{\text{cyan}} = (2, 2, 2)$
    and $\lambda_{\text{red}} = (1, 2, 3),$ respectively.
    (In Section~\ref{main:sec:fast_algorithm} we use the notation $\Lambda^6(P)$ for all valid dof sequences for the
    partition $P$ that sum up to $6$.)
    The green and blue candidates have different partitions, namely $(1:16,17:21,22:27)$ and
    $(1:21,22:27),$ and the dof sequences $\lambda_{\text{green}} = (1, 1, 4),$ and $\lambda_{\text{blue}} = (1, 5),$ respectively.
    The shown candidates share the last segment but
    \enquote{spend} between one and five dofs  on that segment,
    so that  only five to one dofs remain for the remaining data with indices 1 to 21.
    A crucial part of the proposed method is that it computes the solution of the model \eqref{main:eq:proposed_model_intro_poly} for all regularization parameters $\gamma \geq 0$ and for all \enquote{active} segments $I = 1:r,$ with $r =1, \ldots, n,$ efficiently.
  }
  \label{main:fig:overview}
\end{figure}

\subsection{Organization of the paper}
In Section \ref{main:ssec:uniqueness}, we
prove uniqueness of the minimizer in a least squares setting.
In Section \ref{main:sec:fast_algorithm}, we develop and discuss an algorithm that solves the DofPPR problem
and computes the regularization paths, and provide parameter selection strategies.
In Section \ref{main:sec:experiments},
we conduct numerical experiments with simulated and real data.
Section \ref{main:sec:conclusion} is devoted to discussion and conclusion.

Most of the proofs, auxiliary results and details on the implementation are provided in the supplementary material.

\section{Uniqueness results for
  degrees-of-freedom penalized piecewise least squares regression
}\label{main:ssec:uniqueness}

Let us first discuss what we generally may expect regarding the uniqueness of an estimate.
The ultimate goal is to obtain a piecewise regression function which is defined on the real line.
The models \eqref{main:eq:partition_penalized_model}, \eqref{main:eq:proposed_model_intro}
deal  with partitions on the discrete data sites $(t_i)_{i=1,\ldots, n}$ represented by their indices $1:n$.
A natural  real counterpart of a
discrete partition is a partition on the real line into intervals
such that the real segments contain exactly the data sites of the corresponding discrete setting.
Clearly, there are infinitely many
real partitions that fullfil this requirement.
As their segments contain the same data sites, all instances give the same functional value in \eqref{main:eq:partition_penalized_model} and \eqref{main:eq:proposed_model_intro}, and so they are indistinguishable w.r.t. to these models.
Thus, we consider two real partitions as equivalent,
if all their respective segments contain the same data sites.
A natural representative of the equivalence class
is that partition which has its breakpoints at the midpoints between the data sites.
In the following, we use this correspondence,
and the following uniqueness results for partitions always refer to uniqueness of the discrete partitions.

Let us also briefly discuss the relation
of an optimal solution $(P^*_\gamma, \lambda^*_\gamma)$ of \eqref{main:eq:proposed_model_intro}, and a corresponding regression function.
If the
estimated functions $\phi^{\lambda_I} = \argmin_{\omega \in \Omega^{\leq {\lambda_I}}}d_I( \omega, y)$ are unique for all elements of $(P^*_\gamma, \lambda^*_\gamma),$
then that solution and the estimated functions define a piecewise function, denoted by $\omega_{\gamma}^*,$
which is unique up to shifts of the break locations between the sampling points corresponding to the borders of two adjacent segments of the partition.
The corresponding evaluations at the discrete data sites, denoted by $\hat y$ and given by $\hat y_i = \omega_\gamma^*(t_i)$ for all $i \in \{1, \ldots, n\},$
are well-defined and independent of these possible shifts in the breakpoints.

As for the partition-penalized model \eqref{main:eq:partition_penalized_model},
the minimizing (discrete) partitions of the DofPPR model \eqref{main:eq:proposed_model_intro}
are not unique for all combinations of input signals and model parameters.
This can be seen in the following example:

\begin{example}\label{ex:non-uniqueness}
  Consider the time series with sample times $x = [0,1,2]$ and values $y = [0,1,0]$ and least squares polynomial fitting functions. All of the four possible models with three degrees of freedom have vanishing energy: $P=(0:2), \lambda=(3); P=(0:1, 2), \lambda=(2,1); P=(0, 1:2), \lambda=(1,2); P=(0,1,2), \lambda=(1,1,1),$ where $P$ denotes the partition, and $\lambda$ the corresponding sequence of degrees of freedom.
\end{example}
We describe how to resolve ambiguous situations in the estimator in general in the algorithmic part in Section~\ref{main:sec:fast_algorithm}.
For the particularly important  least squares setting \eqref{main:eq:proposed_model_intro_poly},
we are going to show uniqueness results for almost all input data.
To this end, for fixed partitioning $P$ and fixed degree sequence $\lambda = (\lambda_I)_{I \in P},$
we denote the target function in  \eqref{main:eq:proposed_model_intro_poly} by
$G_{P, \lambda}(y)$ where
\begin{align*}
  G_{P, \lambda}(y)
  & = \sum_{I \in P} \left[ \min_{\omega \in \Omega^{\leq \lambda_I}} \sum_{i \in I}  | \omega(t_i) - y_i|_2^2 + \gamma \lambda_I \right]                        \\
  & = \sum_{I \in P} \left[ \min_{\beta \in \R^{{\svadd{\lambda_{I}}}}}\| A_{I, {\svadd{\lambda_{I}}}} \beta - y_I\|^2_2 + \gamma {\svadd{\lambda_{I}}} \right],
\end{align*}
and $A_{I, {\svadd{\lambda_{I}}}}$ denotes the design matrix on the segment $I.$
Using the symbol $\mathrm{id}$ to denote the identity matrix,
a standard computation yields
\begin{equation}\label{main:eq:QuadWithProjection}
  G_{P, \lambda}(y) = \sum_{I \in P} \left[ \| (\pi_{I, {\svadd{\lambda_{I}}}} - \mathrm{id}) y_I\|^2_{2} + \gamma {\svadd{\lambda_{I}}} \right]
\end{equation}
with the projection matrix (also called hat matrix)
\begin{equation}\label{main:eq:ProjectionOnInterval}
  \pi_{I, {\svadd{\lambda_{I}}}} = A_{I, {\svadd{\lambda_{I}}}}(A_{I, {\svadd{\lambda_{I}}}}^TA_{I, {\svadd{\lambda_{I}}}})^{-1} A_{I, {\svadd{\lambda_{I}}}}^T
\end{equation}
in case $A$ has full column rank.
In general, $\pi_{I, {\svadd{\lambda_{I}}}}  = A_{I, {\svadd{\lambda_{I}}}} M,$ where $M$ denotes the Moore-Penrose pseudoinverse of $A_{I, {\svadd{\lambda_{I}}}}.$
Given data $y,$ we  frequently use the hat notation
\begin{equation}\label{main:eq:DefHatAgain}
  \hat y = \hat y_{I, {\svadd{\lambda_{I}}}}  = \pi_{I, {\svadd{\lambda_{I}}}} y.
\end{equation}
(Note that although the corresponding least squares solutions $\beta^\ast$ may be non unique,  $A_{I, {\svadd{\lambda_{I}}}} \beta^\ast$ is unique
and thus $\hat y = \pi_{I, {\svadd{\lambda_{I}}}} y$ is well-defined.)
A particular case appears whenever $\pi_{I, {\svadd{\lambda_{I}}}}$ equals the identity: then the data on the subinterval $I$ remains unchanged,
and the corresponding fit is interpolatory.

Next, we represent, for fixed partitioning $P$ and fixed degree vector $\lambda = ({\svadd{\lambda_{I}}})_{I \in P},$ the solution operator $y \mapsto \argmin G_{P, \lambda}(y)$
for data $y.$
The representing matrix is given by
\begin{equation}\label{main:eq:MatrixRep}
  \bar\pi_{P,\lambda} =
  \begin{bmatrix}
    \pi_{I_1, {\svadd{\lambda_{I_1}}}} & 0                                  & \cdots & 0                                  \\
    0                                  & \pi_{I_2, {\svadd{\lambda_{I_2}}}} & \cdots & 0                                  \\
    \vdots                             & \ddots                             & \ddots & \vdots                             \\
    0                                  & \cdots                             & 0      & \pi_{I_k, {\svadd{\lambda_{I_k}}}}
  \end{bmatrix},
\end{equation}
and the solution operator of $y \mapsto \argmin G_{P, \lambda}(y)$ is given by
$y \mapsto \hat y = \hat y_{P,\lambda} =  \bar\pi_{P,\lambda}y. $

As a first step, we show the following lemma.
\begin{lemma}\label{main:lem:projections}
  Let $P, Q$ with $P \neq Q$ be two partitions, $\lambda = ({\svadd{\lambda_{I}}})_{I\in P},$
  $\mu = (\mu_J)_{J \in Q}.$
  Then either (i) the set
  \begin{equation}
    \{ y \in \R^n : G_{P, \lambda}(y) = G_{Q, \mu}(y)   \} \quad \text{has Lebesgue measure zero,}
  \end{equation}
  or (ii)
  \begin{equation}
    G_{P, \lambda} - G_{Q, \mu} = \alpha   \quad \text{for some constant $\alpha.$}
  \end{equation}
\end{lemma}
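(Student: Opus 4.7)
The plan is to rewrite $y \mapsto G_{P,\lambda}(y) - G_{Q,\mu}(y)$ as a polynomial of total degree at most two in $y$, and then to invoke the standard fact that a nonzero real polynomial on $\R^n$ has zero set of Lebesgue measure zero.

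First I would use the identity $\|(\pi - \mathrm{id})v\|_2^2 = v^T (\mathrm{id} - \pi) v$, valid whenever $\pi$ is an orthogonal projection because $\mathrm{id} - \pi$ is then itself an orthogonal, hence self-adjoint and idempotent, projection. Both the full-rank form \eqref{main:eq:ProjectionOnInterval} and its Moore--Penrose generalization produce such a projection onto $\mathrm{range}(A_{I,\lambda_I})$, so combining this identity with the block structure \eqref{main:eq:MatrixRep} rewrites \eqref{main:eq:QuadWithProjection} as
\[
  G_{P,\lambda}(y) = y^T (\mathrm{id} - \bar\pi_{P,\lambda}) y + \gamma |\lambda|,
  \qquad |\lambda| := \sum_{I \in P} \lambda_I,
\]
and symmetrically for $(Q, \mu)$. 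Subtracting gives
\[
  G_{P,\lambda}(y) - G_{Q,\mu}(y) = y^T D\, y + c,
\]
where the matrix $D := \bar\pi_{Q,\mu} - \bar\pi_{P,\lambda}$ is symmetric and $c := \gamma(|\lambda| - |\mu|)$ is a constant.

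Now the dichotomy falls out by a case split on whether $D$ vanishes. If $D = 0$, then the right-hand side is identically the constant $c$, which is exactly alternative (ii) with $\alpha = c$. If $D \neq 0$, then by polarization $y \mapsto y^T D y$ is not the zero function, so $y \mapsto y^T D y + c$ is a nonzero polynomial on $\R^n$; its zero set is a proper real algebraic subvariety of $\R^n$ and therefore has Lebesgue measure zero, giving alternative (i).

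The only substantive ingredient is the classical measure-zero claim for the zero set of a nonzero polynomial, which is proved by induction on the ambient dimension (or a short Fubini argument), so I do not anticipate a real obstacle. The main point requiring care is simply verifying that $\pi_{I,\lambda_I}$ is genuinely an orthogonal projection even when $A_{I,\lambda_I}$ is rank-deficient, so that the complement identity $\|(\pi - \mathrm{id})y\|_2^2 = y^T(\mathrm{id} - \pi)y$ applies uniformly across all terms; this in turn follows from the defining Moore--Penrose properties.
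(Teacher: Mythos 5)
Your proposal is correct and follows essentially the same route as the paper: both reduce the difference to the quadratic form $y \mapsto y^T(\bar\pi_{Q,\mu} - \bar\pi_{P,\lambda})y$ plus the constant $\gamma(|\lambda|-|\mu|)$ via \eqref{main:eq:QuadWithProjection}, and case-split on whether the two block projection matrices coincide, invoking the measure-zero property of the zero set of a nonconstant quadratic. Your write-up is in fact slightly more careful than the paper's on two points it leaves implicit — that $\mathrm{id}-\pi_{I,\lambda_I}$ is an orthogonal projection even in the rank-deficient Moore--Penrose case, and that a symmetric $D\neq 0$ forces $y^TDy$ to be a nonzero polynomial by polarization.
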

The proof is given in the supplementary material.

Using this lemma, we may derive uniqueness in the a.e. sense for the estimate of the regression function. Interestingly, this does not imply uniqueness a.e. w.r.t. the partitioning which we will discuss afterwards.

\begin{theorem}\label{main:thm:uniquessFunction}
  The minimizing (discrete) function $\hat y$ of Equation \eqref{main:eq:proposed_model_intro_poly}
  given by
  $\hat y_i = \omega_\gamma^*(t_i)$ for all $i \in \{1, \ldots, n\}$
  equals
  $\argmin_{P,\lambda} \hat y_{P,\lambda} = \argmin_{P,\lambda} \bar\pi_{P,\lambda}y$
  (given via \eqref{main:eq:MatrixRep})
  and is unique for almost all input data.
\end{theorem}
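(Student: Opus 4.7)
The plan is to leverage Lemma~\ref{main:lem:projections} together with the special structure of the operators $\bar\pi_{P,\lambda}$ (orthogonal projections) to pass from an almost-everywhere uniqueness statement about the functional $G_{P,\lambda}$ to an almost-everywhere uniqueness statement about the estimated vector $\hat y$. The search space is the \emph{finite} collection $\mathcal{S}$ of pairs $(P,\lambda)$ where $P \in \OPart(1{:}n)$ and $\lambda \in \Lambda(P)$, so there are only finitely many pairs to compare.

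First I would assemble the exceptional null set. Apply Lemma~\ref{main:lem:projections} (and the completely analogous argument for $P = Q$ with $\lambda \neq \mu$, in which $G_{P,\lambda} - G_{P,\mu}$ is again a quadratic polynomial on each segment's data) to every pair of distinct elements $(P,\lambda),(Q,\mu) \in \mathcal{S}$. Collect all pairs that fall into alternative~(i) of the lemma; the union $E$ of the corresponding measure-zero coincidence sets is a finite union of null sets, hence itself a Lebesgue null set. For the remaining pairs, which fall into alternative~(ii), one has $G_{P,\lambda} - G_{Q,\mu} \equiv \alpha$ for some constant $\alpha$; if $\alpha \neq 0$ the strictly larger one is never a minimizer, and if $\alpha = 0$ the two functionals are globally identical.

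Now fix $y \notin E$ and suppose $(P,\lambda)$ and $(Q,\mu)$ are both minimizers of $G_{\cdot}(y)$. Since their agreement at $y$ rules out case~(i), they must lie in case~(ii) with $\alpha = 0$, i.e., $G_{P,\lambda} \equiv G_{Q,\mu}$ on all of $\R^n$. Using the representation \eqref{main:eq:QuadWithProjection} together with \eqref{main:eq:MatrixRep}, this yields
\begin{equation*}
  \|(\mathrm{id} - \bar\pi_{P,\lambda})y\|_2^2 + \gamma \textstyle\sum_{I \in P} \lambda_I
  = \|(\mathrm{id} - \bar\pi_{Q,\mu})y\|_2^2 + \gamma \textstyle\sum_{J \in Q} \mu_J
  \quad \text{for all } y \in \R^n.
\end{equation*}
Matching the quadratic parts of these polynomials in $y$ gives equality of the Gram matrices $(\mathrm{id} - \bar\pi_{P,\lambda})^T(\mathrm{id} - \bar\pi_{P,\lambda}) = (\mathrm{id} - \bar\pi_{Q,\mu})^T(\mathrm{id} - \bar\pi_{Q,\mu})$. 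This is the step I expect to carry the main weight of the argument, and it is resolved cleanly by observing that each $\bar\pi_{P,\lambda}$ is an orthogonal projection (symmetric and idempotent, being block-diagonal with hat-matrix blocks \eqref{main:eq:ProjectionOnInterval}), so $\mathrm{id} - \bar\pi_{P,\lambda}$ is also a symmetric idempotent and hence equals its own square. The displayed Gram equation therefore collapses to $\mathrm{id} - \bar\pi_{P,\lambda} = \mathrm{id} - \bar\pi_{Q,\mu}$, i.e., $\bar\pi_{P,\lambda} = \bar\pi_{Q,\mu}$.

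Applying both sides to $y$ gives $\bar\pi_{P,\lambda}y = \bar\pi_{Q,\mu}y$, so any two minimizing models produce the same fitted vector; this is exactly the claim that $\hat y = \argmin_{P,\lambda} \bar\pi_{P,\lambda}y$ is well-defined and unique on $\R^n \setminus E$. The identification with $(\hat y_i)_{i=1}^n = (\omega_\gamma^*(t_i))_i$ then follows because $\bar\pi_{P,\lambda}y$ is precisely the vector of evaluations at the sample sites of the segmentwise least-squares fits, which are uniquely determined (even when the underlying coefficient vectors are not) as noted in the discussion following \eqref{main:eq:DefHatAgain}.
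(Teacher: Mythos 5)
Your proof is correct and follows essentially the same route as the paper: both arguments invoke Lemma~\ref{main:lem:projections} pairwise over the finitely many models $(P,\lambda)$, collect the measure-zero coincidence sets of case~(i) into a single null exceptional set, and conclude that off this set any two minimizers must have globally identical functionals and hence the same fitted vector. The one place you diverge is in closing that last step: the paper organizes models into equivalence classes $M_{(P,\lambda)}$ of equal projection matrices up front and compares only minimizing class representatives, whereas you derive \emph{a posteriori} that identical functionals force $\bar\pi_{P,\lambda}=\bar\pi_{Q,\mu}$ by matching quadratic parts and using that $\mathrm{id}-\bar\pi_{P,\lambda}$ is a symmetric idempotent, so the Gram matrix collapses to the operator itself. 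This is a clean, self-contained substitute that relies only on the statement of the lemma rather than on the internals of its proof; both versions are valid.
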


The proof is given in the supplementary material. (Please note that we have overloaded the hat notation several times; the precise implementation should be clear from the context.)

Unfortunately, we cannot get an as general uniqueness a.e. statement concerning the partitions as the following simple example shows.
\begin{example}\label{ex:NonUniquePartsAlthougAE}
  As a system of basis functions, we take the monomials $b_j = t^{j-1}.$
  Consider data sites $t=0,1,2.$
  As data $y_{a,b,c},$ we consider the polynomial   $t \to a  t^2 + b t + c$ $(a,b,c \in \mathbb R)$ sampled at the data sites.
  We choose the regularization parameter $\gamma = 1.$
  Given $b,c,$ we find $a_0(b,c)$ such that the optimal solution $\hat y$
  equals the data  $y_{a,b,c}$ for all $a >a_0(b,c).$
  The set of all $a,b,c$ such that  $a >a_0(b,c)$ is a non-zero set.
  Unfortunately, also partitioning the data sites into $(0)$ and $(1,2)$ and choosing interpolating constant and linear polynomials
  yields another optimal partitioning on a nonzero set (with however the same function values.)
\end{example}

Despite Example~\ref{ex:NonUniquePartsAlthougAE}, under certain mild assumption on the basis function systems (specified below) and up to excluding interpolating parts (as detailed below)
we are going to show uniqueness results for the partitionings for almost all input data in the following.

\begin{assumption}\label{main:assum:Basis}
  Concerning the basis functions $(b_j)_{j \in \mathbb N}$ (cf. \eqref{main:eq:proposed_model_intro_poly}) we assume that
  \begin{enumerate}[(i) ]
    \item  the design matrices $A$ given by
      $A_{ij} = (b_j(t_{i}))$ with $b_j, 1\leq j\leq n,$  and $t_i, 1\leq i\leq m,$ have maximal rank for any $n,m$  with $n\leq m,$
    \item the corresponding projections  $\pi_{I, {\svadd{\lambda_{I}}}} = A_{I, {\svadd{\lambda_{I}}}}(A_{I, {\svadd{\lambda_{I}}}}^TA_{I, {\svadd{\lambda_{I}}}})^{-1} A_{I, {\svadd{\lambda_{I}}}}^T$ given in \eqref{main:eq:ProjectionOnInterval}  corresponding to $A$ are atomic for any $n\leq m,$ in the sense that $\pi_{I, {\svadd{\lambda_{I}}}}$ cannot be further decomposed into two or more block diagonal matrices.
  \end{enumerate}
\end{assumption}

We see that this assumption is fulfilled by the class of polynomials.
\begin{lemma}\label{main:lem:PolynomialsFulfillAss}
  Consider polynomial design matrices  $A \in \R^{n \times m}$  with $m< n$,
  where each polynomial $b_j,$ ${j \in \mathbb N}$, is of degree $j-1,$
  e.g. $A_{ij} = t_i^{j-1},$
  then $A$ fulfills Assumption~\ref{main:assum:Basis}.
\end{lemma}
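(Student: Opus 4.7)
The backbone of the proof is the classical fact that a nonzero polynomial of degree at most $m-1$ has at most $m-1$ distinct roots. I write $A \in \R^{n \times m}$ with $A_{ij} = t_i^{j-1}$, $m < n$, and pairwise distinct nodes $t_1 < \cdots < t_n$. For part (i), if $A\beta = 0$ then the polynomial $p(t) = \sum_{j=1}^m \beta_j t^{j-1}$ has degree at most $m-1$ and vanishes at the $n \geq m$ distinct $t_i$, forcing $p \equiv 0$ and hence $\beta = 0$; this gives maximal column rank $m$.

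For part (ii), I argue by contradiction: suppose $\pi = A(A^TA)^{-1}A^T$ admits a block-diagonal decomposition $\pi = \pi_1 \oplus \cdots \oplus \pi_k$ with $k \geq 2$ corresponding to a partition $\{1,\ldots,n\} = S_1 \sqcup \cdots \sqcup S_k$ into nonempty blocks. Since $\pi$ is an orthogonal projection, each $\pi_l$ is again an orthogonal projection on $\R^{S_l}$, and the column space of $A$ decomposes as $\operatorname{range}(A) = V_1 \oplus \cdots \oplus V_k$ with $V_l := \operatorname{range}(A) \cap \R^{S_l}$, where $\R^{S_l}$ is embedded in $\R^n$ by zero-padding. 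The identification $V_l = \operatorname{range}(A) \cap \R^{S_l}$ uses that any $v \in \operatorname{range}(A)$ supported on $S_l$ satisfies $v = \pi v = \pi_l(v|_{S_l})$, placing $v$ in the range of $\pi_l$.

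The central step is to compute $\dim V_l$ by polynomial factorization. A vector $v$ lies in $V_l$ iff $v_i = p(t_i)$ for a polynomial $p$ of degree $< m$ that vanishes at all $t_i$ with $i \notin S_l$; hence $p$ is divisible by $q_l(t) := \prod_{i \notin S_l}(t - t_i)$, a polynomial of degree $n - |S_l|$. Writing $p = q_l \cdot r$ with $\deg r < m - n + |S_l|$ yields $\dim V_l = \max(0,\, m - n + |S_l|)$. Summing and using part (i), $m = \sum_l \dim V_l$. Letting $L := \{l : |S_l| > n - m\}$, this reduces to $\sum_{l \in L}|S_l| = m + |L|(n - m)$; combined with $\sum_{l \in L}|S_l| \leq n$ and the strict inequality $n - m \geq 1$, this forces $|L| \leq 1$. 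The case $|L| = 0$ gives $m = 0$, while $|L| = 1$ gives $|S_l| = n$ for some $l$; both contradict the assumed nontrivial partition, completing the proof.

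\textbf{Expected main obstacle.} The principal content is the dimension formula $\dim V_l = \max(0, m - n + |S_l|)$, which converts the algebraic decomposition into a clean combinatorial constraint via polynomial division. Everything else is careful bookkeeping, though the equivalence between block-diagonality of $\pi$ and the orthogonal-direct-sum decomposition of $\operatorname{range}(A)$, used to identify each $V_l$ as $\operatorname{range}(A) \cap \R^{S_l}$, is worth spelling out, as it hinges essentially on $\pi$ being an orthogonal projection. I also note that the strict inequality $m < n$ assumed in the lemma is indispensable: when $m = n$, $\pi$ is the identity matrix, which decomposes into $1 \times 1$ blocks, so atomicity genuinely fails in that borderline case.
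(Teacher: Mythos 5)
Your proof is correct, and part (ii) takes a genuinely different route from the paper's. The paper argues via reproduction properties of the hat operator: it splits into two cases according to whether some block has at least $\lambda_I$ sample points (then the global degree-$<\lambda_I$ fit must vanish identically by root counting, contradicting reproduction of constants on the complementary sub-block) or all blocks are smaller (then local interpolation plus global polynomial reproduction forces every block, hence all of $\pi$, to be the identity, contradicting $m<n$). You instead exploit that a block-diagonal orthogonal projection splits $\operatorname{range}(A)$ as $\bigoplus_l V_l$ with $V_l=\operatorname{range}(A)\cap\R^{S_l}$, compute $\dim V_l=\max(0,\,m-n+|S_l|)$ via the factor theorem, and close with the combinatorial identity $\sum_{l\in L}|S_l|=m+|L|(n-m)$. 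Both arguments ultimately rest on root counting for degree-$<m$ polynomials, but yours is more quantitative (it yields the exact dimension of each putative block's contribution), handles arbitrary index partitions rather than only contiguous blocks, and makes the role of the hypothesis $m<n$ fully explicit, as you note at the end; the paper's argument is softer and avoids the direct-sum bookkeeping, needing only that the fit reproduces constants and low-degree polynomials. The supporting claims you flag as needing care --- that each diagonal block of a symmetric idempotent block-diagonal matrix is itself an orthogonal projection, and that $\operatorname{range}(A)\cap\R^{S_l}$ coincides with the zero-padded range of $\pi_l$ --- are exactly right and are verified by the computation $v=\pi v$ for $v$ supported on $S_l$ that you sketch.
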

The proof is given in the supplementary material.

\begin{remark}
  Note that basis function systems not fulfilling (ii) of  Assumption~\ref{main:assum:Basis} are in a sense redundant in our setup.
  In fact, the corresponding projection $\pi_{I, {\svadd{\lambda_{I}}}}$ acts independently on the subintervals of $I$ corresponding to the blocks,
  and thus introduces a second notion of ``change point'' within the interval. This behavior may often be unwanted or undesirable when modeling.
\end{remark}

We need the specific notion of {\em standard block decompositions} of the representing matrices $\bar\pi_{P,\lambda}.$
We call a block decomposed matrix $B=\mathrm{diag} (B_i)$  (not necessarily one of the form \eqref{main:eq:MatrixRep}) standard block decomposed, if each $B_i$ is an atomic projection in the sense above.
(Note that if $B_i$ is one-dimensional the corresponding entry of the matrix equals $1$.)
Obviously, each matrix $\bar\pi_{P,\lambda}$ possesses a standard block decomposition
(which however need not equal the form \eqref{main:eq:MatrixRep}).
Further, each standard block decomposition induces a partitioning $P$ with intervals $I$ corresponding to the blocks of the decomposition.

We next relate $\bar\pi_{P,\lambda}$ with a standard decomposition and a corresponding partitioning $\tilde P$ (plus degree vector $\tilde \lambda$): (i) If $\pi_{I, {\svadd{\lambda_{I}}}}$ is non-interpolatory,  i.e., $\pi_{I, {\svadd{\lambda_{I}}}}$ does not equal the identity, we leave the block $\pi_{I, {\svadd{\lambda_{I}}}}$ unchanged, and include $I$ into the partitioning $\tilde P;$
we let the corresponding degree $\tilde {\svadd{\lambda_{I}}} = {\svadd{\lambda_{I}}}.$  (ii) If $\pi_{I, {\svadd{\lambda_{I}}}}$ is interpolatory,  i.e., $\pi_{I, {\svadd{\lambda_{I}}}}$ equals the identity, we replace the block by $1\times 1$ blocks (with entry $1$.)
Then, the standard block decomposition $\bar\pi_{P,\lambda}$ is of the form
\begin{equation}\label{main:eq:MatrixStandardDec}
  \bar\pi_{P,\lambda} =
  \begin{bmatrix}
    p_1    & 0      & \cdots & 0      \\
    0      & p_2    & \cdots & 0      \\
    \vdots & \ddots & \ddots & \vdots \\
    0      & \cdots & 0      & p_m
  \end{bmatrix}.
\end{equation}
where $p_i$ is either a $1 \times 1$-matrix with entry $1$ or equals one of the diagonal blocks $\pi_{I_i, {\svadd{\lambda_{I_i}}}}$ of \eqref{main:eq:MatrixRep}
(at the same position in the matrix $\bar\pi_{P,\lambda}.$)
This decomposition induces a partitioning $\tilde P$ consisting of the intervals $I$ of the following form: (i) either $I$ is an interval of the partitioning $J$ on which $\pi_{I, {\svadd{\lambda_{I}}}}$ is non-interpolatory, i.e., $\pi_{I, {\svadd{\lambda_{I}}}}$ does not equal the identity or (ii) $I$ is of length $1.$
A corresponding degree vector $\tilde \lambda$ is obtained by letting
$\tilde \lambda_I = \lambda_I$ in case of (i), and $\tilde \lambda_I = 1$ in case of (ii) respectively.

As illustration, we consider as a toy example, the representing matrix $\bar\pi_{P,\lambda}$
given by
\begin{equation}\label{main:eq:MatrixStandardDecExample}
  \bar\pi_{P,\lambda} =
  \begin{bmatrix}
    q & 0 & 0 & 0 \\
    0 & 1 & 0 & 0 \\
    0 & 0 & 1 & 0 \\
    0 & 0 & 0 & r
  \end{bmatrix}.
\end{equation}
where $q,r$ are the hat matrices of, say, polynomial interpolation on a equidistant grid with polynomial degree $2$ and four sample points for $q$ and five sample points for $r,$ respectively.
(The zeros correspond to zero block matrices of the respective sizes.)
Then there are four intervals of the corresponding partitioning of lengths $5,1,1,4,$
and the corresponding blocks of the standard decomposition are $p_1=q,$ $p_2=p_3=1$ (the $1 \times 1$ matrix with entry $1$) as well as $p_4=r.$

We formulate the following immediate observations as a lemma.
Loosely speaking, it asserts that we may restrict to consider partitions corresponding to standard block decompositions.

\begin{lemma}\label{main:lem:standardDecompositionExUn}
  Assume that Assumption~\ref{main:assum:Basis} is fulfilled.
  Consider a partitioning $P$ and degree vector $\lambda = ({\svadd{\lambda_{I}}})_{I \in P},$ with corresponding solution operator
  $\bar\pi_{P,\lambda}$ given via   \eqref{main:eq:MatrixRep}.
  Then, \emph{(i)} the matrices
  \begin{equation}\label{main:eq:equalMatrixf}
    \bar\pi_{P,\lambda}  = \bar\pi_{\tilde P,\tilde \lambda},
  \end{equation}
  i.e., the solution operators for partitioning $P$ and degree vector $\lambda = ({\svadd{\lambda_{I}}})_{I \in P},$ and for the standard decomposition  $\tilde P$ and degree vector $\tilde \lambda$  are equal.
  \emph{(ii)} The corresponding standard partitioning $\tilde P$ together with the degree vector $\tilde \lambda$  is unique.
\end{lemma}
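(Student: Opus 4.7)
The plan is to verify (i) by direct inspection of the construction, and then to prove (ii) by showing that the standard decomposition is an intrinsic invariant of the matrix $\bar\pi_{P,\lambda}$, so that distinct pairs $(P,\lambda)$ producing the same matrix lead to the same $(\tilde P, \tilde\lambda)$.

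For (i), I would argue block-by-block. A non-interpolatory block $\pi_{I,\lambda_I}$ is kept unchanged by the construction and contributes identically to $\bar\pi_{P,\lambda}$ and $\bar\pi_{\tilde P,\tilde\lambda}$. For an interpolatory block $\pi_{I,\lambda_I}=\mathrm{id}_{|I|}$ with $|I|>1$, the construction replaces this single block by $|I|$ unit $1\times 1$ blocks on the diagonal; but $\mathrm{id}_{|I|}$ coincides entrywise with the block-diagonal matrix of $|I|$ such unit blocks, so the entries of $\bar\pi_{P,\lambda}$ are preserved. This already gives $\bar\pi_{P,\lambda}=\bar\pi_{\tilde P,\tilde\lambda}$.

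For (ii), the key observation is that the standard decomposition is the unique finest ordered block-diagonal decomposition of $\bar\pi_{P,\lambda}$ into orthogonal projections. For any orthogonal projection $\pi$, the diagonal blocks of any ordered block-diagonal decomposition are themselves orthogonal projections; moreover, if $\pi$ is block-diagonal with respect to two ordered interval partitions $P_1$ and $P_2$, then it is also block-diagonal with respect to their common refinement, whose breakpoints are the union of those of $P_1$ and $P_2$. Consequently a unique finest ordered block-diagonal decomposition into orthogonal projections exists. By Assumption~\ref{main:assum:Basis}(ii) every non-interpolatory $\pi_{I,\lambda_I}$ is atomic, and the $1\times 1$ unit blocks are trivially atomic, so $\tilde P$ as produced by the construction already coincides with this finest decomposition and is therefore determined by $\bar\pi_{P,\lambda}$ alone.

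It then remains to recover $\tilde\lambda$ from the matrix. For a $1\times 1$ unit block $p_i$, $\tilde\lambda_{I_i}=1$ equals its rank. For an atomic block $p_i=\pi_{I_i,\lambda_{I_i}}$ with $|I_i|>1$, Assumption~\ref{main:assum:Basis}(i) ensures that the underlying $A_{I_i,\lambda_{I_i}}$ has full column rank, so $\tilde\lambda_{I_i}=\lambda_{I_i}$ equals the rank of $p_i$. Thus $\tilde\lambda$ is determined by the ranks of the blocks of the standard decomposition. The most delicate step is the common-refinement property for block-diagonal orthogonal projections; once this routine linear-algebra fact is set up, the finest-decomposition argument and hence uniqueness follow immediately.
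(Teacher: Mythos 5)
Your proof is correct, and for part (ii) it takes a genuinely different and somewhat stronger route than the paper. The paper proves (i) exactly as you do ("by construction"), but for (ii) it argues directly at the level of the construction: it compares two putative standard partitions $\tilde P_1,\tilde P_2$ arising from the same $(P,\lambda)$, matching length-one intervals to identity blocks and noting that atomic non-interpolatory blocks are never modified. You instead characterize $(\tilde P,\tilde\lambda)$ intrinsically: the set of ordered interval partitions with respect to which $\bar\pi_{P,\lambda}$ is block diagonal is closed under common refinement (true for any matrix, since an entry is annihilated by the union of two breakpoint sets iff it is annihilated by one of them), hence a unique finest such decomposition exists; atomicity of the non-interpolatory blocks under Assumption~\ref{main:assum:Basis}(ii) together with the trivial atomicity of $1\times 1$ blocks shows $\tilde P$ \emph{is} that finest decomposition, and $\tilde\lambda$ is recovered from the block ranks via Assumption~\ref{main:assum:Basis}(i). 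What your version buys is that $(\tilde P,\tilde\lambda)$ is shown to depend on the matrix $\bar\pi_{P,\lambda}$ alone, not on the generating pair $(P,\lambda)$ --- which is precisely the form of the statement invoked later in the proof of Theorem~\ref{main:thm:UniquenessAEPart}, where all elements of $M_{(P,\lambda)}$ must share one standard decomposition; the paper's shorter argument gets there only after the reader supplies essentially your observation. The one redundancy in your write-up is the remark that the diagonal blocks of an orthogonal projection are themselves orthogonal projections: the common-refinement and finest-decomposition argument goes through for arbitrary block-diagonal decompositions, so this fact is not actually needed.
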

The proof is given in the supplementary material.

We get the following uniqueness result in particular concerning (discrete) partitions.
\begin{theorem}  \label{main:thm:UniquenessAEPart}
  We consider the minimization problem \eqref{main:eq:proposed_model_intro_poly}.
  We assume that Assumption~\ref{main:assum:Basis} is fulfilled
  for the system of basis functions $(b_j)_{j \in \mathbb N}.$
  Then,
  the minimizing (discrete) function $\hat y$ of Equation \eqref{main:eq:proposed_model_intro_poly}
  given by
  $\hat y_i = \omega_\gamma^*(t_i)$ for all $i \in \{1, \ldots, n\}$
  is unique
  for almost all input data $y$ (w.r.t. Lebesgue measure.)
  Moreover, segments $I^\ast$ of a minimizing partitioning corresponding to non-interpolating estimation are unique
  for almost all input data $y$ (w.r.t. Lebesgue measure.).
\end{theorem}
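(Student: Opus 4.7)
My first observation is that Part 1 is immediate from Theorem~\ref{main:thm:uniquessFunction}, which does not even require Assumption~\ref{main:assum:Basis}. I therefore focus on the second statement, the uniqueness of non-interpolating segments. The overall plan is to construct a Lebesgue-null exceptional set $N \subset \R^n$ such that for every $y \notin N$, any two minimizers $(P,\lambda),(P',\lambda')$ of the DofPPR functional share the same standard block decomposition in the sense of Lemma~\ref{main:lem:standardDecompositionExUn}. Since the non-interpolating blocks of that decomposition are, by construction, exactly the non-interpolating segments of any minimizer (interpolating blocks being split into trivial $1 \times 1$ blocks), the desired uniqueness then follows.

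I would define $N$ as the union of three families of sets, finite in number because the set of candidate pairs $(P,\lambda)$ is finite. First, the null set from Theorem~\ref{main:thm:uniquessFunction} on which $\hat y$ is non-unique. Second, for every pair $(P,\lambda),(P,\lambda')$ with the same partition but $\lambda \ne \lambda'$, the equal-energy set $\{y: G_{P,\lambda}(y)=G_{P,\lambda'}(y)\}$; this is null because Assumption~\ref{main:assum:Basis}(i) forces $\pi_{I,\lambda_I}$ and $\pi_{I,\lambda'_I}$ on any segment $I$ with $\lambda_I \ne \lambda'_I$ to project onto subspaces of different dimensions, so the energy difference is a nontrivial quadratic form in $y$ plus a constant whose zero locus is a Lebesgue-null quadric. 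Third, for every pair $(P,\lambda),(P',\lambda')$ with $P \ne P'$ to which case~(i) of Lemma~\ref{main:lem:projections} applies, the equal-energy set, null by that lemma.

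Now fix $y \notin N$ and let $(P,\lambda),(P',\lambda')$ both be minimizers. If $P=P'$, the equality $G_{P,\lambda}(y)=G_{P,\lambda'}(y)$ combined with the second exclusion yields $\lambda=\lambda'$. If $P \ne P'$, the third exclusion forces the pair into case~(ii) of Lemma~\ref{main:lem:projections}, so $G_{P,\lambda}-G_{P',\lambda'} \equiv \alpha$ is constant on $\R^n$; the fact that both are minimizers at $y$ then forces $\alpha=0$, i.e.\ $G_{P,\lambda} \equiv G_{P',\lambda'}$ as functions. Expanding via \eqref{main:eq:QuadWithProjection} and using the identity $\|(\bar\pi-\mathrm{id})y\|_2^2 = y^T(\mathrm{id}-\bar\pi)y$ (valid since $\mathrm{id}-\bar\pi$ is a symmetric projection), I would compare quadratic and constant parts of
\begin{equation*}
  y^T(\mathrm{id}-\bar\pi_{P,\lambda})y + \gamma|\lambda| \;\equiv\; y^T(\mathrm{id}-\bar\pi_{P',\lambda'})y + \gamma|\lambda'|,
\end{equation*}
with $|\lambda|:=\sum_I \lambda_I$. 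Symmetry of the projection matrices makes them determined by their quadratic forms, so $\bar\pi_{P,\lambda}=\bar\pi_{P',\lambda'}$. The uniqueness clause of Lemma~\ref{main:lem:standardDecompositionExUn}(ii), applied to this common matrix, then forces the standard forms $(\tilde P,\tilde\lambda)$ and $(\tilde P',\tilde\lambda')$ to coincide; in particular, the non-interpolating segments together with their degrees agree.

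The hard part will be precisely case~(ii) of Lemma~\ref{main:lem:projections} with $\alpha=0$, which does genuinely occur, as Example~\ref{ex:NonUniquePartsAlthougAE} demonstrates: several distinct partitions can produce identical energy functions on all of $\R^n$. The reason the argument still goes through is that in exactly this regime the whole solution operator $\bar\pi_{P,\lambda}$ is already determined by the energy functional, so the uniqueness of the standard block decomposition from Lemma~\ref{main:lem:standardDecompositionExUn} suffices to rescue uniqueness of the non-interpolating part even when the partitions themselves are not.
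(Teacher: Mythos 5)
Your proof is correct and follows essentially the same route as the paper's: both reduce the problem to showing that, outside a finite union of Lebesgue-null equal-energy sets, all minimizers at a given $y$ share the same projection matrix $\bar\pi_{P,\lambda}$ (via the dichotomy of Lemma~\ref{main:lem:projections}, with minimality forcing the constant $\alpha$ to vanish), and then invoke the uniqueness of the standard block decomposition from Lemma~\ref{main:lem:standardDecompositionExUn} to conclude that the non-interpolating segments coincide. The paper packages this through the equivalence classes $M_{(P,\lambda)}$ of candidates with equal projection matrices and their distinguished representatives rather than your direct pairwise exclusion (and your separate rank argument for the case $P=P'$, $\lambda\neq\lambda'$ is a slightly more careful reading of Lemma~\ref{main:lem:projections}, whose statement nominally assumes $P\neq Q$), but the substance is identical.
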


The proof is given in the supplementary material.

Specifing the basis function system to the class of polynomials, we may formulate the following corollary.
\begin{corollary}\label{main:cor:Uniqueness4Pol}
  We consider the minimization problem \eqref{main:eq:proposed_model_intro_poly}
  for a system of polynomials $(b_j)_{j \in \mathbb N},$
  where each polynomial $b_j,$ $j \in \mathbb N,$  has precisely degree $j.$
  Then,
  the minimizing (discrete) function $\hat y$ of Equation \eqref{main:eq:proposed_model_intro_poly}
  given by
  $\hat y_i = \omega_\gamma^*(t_i)$ for all $i \in \{1, \ldots, n\}$
  is unique
  for almost all input data $y$ (w.r.t. Lebesgue measure.)
  Moreover, segments $I^\ast$ of a minimizing partitioning corresponding to non-interpolating estimation are unique
  for almost all input data $y$ (w.r.t. Lebesgue measure.)
\end{corollary}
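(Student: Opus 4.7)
The plan is to deduce the corollary as a direct application of Theorem~\ref{main:thm:UniquenessAEPart} once we have verified that the polynomial basis in question fulfils Assumption~\ref{main:assum:Basis}. Lemma~\ref{main:lem:PolynomialsFulfillAss} already does precisely this for the monomial indexing $\deg b_j = j-1$, so the essential content of the proof is to observe that the mildly different indexing $\deg b_j = j$ in the corollary is immaterial.

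First I would note that any system $(b_j)_{j\in\mathbb N}$ of polynomials with pairwise distinct degrees is related to the monomial system by an invertible triangular change-of-basis matrix $T$. Consequently, any design matrix $A_{I,\lambda_I}$ built from $(b_j)$ factors as $A_{I,\lambda_I} = V_{I,\lambda_I}\,T_{\lambda_I}$, where $V_{I,\lambda_I}$ is the design matrix built from a Vandermonde-type selection of monomials at the nodes $(t_i)_{i\in I}$ and $T_{\lambda_I}$ is nonsingular. In particular, $A_{I,\lambda_I}$ has the same column-rank as $V_{I,\lambda_I}$, and the orthogonal projection $\pi_{I,\lambda_I}$ onto its column space coincides with the projection onto the column space of $V_{I,\lambda_I}$. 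Thus items (i) and (ii) of Assumption~\ref{main:assum:Basis} carry over verbatim from Lemma~\ref{main:lem:PolynomialsFulfillAss} to the basis of the corollary.

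With Assumption~\ref{main:assum:Basis} in hand, the two conclusions of the corollary, \emph{namely} that the discrete estimate $\hat y$ with $\hat y_i = \omega_\gamma^\ast(t_i)$ is unique for almost all $y\in\R^n$, and that segments $I^\ast$ of a minimising partitioning corresponding to non-interpolating estimation are unique for almost all $y\in\R^n$, are both immediate specialisations of Theorem~\ref{main:thm:UniquenessAEPart}.

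The step I expect to require the most care is not really an obstacle but rather a bookkeeping check: ensuring that the triangular change of basis $T_{\lambda_I}$ is genuinely the same on all subintervals $I$, so that the restriction from a global basis to a local one commutes with the change of basis, and hence the atomicity property transfers segment-by-segment. Everything else is direct invocation of the earlier results.
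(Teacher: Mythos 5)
Your proposal is correct and follows essentially the same route as the paper: invoke Lemma~\ref{main:lem:PolynomialsFulfillAss} to verify Assumption~\ref{main:assum:Basis} and then apply Theorem~\ref{main:thm:UniquenessAEPart}. The change-of-basis argument you add to reconcile the degree indexing is a point the paper already absorbs inside the proof of the lemma (where it notes that the particular choice of polynomial basis is irrelevant because a change-of-basis matrix leaves the projection $\pi_{I,\lambda_I}$ unchanged), so your extra care there is sound but not a departure from the paper's argument.
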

The proof is given in the supplementary material.

We further obtain the following corollary which is important for implementation purposes as well.
\begin{corollary}\label{main:cor:UniquenessNuMax}
  Consider the same situation as in Corollary \ref{main:cor:Uniqueness4Pol}.
  To compute a minimizer of  the problem in  \eqref{main:eq:proposed_model_intro_poly} we may restrict, for a given partition $P,$
  the search space to the maximal degree
  \begin{equation}\label{main:eq:numax_corr}
    \nu_{\mathrm{max}}(I) = \max(1, \ElemCount{I} - 1)
  \end{equation}
  for each interval $I$ of the partitioning $P.$
\end{corollary}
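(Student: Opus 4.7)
The plan is to show that for any partition--degree pair $(P, \lambda)$, there exists another pair $(\tilde P, \tilde \lambda)$ with exactly the same objective value $G_{P,\lambda}(y)$ that additionally satisfies $\tilde \lambda_{\tilde I} \leq \nu_{\mathrm{max}}(\tilde I)$ for every $\tilde I \in \tilde P$. Since the feasible range for $\lambda_I$ is $1:\ElemCount{I}$ and $\nu_{\mathrm{max}}(I) = \max(1, \ElemCount{I} - 1)$, the only value exceeding $\nu_{\mathrm{max}}(I)$ is $\lambda_I = \ElemCount{I}$ when $\ElemCount{I} \geq 2$. It therefore suffices to remove every such ``saturated'' segment from $(P, \lambda)$ one at a time without changing the cost.

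Fix a segment $I \in P$ with $\lambda_I = \ElemCount{I} =: k \geq 2$. The first step is to note that the polynomial design matrix $A_{I, k}$ is square and, by Lemma~\ref{main:lem:PolynomialsFulfillAss}, has full rank $k$. Hence $A_{I, k}$ is invertible and the least squares problem on $I$ admits a coefficient vector achieving zero residual, so the contribution of $I$ to $G_{P, \lambda}(y)$ is simply $\gamma k$.

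The second step is the key surgery: replace $I = \{i_1 < \ldots < i_k\}$ in $P$ by its $k$ singletons $\{i_1\}, \ldots, \{i_k\}$ and set $\tilde \lambda_{\{i_j\}} = 1$ for each $j$, leaving the remaining segments and degrees of $(P, \lambda)$ untouched. Each singleton is interpolated exactly by a constant polynomial and contributes $\gamma \cdot 1$ to the objective, so the $k$ new singletons together contribute $\gamma k$, matching the original contribution. All other terms of the sum defining $G$ are unchanged, giving $G_{\tilde P, \tilde \lambda}(y) = G_{P, \lambda}(y)$.

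Iterating this construction over all segments of any purported minimizer with $\lambda_I = \ElemCount{I}$ yields a minimizer obeying the stated bound, which proves that the minimum of \eqref{main:eq:proposed_model_intro_poly} is attained on the reduced search space. The proof is essentially an accounting argument; the only nonroutine ingredient is verifying that $\lambda_I = \ElemCount{I}$ forces an interpolating fit with vanishing residual, and that is exactly where the polynomial hypothesis enters via Assumption~\ref{main:assum:Basis}(i) (as established in Lemma~\ref{main:lem:PolynomialsFulfillAss}). Without this full-rank property one would only obtain $\leq \gamma k$ and not equality, and the cost-preserving reduction could fail.
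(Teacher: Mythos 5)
Your proof is correct and is essentially the paper's argument in more explicit form: the paper passes to the standard block decomposition of $\bar\pi_{P,\lambda}$ (from the proof of Theorem~\ref{main:thm:UniquenessAEPart}), whose defining operation for interpolatory blocks is precisely your replacement of a saturated segment by singletons with one degree of freedom each, after which non-$1\times1$ blocks are non-interpolating and hence satisfy $\lambda_I \leq \ElemCount{I}-1$. Your version has the small merit of making the penalty accounting ($\gamma k$ on both sides) explicit; note only that even without the full-rank/interpolation property the replacement would still be cost-non-increasing (singletons always have zero residual), so the reduction would survive, just not as an equality.
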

The proof is given in the supplementary material.

\section{Fast algorithm for computing the regularization paths and model selection} \label{main:sec:fast_algorithm}

We next develop a fast  algorithm for computing the full regularization paths for the DofPPR problem  \eqref{main:eq:proposed_model_intro}, meaning a solver for all hyperparameters $\gamma \geq 0$ simultaneously.
This enables us to perform model selection based on rolling cross validation (with and without OSE-rule) requiring only little additional computational effort.

\subsection{Computing the regularization paths}

The general strategy is as follows:
We reduce the problem to solving a collection of constrained problems which in turn are solved using dynamic programming.
Extending a result from \cite{friedrich2008complexity}, we find that the optimal solution as a function of $\gamma$ is piecewise constant with only finitely many jumps, and we show that the  partition of $\R_{\geq 0}$ corresponding to this piecewise constant function may be obtained by computing the pointwise minima of finitely many affine-linear functions.

As preparation, we  formulate \eqref{main:eq:proposed_model_intro_poly}
for partial data $(t_1, \ldots, t_r),$
$(y_1, \ldots, y_r)$ and more compactly as
\begin{align}\label{main:eq:proposed_model_reformulated}
  \min_{\substack{P \in \OPart(1:r) \\ \lambda \in \Lambda(P)}} \sum_{\substack{I \in P }} d_I^{\lambda_I} + \gamma \lambda_I
\end{align}
where $d_I^{\lambda_I} = \min_{\omega \in \Omega^{\leq \lambda_I}}d_I( \omega, y).$
The core of the method is solving the following constrained problem, referred to as \emph{$\nu$ degree of freedom partition problem}:
\begin{align}\label{main:eq:degree_of_freedom_partition_problem}
  B_r^\nu  := \min_{\substack{P \in \OPart(1:r) \\ \lambda \in \Lambda^{\nu}(P)}} \sum_{\substack{I \in P  }} d_I^{\lambda_I},
\end{align}
where $\Lambda^{\nu}(P) := \{ \lambda \in \Lambda(P) \colon \nu = \sum_{\substack{I \in P }} \lambda_I \}$
are the dof sequences with exactly $\nu$ degrees of freedom in total.
We call the values $B_r^\nu$ the \emph{Bellman values} and refer to the associated table of numbers $(B_r^\nu)_{r, \nu}$ as \emph{Bellman table} $B$. Moreover we call a pair $(r, \nu)$ \emph{feasible} if the feasible set of the $\nu$ degree of freedom partition problem~\eqref{main:eq:degree_of_freedom_partition_problem} on data $1:r$ is nonempty so that $B_r^\nu$ is well-defined.

The solutions of \eqref{main:eq:proposed_model_reformulated}
and those of \eqref{main:eq:degree_of_freedom_partition_problem}
are related as follows:

\begin{lemma}[Reduction to constrained problems]\label{main:lem:reduction_to_unpenalized}
  For fixed $\gamma \geq 0, r \in 1:N$ the $\gamma$-penalized  problem
  \eqref{main:eq:proposed_model_reformulated}
  is equivalent to the problem
  \begin{align}\label{main:eq:tabulation_gamma}
    \min_{\nu \in 1:r} B_r^\nu + \gamma \nu.
  \end{align}
\end{lemma}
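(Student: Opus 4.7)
The plan is to stratify the feasible set of \eqref{main:eq:proposed_model_reformulated} by the total dof count $\nu := \sum_{I \in P} \lambda_I$. The key structural observation is that the penalty term decomposes as $\gamma \sum_{I \in P} \lambda_I = \gamma \nu$, which is constant on each stratum; once this is pulled out of the inner minimization, the remaining inner minimum is by definition $B_r^\nu$. So the whole proof is essentially an exchange of nested minimizations.

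First I would pin down the admissible range of $\nu$. By definition of $\Lambda(P)$, every feasible pair $(P, \lambda)$ satisfies $\lambda_I \in 1{:}\ElemCount{I}$ for each $I \in P$, so summing yields $\ElemCount{P} \leq \nu \leq \sum_{I \in P} \ElemCount{I} = r$, hence $\nu \in 1{:}r$. Conversely, for any $\nu \in 1{:}r$ the trivial single-segment partition $P = (1{:}r)$ paired with $\lambda = (\nu)$ lies in $\Lambda^\nu(P)$, so the pair $(r,\nu)$ is feasible and $B_r^\nu$ is well-defined for every $\nu \in 1{:}r$.

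Second, I partition the outer minimization by fixing $\nu$ and exchange the order of the minimizations:
\begin{align*}
  \min_{\substack{P \in \OPart(1:r) \\ \lambda \in \Lambda(P)}} \sum_{I \in P} \bigl(d_I^{\lambda_I} + \gamma \lambda_I\bigr)
  &= \min_{\nu \in 1:r} \; \min_{\substack{P \in \OPart(1:r) \\ \lambda \in \Lambda^{\nu}(P)}} \left( \sum_{I \in P} d_I^{\lambda_I} + \gamma \nu \right) \\
  &= \min_{\nu \in 1:r} \left( \gamma \nu + \min_{\substack{P \in \OPart(1:r) \\ \lambda \in \Lambda^{\nu}(P)}} \sum_{I \in P} d_I^{\lambda_I} \right) \\
  &= \min_{\nu \in 1:r} \bigl(B_r^\nu + \gamma \nu\bigr),
\end{align*}
which is precisely \eqref{main:eq:tabulation_gamma}. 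The first equality uses the stratification plus the fact that the inner union over $\nu$ covers all of $\Lambda(P)$ disjointly; the second pulls the constant $\gamma \nu$ out of the inner minimum; the third invokes the definition of $B_r^\nu$.

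I expect no genuine obstacle: the argument is pure bookkeeping, and the only point needing an explicit check is the non-emptiness of each stratum, handled by the single-segment witness above. The real significance of the lemma is that it reduces the full regularization path to the pointwise minimum of the finitely many affine functions $\gamma \mapsto B_r^\nu + \gamma \nu$ ($\nu \in 1{:}r$), which is piecewise constant as a function of $\arg\min_\nu$ and provides the algorithmic hook exploited in the remainder of Section \ref{main:sec:fast_algorithm}.
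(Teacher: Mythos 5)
Your proof is correct and follows exactly the argument the paper intends: stratify the feasible set of \eqref{main:eq:proposed_model_reformulated} by the total degree-of-freedom count $\nu = \sum_{I \in P} \lambda_I$, note that the penalty is constant equal to $\gamma\nu$ on each stratum, and exchange the order of minimization so the inner problem becomes $B_r^\nu$ by definition. Your explicit check that every $\nu \in 1{:}r$ is attained (via the single-segment witness $P=(1{:}r)$, $\lambda=(\nu)$) is the one point that genuinely needs saying, and you say it.
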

The proof is given in the supplementary material.

Computing the regularization paths for all possible values of $\gamma$ amounts to finding the pointwise minimum of a collection of affine linear functions and in particular their critical points:
By the previous lemma, computing the regularization paths is equivalent to solving $\min_{\nu \in 1:r} B_r^\nu + \gamma \nu$. The map $\gamma \mapsto \min_{\nu \in 1:r} B_r^\nu + \gamma \nu$ is the pointwise minimum of the set $\mathcal{F} := \{ \gamma \mapsto B_r^\nu + \gamma \nu \}_{\nu \in 1:r}$ of affine linear maps. This pointwise minimum is piecewise affine linear with the pieces between any two adjacent critical points being an element of $\mathcal{F}$. Any such piece then determines a solution for a whole range of penalties, such that the solution of the full optimization problem is a piecewise constant function of $\gamma$.
This correspondence is illustrated in Figure~\ref{main:fig:gamma_nu_mapping}.

\begin{figure}
  \centering
  \includegraphics[width=0.5\textwidth]{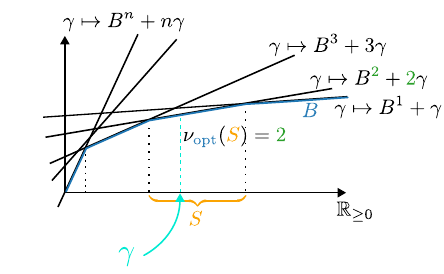}
  \caption{
  Computing the regularization paths of \eqref{main:eq:proposed_model_intro} for all possible values of $\gamma$ amounts to finding the pointwise minimum of a collection of affine linear functions and in particular their critical values. The $\gamma$-parameters between two critical values ($S$ in the graphic) provide the same minimizers of \eqref{main:eq:proposed_model_intro}. Hence, the mapping from the set of hyperparameters to the optimal solution $\gamma \mapsto (P^*_\gamma, \lambda^*_\gamma)$ is a piecewise constant function.}
  \label{main:fig:gamma_nu_mapping}
\end{figure}

In Section \ref{sup:sec:pointwise} of the supplementary material, we provide an algorithm that can be used to efficiently compute the pointwise minimum. This algorithm operates in linear time when applied to sorted inputs, and sorted inputs can be readily ensured in implementations without any overhead. At the points of intersection of two affine linear functions, our algorithm returns the model with fewer degrees of freedom in accordance with the principle of parsimony.

A crucial point for an efficient algorithm is that the Bellman
values $B_r^\nu$ can be computed efficiently by dynamic programming:

\begin{theorem} [Recursion for Bellman values]\label{main:thm:bellman-recursion}
  The values $B_r^\nu$ for $r \in 1:n, \nu \in 1:r$ admit the recursion
  \begin{align}\label{main:eq:bellman_recursion}
    B_r^\nu = \min_{\substack{ p_R \in 1 : \nu \\ l \in 0:r}} B_l^{\nu - p_R} + d_{l+1:r}^{p_R}
  \end{align}
  with initial conditions $B_r^1 = d_{1:r}^1, B_r^0 = B_0^\nu = 0$.
\end{theorem}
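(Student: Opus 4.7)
The plan is to establish the recursion by the standard "last-segment" dynamic programming decomposition, exploiting that both the data cost $\sum_{I \in P} d_I^{\lambda_I}$ and the total-dof constraint $\sum_{I \in P} \lambda_I = \nu$ are additive over the segments of $P$. Since the segments of an ordered partition of $1:r$ are totally ordered, exactly one of them is rightmost, and optimizing over it independently of the rest is the natural induction.

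I would first dispose of the initial conditions. For $\nu = 1$, each segment carries $\lambda_I \geq 1$, so total dof $1$ forces $P = (1:r)$ and $\lambda = (1)$, giving $B_r^1 = d_{1:r}^1$. For $r = 0$ the only ordered partition is the empty one, whose cost is an empty sum; hence $B_0^0 = 0$ is correct, and $B_0^\nu = 0$ and $B_r^0 = 0$ are adopted as boundary values — the former absorbs the case in which the last segment takes all $\nu$ dofs, the latter closes the recursion on the right.

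For the inductive step, fix $r \geq 1$ and $\nu \geq 1$ and let $(P, \lambda)$ be any feasible pair contributing to $B_r^\nu$. Write the rightmost segment of $P$ as $I_{\text{last}} = l+1:r$ with $l \in 0:r-1$, and set $p_R := \lambda_{I_{\text{last}}} \in 1 : |I_{\text{last}}|$. The remaining segments form an ordered partition $P'$ of $1:l$ with dof sequence $\lambda' = (\lambda_I)_{I \in P'}$ satisfying $\sum_{I \in P'} \lambda'_I = \nu - p_R$, and the cost decomposes as
\begin{align*}
\sum_{I \in P} d_I^{\lambda_I} \;=\; \sum_{I \in P'} d_I^{\lambda'_I} \;+\; d_{l+1:r}^{p_R}.
\end{align*}
Conversely, any feasible $(P', \lambda')$ on $1:l$ with total dof $\nu - p_R$ extends uniquely to a feasible $(P, \lambda)$ on $1:r$ with total dof $\nu$ by appending $l+1:r$ with dof $p_R$, provided $p_R \leq r-l$. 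This bijection reduces the optimization to first minimizing the left summand for fixed $(l, p_R)$ — which by definition of the Bellman table equals $B_l^{\nu - p_R}$ — and then minimizing over $(l, p_R)$, yielding the stated recursion. The single-segment case is covered by $l = 0, p_R = \nu$ via $B_0^0 = 0$.

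The only real piece of bookkeeping, rather than a genuine obstacle, is keeping the admissible ranges of $l$ and $p_R$ consistent with feasibility: strictly we need $l \in 0:r-1$ and $p_R \in 1:\min(\nu, r-l)$ because every dof sequence obeys $\lambda_I \leq |I|$, whereas the statement writes $l \in 0:r$ and $p_R \in 1:\nu$. This is reconciled by the standard convention that infeasible summands evaluate to $+\infty$ (equivalently, infeasible indices are silently excluded from the minimum), together with the boundary value $B_0^{\nu - p_R} = 0$ for the $l = 0$ endpoint. With this in place, the bijective correspondence above translates directly into equality of the two minima, completing the proof.
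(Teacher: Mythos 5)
Your proof is correct and follows exactly the argument the paper intends: the standard last-segment dynamic-programming decomposition, exchanging the order of minimization via the bijection between feasible $(P,\lambda)$ on $1:r$ and triples consisting of a last segment $l+1:r$ with $p_R$ dofs and a feasible pair on $1:l$ with $\nu-p_R$ dofs (this is precisely the decomposition depicted in the paper's accompanying figure). Your explicit handling of the index-range bookkeeping --- restricting to $l \in 0:r-1$, $p_R \leq r-l$, and treating infeasible Bellman entries such as $B_l^0$ for $l \geq 1$ as excluded rather than as $0$ --- is in fact more careful than the theorem's literal statement of the initial conditions, and is needed for the recursion to be valid as an equality.
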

The proof is given in the supplementary material.

Note that the Bellman table thus computed generates a graph (an acyclic directed graph where every
node has at most one outgoing edge -- \ie a polyforest with edges oriented towards the roots):
for each feasible pair $(r, \nu)$ there is a node in the graph.
If the minimizing arguments of $\eqref{main:eq:bellman_recursion}$ indicate that
there is only one segment in the partition (so the segment $0:r$ is best approximated using a single model
with $\nu$ degrees of freedom) then the node $(r,\nu)$ in the graph has no outgoing edge.
If it indicates that there is a prior segment with node $(r',\nu')$ then there is an edge from $(r,\nu)$ to
$(r',\nu')$.
We call this the \emph{cut graph}. Algorithm~\ref{alg:pseudocode-main} details
how this graph can be used to compute optimal piecewise models.

\begin{figure}
  \centering
  \includegraphics[width=0.5\textwidth]{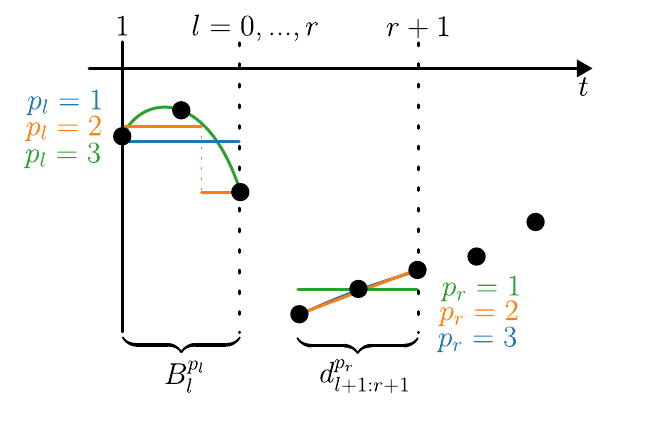}
  \caption{A visualization of the dynamic programm for solving the $\nu$-degree of freedom partition problem \eqref{main:eq:degree_of_freedom_partition_problem}.
    Candidates for the optimal solution on data $1:r+1$ with $\nu$ degrees of freedom
    are composed of the best (continuous) approximation on data $l+1:r+1$ with $p_R$ degrees of freedom
  and the best (piecewise) solution on data $1:l$ with $\nu - p_R$ degrees of freedom.}
\end{figure}

\subsection{Obtaining minimizers by backtracking and resolving ambiguities}

Knowing the regularization paths,
we are also interested in finding the corresponding solutions, i.e. the partition of the data itself $P^*_\gamma,$ the corresponding sequence of degrees of freedom $\lambda^*_\gamma,$ and  the piecewise regression function $\omega^*_\gamma$.
This can be achieved in two steps: first, backtracking
to obtain the results of \eqref{main:eq:degree_of_freedom_partition_problem}
for each $\nu = 1, \ldots, n,$
and second performing a lookup using the above correspondence between $\gamma$ and $\nu.$

The backtracking works as follows: When solving \eqref{main:eq:degree_of_freedom_partition_problem},
we store  for each feasible $\nu =1, \ldots, n,$ and $r = 1, \ldots,n$ the minimizing arguments
of \eqref{main:eq:bellman_recursion} (that is, the leftmost boundary index of the segment ending with index $r$ and the degrees of freedom on this segment) to construct the cut graph.
With these informations, an optimal partition, the corresponding sequence of degrees of freedom, and a regression function with $\nu$ degrees of freedom is computed
by following the edges in the cut graph until a terminal node is reached.
This gives a minimizer of the constrained problem \eqref{main:eq:degree_of_freedom_partition_problem}
for each $\nu =1, \ldots, n.$ A more detailed description is included in Algorithm~\ref{alg:pseudocode-main}.

As for the lookup,  a solution for \eqref{main:eq:proposed_model_intro}  for a specific $\gamma \geq 0$ is obtained
by computing the minimizing argument  of \eqref{main:eq:tabulation_gamma}
and returning the corresponding solution of \eqref{main:eq:degree_of_freedom_partition_problem}.

Recall from Section \ref{main:ssec:uniqueness} that the minimizers of the studied model are unique (in an a.e.-sense) in
certain least squares settings,
and that the minimizers  are not unique in general.
We now describe how to deal with potential ambiguities.
First, to rule out general ambiguity issues with function estimation which do not stem from the piecewise estimation, we assume that, on each segment $I$,
the fitted functions $\phi^{\lambda_I} = \argmin_{\omega \in \Omega^{\leq {\lambda_I}}}d_I( \omega, y)$ are either unique, or admit a canonical choice in case of non-uniqueness. Then, as mentioned in Section~\ref{main:ssec:uniqueness}, an optimal solution $(P^*_\gamma, \lambda^*_\gamma)$ of \eqref{main:eq:proposed_model_intro} defines a piecewise function $\omega_{\gamma}^*$
uniquely up to shift of the break locations between the borders of two adjacent segments of the partition.
Now inspecting again Example~\ref{ex:non-uniqueness},
it seems that  none of the models is clearly better than the others, and the model consisting of only one segment seems attractive for parsimony reasons. However merely trying to select the partition with the lowest number of segments does not resolve ambiguities in general as we can see by considering a similar example with values $0,1,0,1$ and $\nu=4$.
Here, we employ the following selection strategy
between equally good models during backtracking in the dynamic program, which we call \emph{right-maximal graph-tracing} and call uniqueness results that rely on this strategy \emph{\sUnique}:
To determine a partition of $1:r$ with a corresponding sequence $\lambda \in \Lambda^{\nu}(P)$ we start with a {budget} of $\nu$ degrees of freedom. We then select the longest rightmost segment of our partition, which corresponds to choosing the lowest $l$ among the minimizers in the recursion from Theorem \ref{main:thm:bellman-recursion}. This in turn uniquely determines that we spend $p_R$ dofs on $l+1:r$. We proceed recursively downwards to reconstruct the remainder of $P$ and $\lambda$.
An implementation may choose another way to achieve uniqueness at this step. One obvious option to improve results would be to use an additional forward cross-validation at this step (for $r < n$) to cut down the number of options before picking a single representative. However due to the additional incurred cost of such methods our implementation uses the basic scheme just proposed.

Given this algorithm we may now associate to each $B_r^\nu$ a unique partition and dof-sequence. This allows us to make the following statement:

\begin{theorem}\label{main:thm:solution-pcw-const}
  There is a finite partition of $\R_{\geq 0}$ into intervals such that the solution to the $\gamma$-penalized partition problem is a function of $\gamma$ that is constant on each element of the partition. This solution is \sUnique except for finitely many penalties $\gamma$ -- so it is in particular unique almost everywhere.
\end{theorem}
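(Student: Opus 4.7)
The plan is to combine Lemma \ref{main:lem:reduction_to_unpenalized} with a lower-envelope argument in the $\gamma$-variable. By that lemma, solving the $\gamma$-penalized problem on the full data $1:n$ reduces to minimizing $\nu \mapsto B_n^\nu + \gamma\nu$ over the finite set $\nu \in 1:n$. For each fixed $\nu$, the map $\gamma \mapsto B_n^\nu + \gamma\nu$ is affine linear in $\gamma$, so the value function $\gamma \mapsto \min_{\nu \in 1:n} B_n^\nu + \gamma\nu$ is the lower envelope of at most $n$ affine linear functions. Such an envelope is piecewise affine linear with at most $n-1$ breakpoints, namely the abscissae at which two of the pieces intersect and exchange minimality. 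These breakpoints partition $\R_{\geq 0}$ into finitely many intervals.

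On the interior of each such interval, a single value $\nu^* \in 1:n$ strictly attains the minimum, so $\gamma \mapsto \nu_\gamma^*$ is constant there. For this fixed $\nu^*$, the right-maximal graph-tracing procedure applied to the cut graph produced by the recursion of Theorem \ref{main:thm:bellman-recursion} deterministically selects one partition $P^*$ and one dof sequence $\lambda^*$ among the minimizers of the constrained problem \eqref{main:eq:degree_of_freedom_partition_problem} with $\nu^*$ total degrees of freedom. Crucially, this backtracking depends only on $\nu^*$ and on the Bellman table $B$, neither of which depends on $\gamma$. Hence the map $\gamma \mapsto (P_\gamma^*, \lambda_\gamma^*)$ is constant on the interior of each interval of the partition, and the resulting solution is \sUnique there.

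The only $\gamma$ at which \sUniqueness may fail are the finitely many breakpoints of the lower envelope, where two or more values of $\nu$ tie in $B_n^\nu + \gamma\nu$. Invoking the parsimony convention discussed after Lemma \ref{main:lem:reduction_to_unpenalized} (returning, at an intersection of two affine pieces, the candidate with fewer degrees of freedom), each breakpoint is assigned to exactly one of its adjacent intervals, extending the finite partition of $\R_{\geq 0}$ to a partition on which $\gamma \mapsto (P_\gamma^*, \lambda_\gamma^*)$ is constant everywhere. The exceptional set of breakpoints is finite and in particular of Lebesgue measure zero, so the solution is \sUnique except at finitely many $\gamma$, as claimed. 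I do not anticipate any substantial obstacle: the argument is a consolidation of Lemma \ref{main:lem:reduction_to_unpenalized}, Theorem \ref{main:thm:bellman-recursion}, and the right-maximal graph-tracing tie-breaking rule, and the only delicate point is the bookkeeping at breakpoints, which is handled by the parsimony convention.
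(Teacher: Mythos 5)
Your proposal is correct and follows essentially the same route the paper takes: reduction via Lemma~\ref{main:lem:reduction_to_unpenalized} to $\min_{\nu}B_n^{\nu}+\gamma\nu$, the lower envelope of finitely many affine functions with distinct slopes yielding finitely many breakpoints, constancy of the selected $\nu^\ast$ (and hence of the RMGT-traced $(P^\ast,\lambda^\ast)$, which depends only on the $\gamma$-independent Bellman table) on each open piece, and the finite set of intersection points as the only possible failures of \sUnique{}ness, resolved by the parsimony convention. No gaps.
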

The proof is given in the supplementary material.

\subsection{Parameter selection by rolling cross-validation with OSE-rule}\label{main:sec:parameter_choice}

The hyperparameter $\gamma$ is chosen via rolling cross validation,
a type of cross validation that respects the temporal order of the time-series and avoids look-ahead bias.
It is given as follows:
\begin{align}\label{main:eq:cv_score_intro}
  \CV(\gamma) = \frac{1}{n-1}\sum_{r=1}^{n-1} (\omega^{*}_{\gamma, r}(t_{r+1}) - y_{r+1})^2.
\end{align}
Here $\omega^{*}_{\gamma, r}$ is the piecewise regression function associated to  $(P^{*}_{\gamma, r}, \lambda^*_{\gamma, r})$ which denotes a solution of \eqref{main:eq:proposed_model_intro} on the partial data $(t_1, y_1), \ldots, (t_r, y_r).$
(One may use other distance functions than the mean squared error.)
An optimal parameter in the sense of cross validation is a value that minimizes $\gamma \mapsto \CV(\gamma).$

To obtain such an optimal parameter,  we use the following key observation:
As the mapping $\gamma \mapsto (P^{*}_{\gamma, r}, \lambda^*_{\gamma, r})$ is piecewise constant,
so is the $r$-th summand, $\CV_r(\gamma) = (\omega^{*}_{\gamma, r}(t_{r+1}) - y_{r+1})^2,$ in \eqref{main:eq:cv_score_intro}.
Hence $\CV$ is piecewise constant as the mean of finitely many piecewise constant functions.
Likewise, on full data, the mapping $\gamma \mapsto (P^{*}_{\gamma, n}, \lambda^*_{\gamma, n})$ is piecewise constant.
Thus, the joint mapping $\gamma \mapsto [ CV(\gamma), P^{*}_{\gamma, n}, \lambda^*_{\gamma, n}]$ is piecewise constant as well.
By the piecewise constancy, all hyperparameters
of a piece of the mapping can be considered as equivalent,
and we may represent them by a single value, say the midpoints of the pieces, denoted by $\{ \gamma_1, \ldots, \gamma_L\}.$
To obtain a unique result, we invoke the principle of parsimony and select
the largest minimizing argument, we denote this choice by $\gamma_{\CV}$:
\begin{equation}\label{main:eq:CV_min}
  \gamma_{\CV} = \max{(\argmin_{\gamma \in \{ \gamma_1, \ldots, \gamma_L\}} \CV(\gamma)}).
\end{equation}

To obtain even more parsimonious results it is common to apply the \emph{one standard error rule (OSE)} to cross-validated models rather than directly choosing the model with the strictly lowest score
\cite[p. 244]{book:elements-of-statistical-learning}. For the considered model this means taking the largest parameter such that the CV-score is within a one-standard error window of the minimum value, so
\begin{equation}\label{main:eq:CV_OSE}
  \gamma_{\text{OSE}} = \max_{\gamma \in  \{ \gamma_1, \ldots, \gamma_L\}}\{ \CV(\gamma) \leq \CV(\gamma_{\CV}) + \mathrm{SE(\gamma_{\CV})}\},
\end{equation}
where $\mathrm{SE}(\gamma) = \sqrt{\mathrm{Var}(\CV_1(\gamma), \ldots, \CV_{n-1}(\gamma))}/\sqrt{n-1}$ is the standard error of the mean in \eqref{main:eq:cv_score_intro}, see \cite{paper:the-one-standard-error-rule-for-model-selection}.
(As $\gamma_{\text{CV}},$ the parameter $ \gamma_{\text{OSE}}$ is a representative
for an interval of hyperparameters which are all associated to the same model and to the same score.)
The CV scoring function and
the solutions of the two hyperparameter choices
for the example of Figure~\ref{main:fig:comparison_pcw_DofPPR} are given in Figure~\ref{main:fig:cv_score_toy}.

A few important points about the above selection methods:
\emph{(i)} Computing $\gamma_{\CV}$ and $\gamma_{\text{OSE}}$ comes with litte additional effort.
This is because  the regularization paths have been computed by the algorithm proposed in further above.
Furthermore, we do not need to compute $\omega^{*}_{\gamma, r}$ on its entire domain; it is sufficient to compute the $\omega^{*}_{\gamma, r}$ on  the rightmost segment of the defining partition.
\textit{(ii)} The resulting values are computed exactly up to the usual numerical errors of floating point arithmetic if the computation of the tabulations and the model fits admit these precisions.
\textit{(iii)}  The selection strategies are invariant to (global) scaling of the signal.

\begin{remark}
  A natural question arising in the above procedure
  is why  crossvalidation is done with respect to
  $\gamma$ based on \eqref{main:eq:proposed_model_intro} and not with respect to the total number of degrees of freedom $\nu_{\mathrm{total}}$ based on \eqref{main:eq:proposed_model_reformulated}.
  The reason is that rolling cross-validation works with signals of different lengths ($r =1, ..., n-1$) and
  the  total number of degrees of freedom has a different meaning for each of those signals.
\end{remark}

\begin{figure}
  \centering
  \includegraphics[width=1\textwidth]{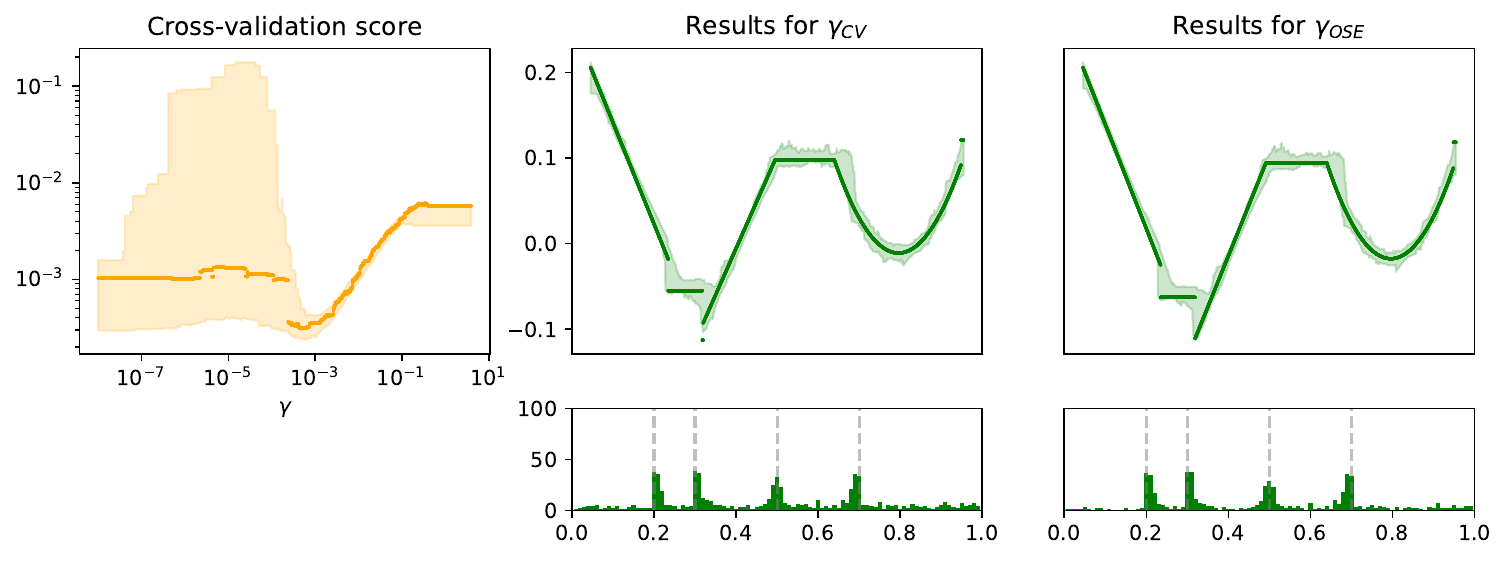}
  \caption{
    Visualization of the proposed parameter selection.
    \emph{Left:} The piecewise constant CV scoring function for the sample realization of Figure~\ref{main:fig:comparison_pcw_DofPPR}
    (solid line)
    and the $2.5 \%$ to $97.5 \%$ quantiles over all realizations (shaded).
    \emph{Middle and right:}
    The corresponding results
    for the parameter choices $\gamma_{\text{CV}}$
    and $\gamma_{\text{OSE}},$ respectively.
    The legend is identical to that of Figure~\ref{main:fig:comparison_pcw_DofPPR}.
    The histograms of the breakpoints indicate a slightly
    better localization of the breakpoints near the true breaks.
  }
  \label{main:fig:cv_score_toy}
\end{figure}

To further investigate the  parameter choice strategy,
we use two more piecewise polynomial test signals with two jumps which are designed such that
one jump is considerably larger that the other (Figure~\ref{main:fig:accuracy}, top row).
The count of dofs and jumps that are estimated accurately versus the sample size
is reported for 500 realizations using uniformly sampled locations and Gaussian noise of standard deviation 0.05.
We observe that the number of accurately estimated dofs
increases with the number of samples for both signals.
This trend is also observed for the number of
accurately estimated jumps in the first signal.
For the second signal, the number of jumps
has a tendency to be overestimated.
This can be attributed to the fact
that the underlying signal  can
be well approximated by a piecewise function
of a linear and a constant polynomial on the last segment
(compare also the estimates in the top row of Figure~\ref{main:fig:cv_vs_time}).
The results using $\gamma_{\text{CV}}$
and $\gamma_{\text{OSE}}$ are relatively close
together, and $\gamma_{\text{OSE}}$ shows a slight advantage over $\gamma_{\text{CV}}$
except for the dofs of the second signal.

\begin{figure}
  \centering
  \begin{subfigure}{\textwidth}
    \includegraphics[width=0.4\textwidth]{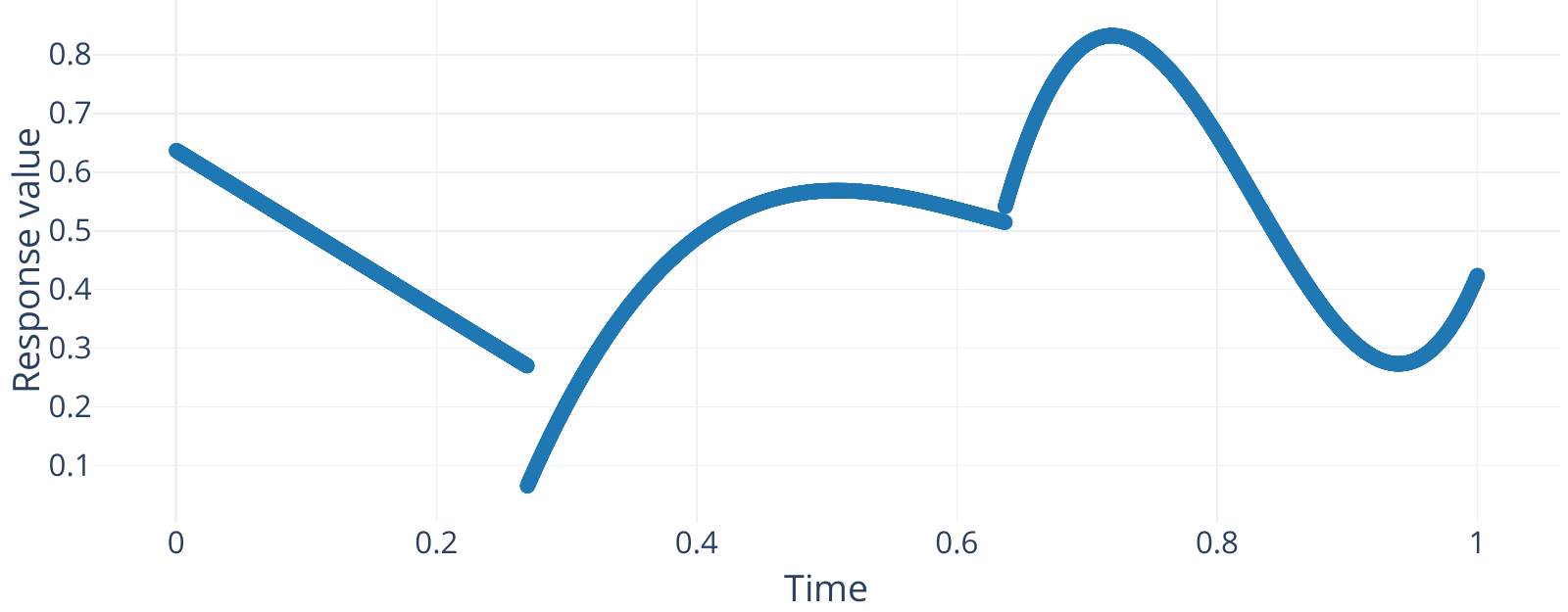}
    \hspace{0.1\textwidth}
    \includegraphics[width=0.4\textwidth]{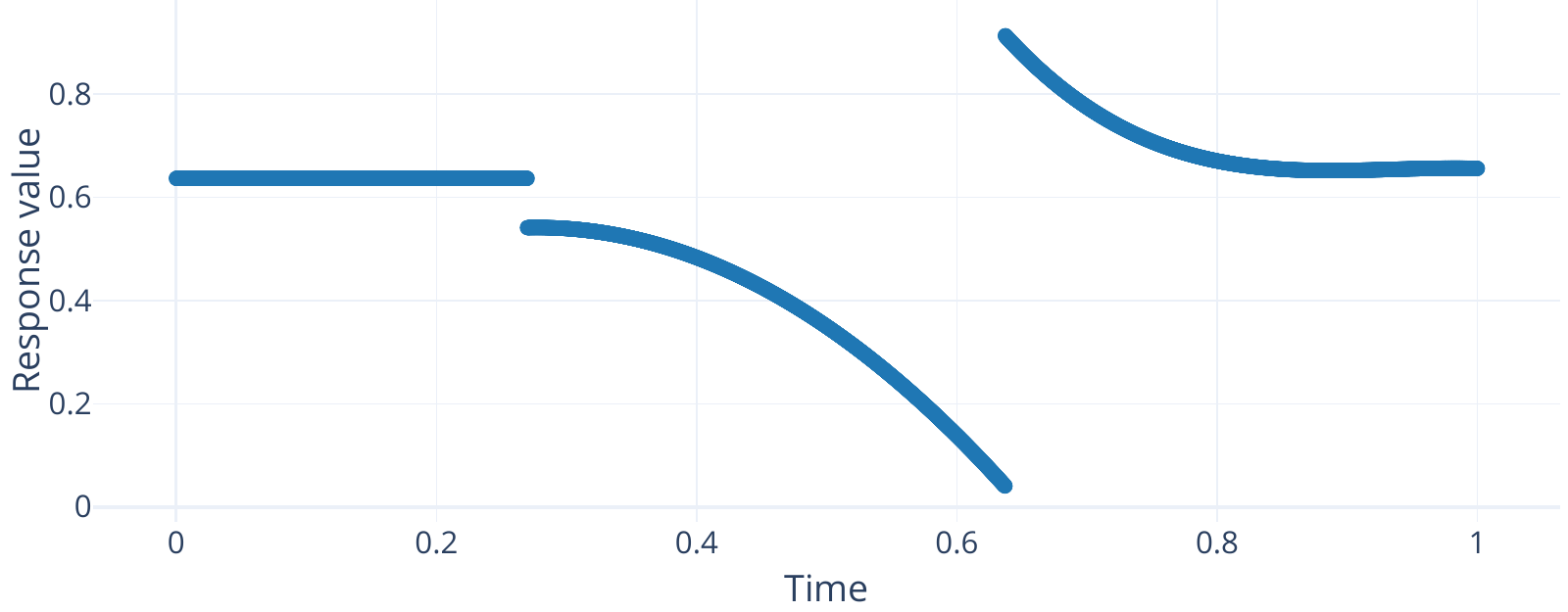}
  \end{subfigure}
  \\
  \begin{subfigure}{\textwidth}
    \includegraphics[width=0.49\textwidth]{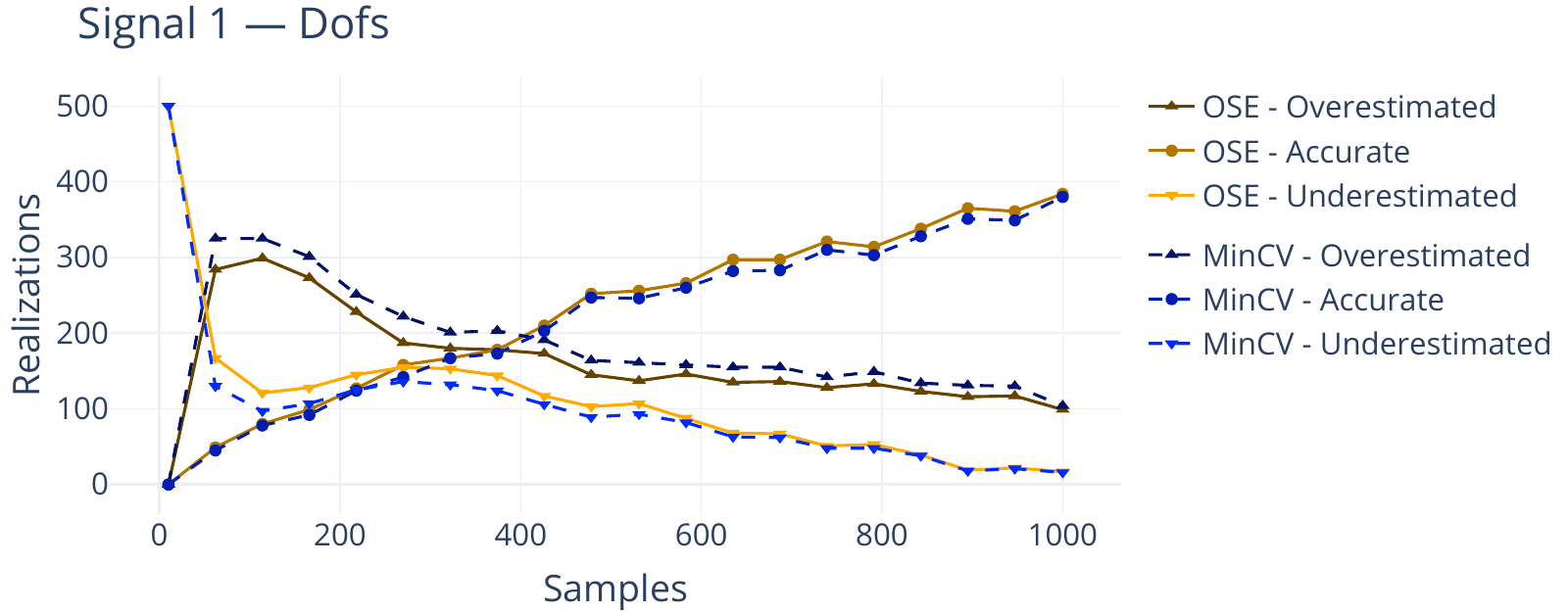}
    \includegraphics[width=0.49\textwidth]{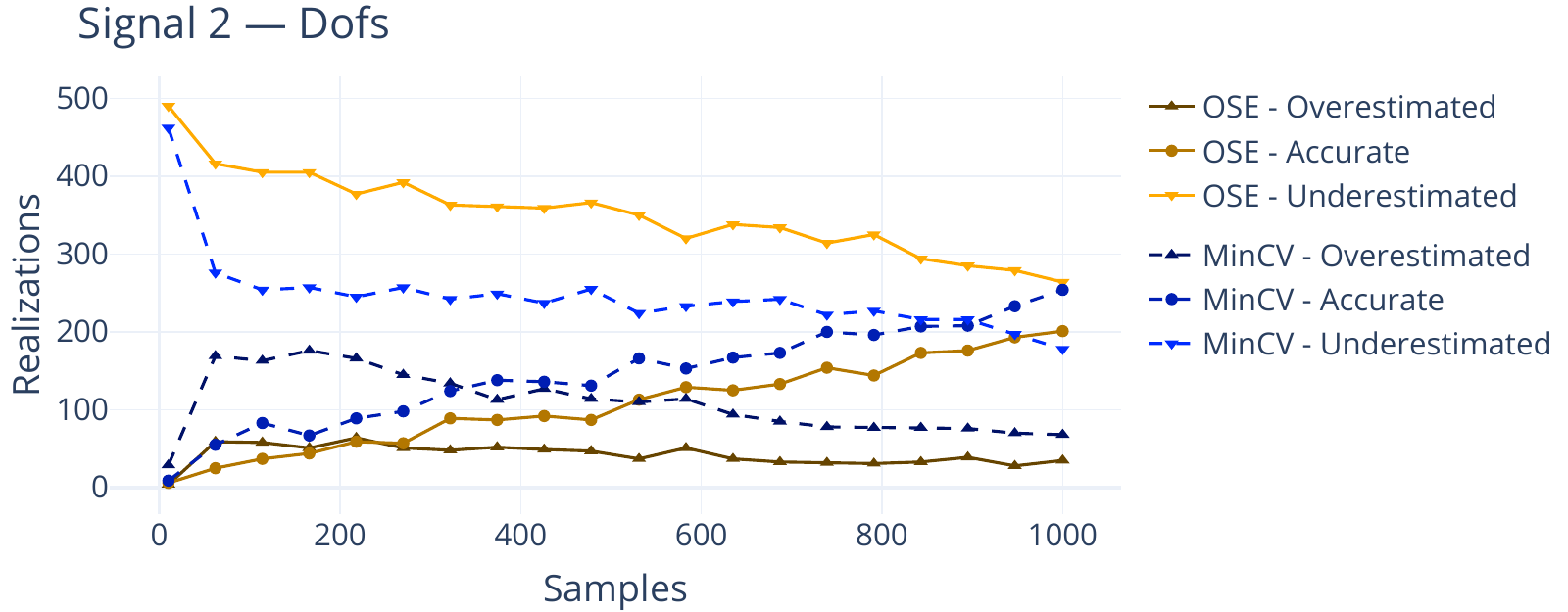}
  \end{subfigure}
  \\
  \begin{subfigure}{\textwidth}
    \includegraphics[width=0.49\textwidth]{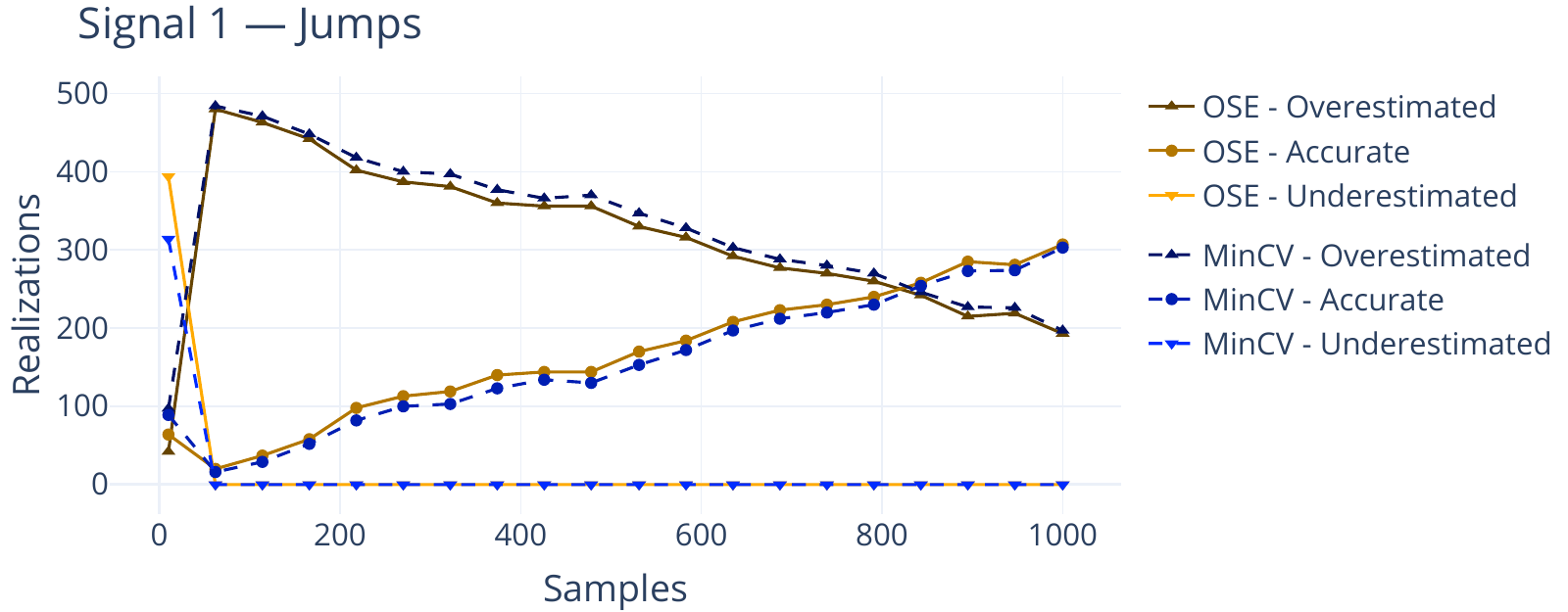}
    \includegraphics[width=0.49\textwidth]{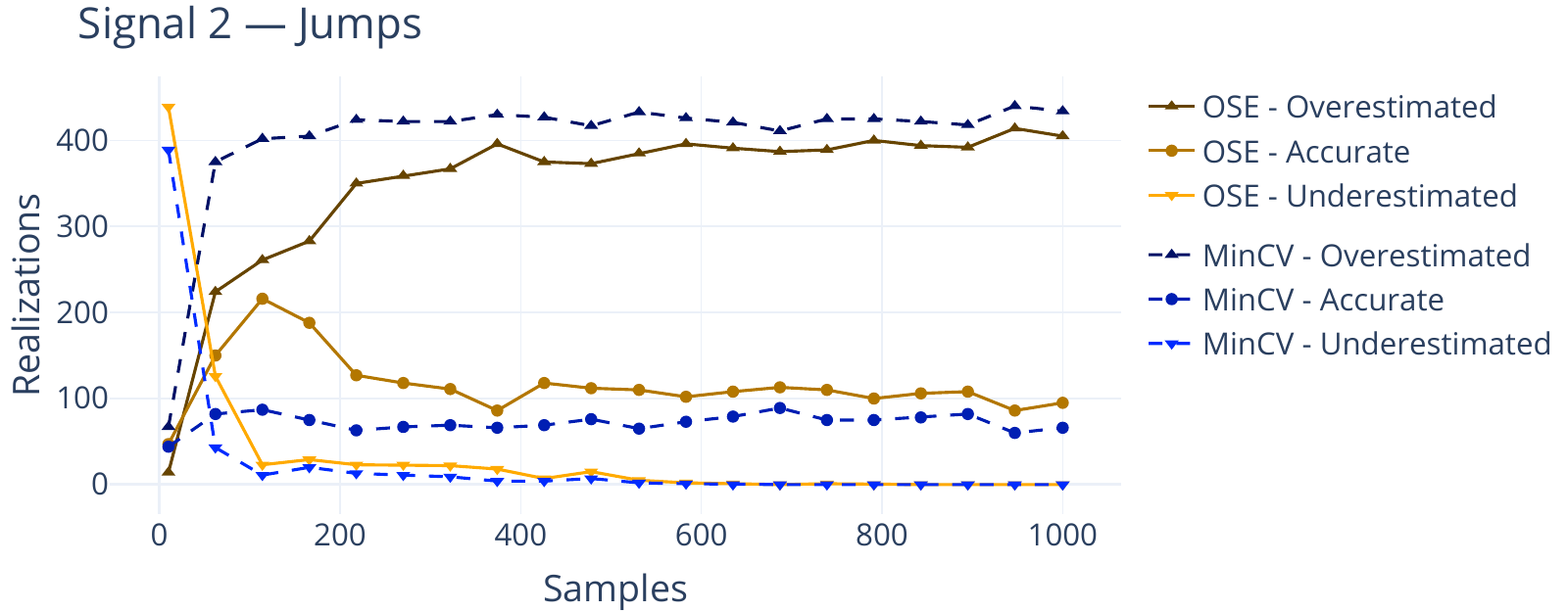}
  \end{subfigure}
  \caption{CV score based parameter selection
    for two different signals.
    \emph{Top row:} True signals. \emph{Middle row:}
    Count of realizations where dofs are underestimated,
    accurately estimated, or overestimated versus the number of samples.
    \emph{Bottom row:} Count of realizations where jumps are underestimated, accurately estimated, or overestimated versus the number of samples.
  }
  \label{main:fig:accuracy}
\end{figure}

\cite{pein2021cross} have investigated a related CV based parameter selection strategy on a related problem, namely two-fold cross validation (of even/odd indexed points) for piecewise constant regression. They found that an $L^2$ based cross validation score may result in an underestimation in configurations where a dominant change overshadows a nearby smaller jump, and that this is improved by using an $L^1$ based CV score.
To investigate if an $L^1$ variant of the rolling CV score shows a similar behaviour in the piecewise polynomial case,
we look at the CV scores as a function of the penalty factor $\gamma$ and  the length of the signal part ($(y_r)_{r=1, \ldots, R}$ for $R = 1,\ldots,n-1.$
  Figure~\ref{main:fig:cv_vs_time} shows the results for
  the quadratic CV score \eqref{main:eq:cv_score_intro} and the $L^1$ variant
  where the square is replaced by the absolute value.
  (Note that the data term in \eqref{main:eq:proposed_model_intro_poly}
  remains quadratic in both variants.)
  In the present examples, the curves of the optimal $\gamma$-values have a relatively similar shape, and in particular the final models estimated on the full data exhibit only  minor differences (the $L^2$ solution shows one extra break in the second signal, the $L^1$  solution two extra breaks).
  We also report the results analogous to Figure~\ref{main:fig:accuracy} for $L^1$ CV data term in the supplementary material.
  It can be observed that the absolute value based rolling CV gives better results in the first signal,
  while the quadratic variant performs better for the second signal.
  The empirical results indicate that neither variant demonstrates a definitive advantage over the other in the current setup.
  \begin{figure}
    \centering
    \begin{subfigure}{\textwidth}
      \includegraphics[width=0.49\textwidth]{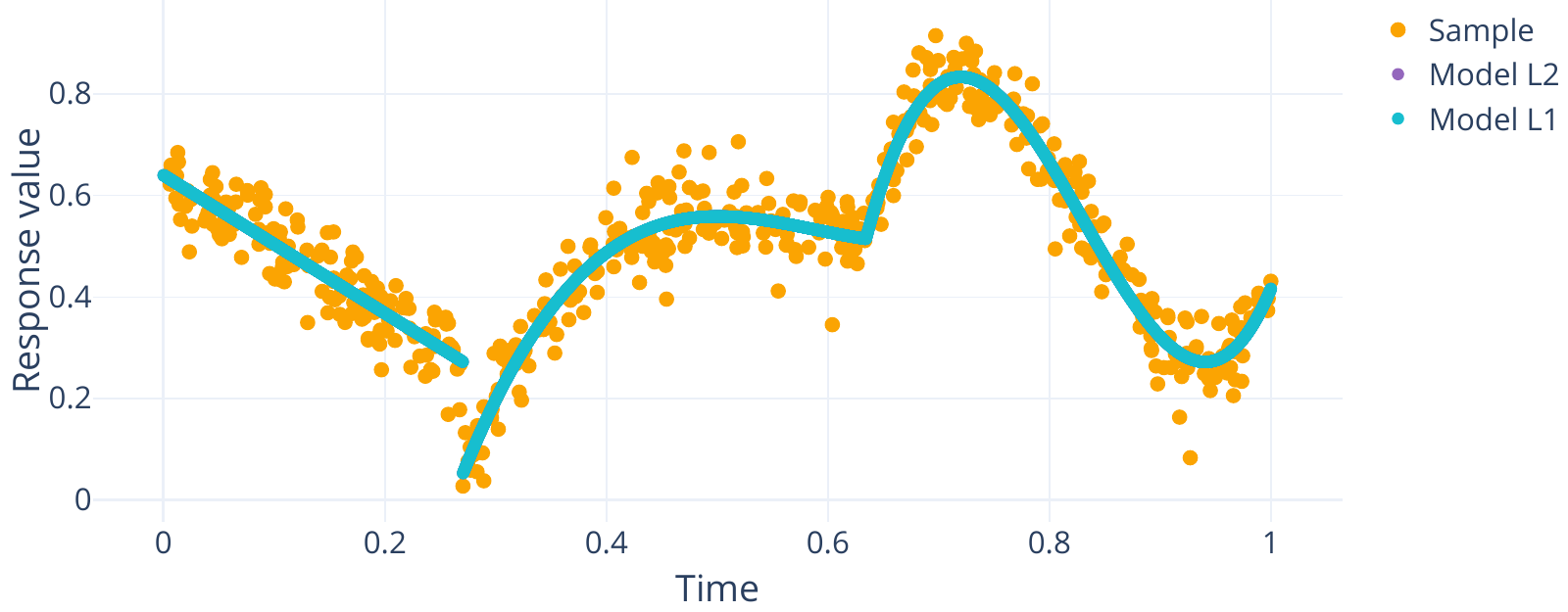} \hfill
      \includegraphics[width=0.49\textwidth]{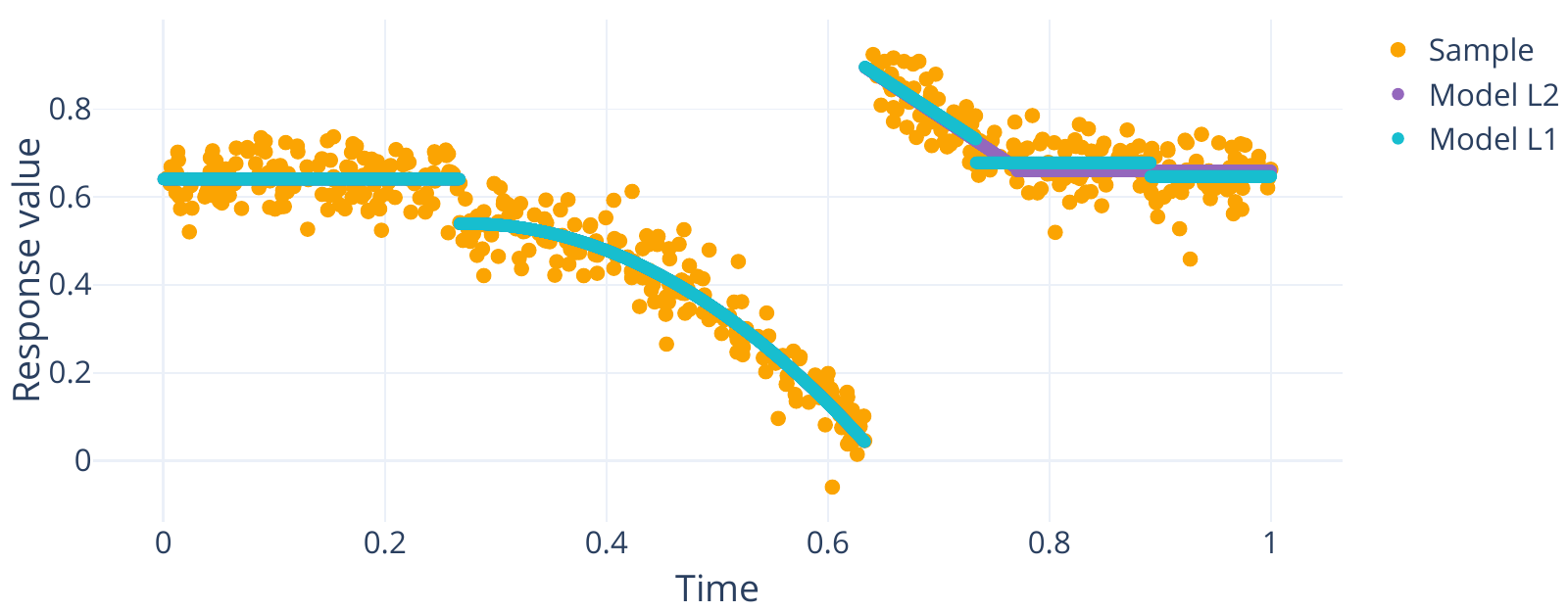}
    \end{subfigure}
    \\
    \begin{subfigure}{\textwidth}
      \includegraphics[width=0.46\textwidth]{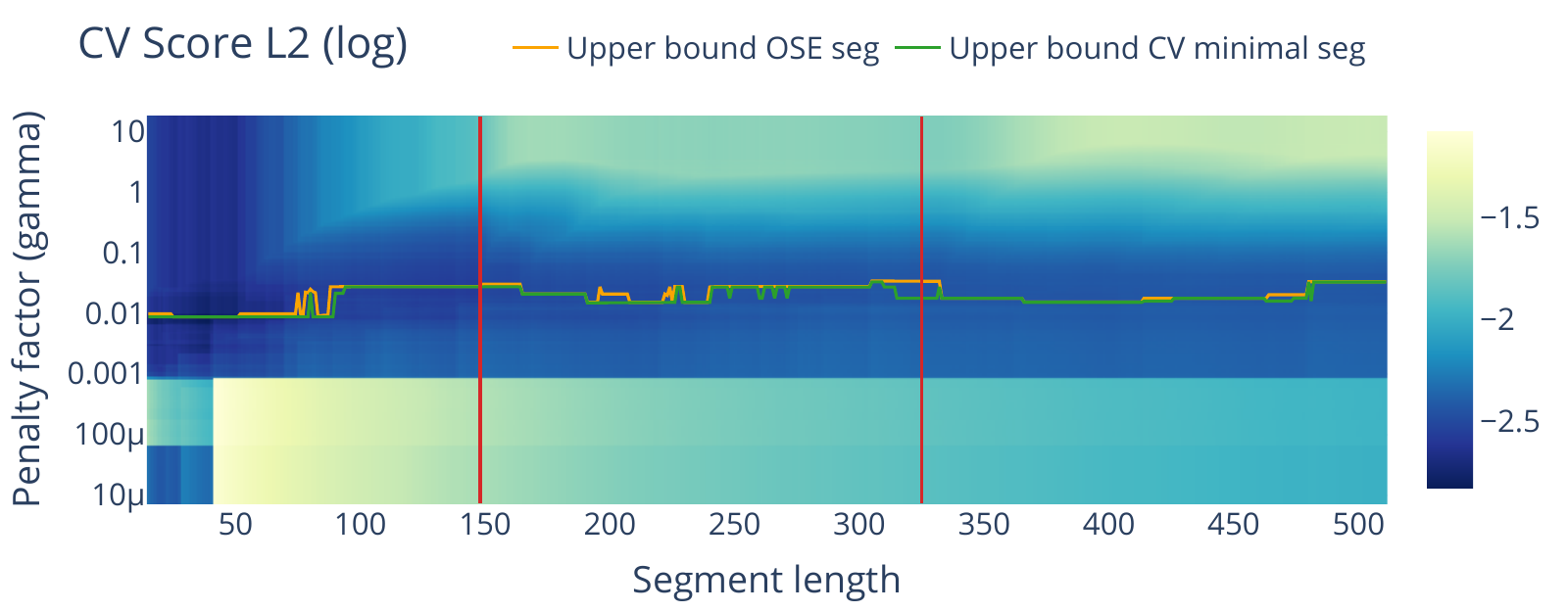}
      \hspace{0.04\textwidth}
      \includegraphics[width=0.46\textwidth]{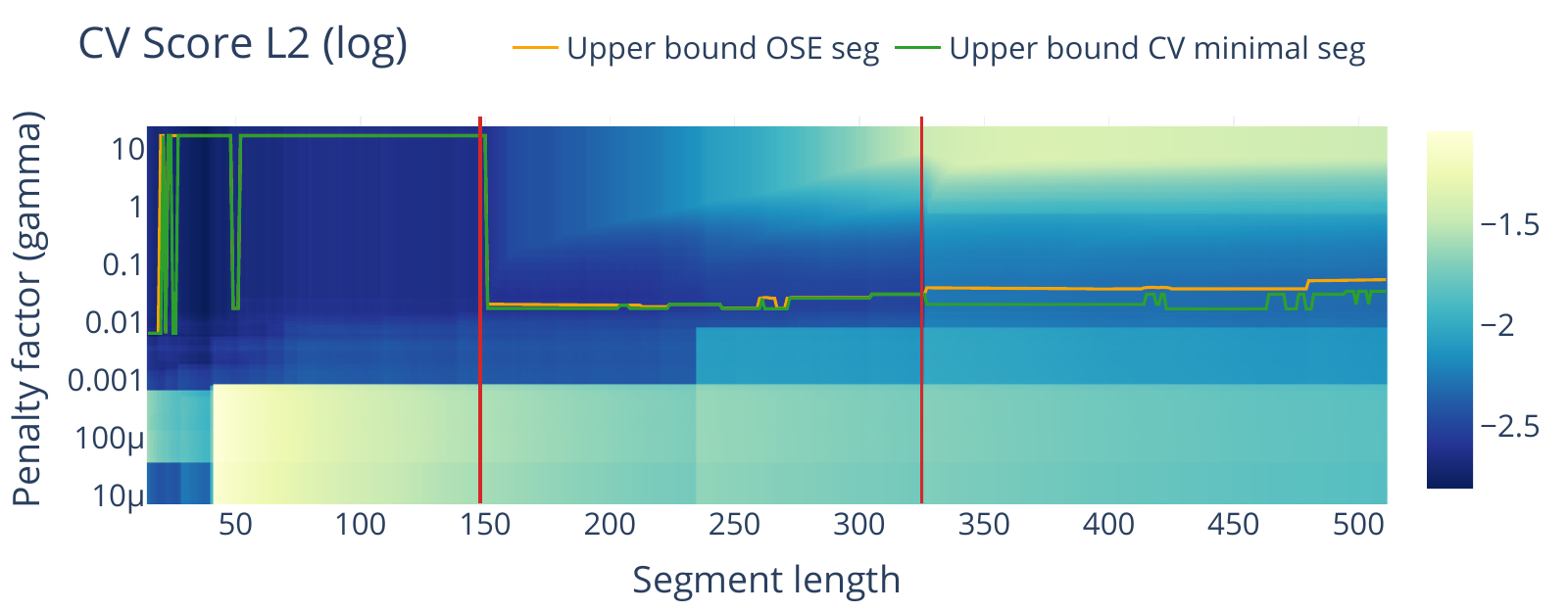}
    \end{subfigure}
    \\
    \begin{subfigure}{\textwidth}
      \includegraphics[width=0.46\textwidth]{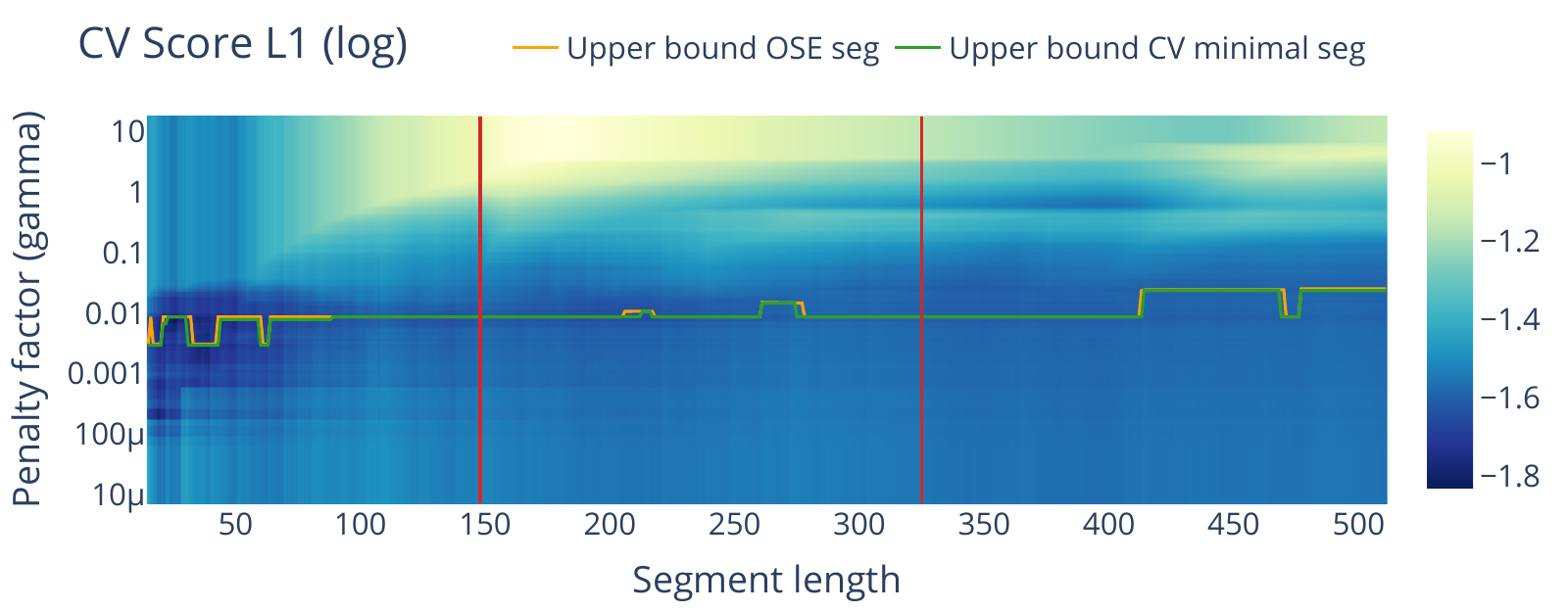}
      \hspace{0.04\textwidth}
      \includegraphics[width=0.46\textwidth]{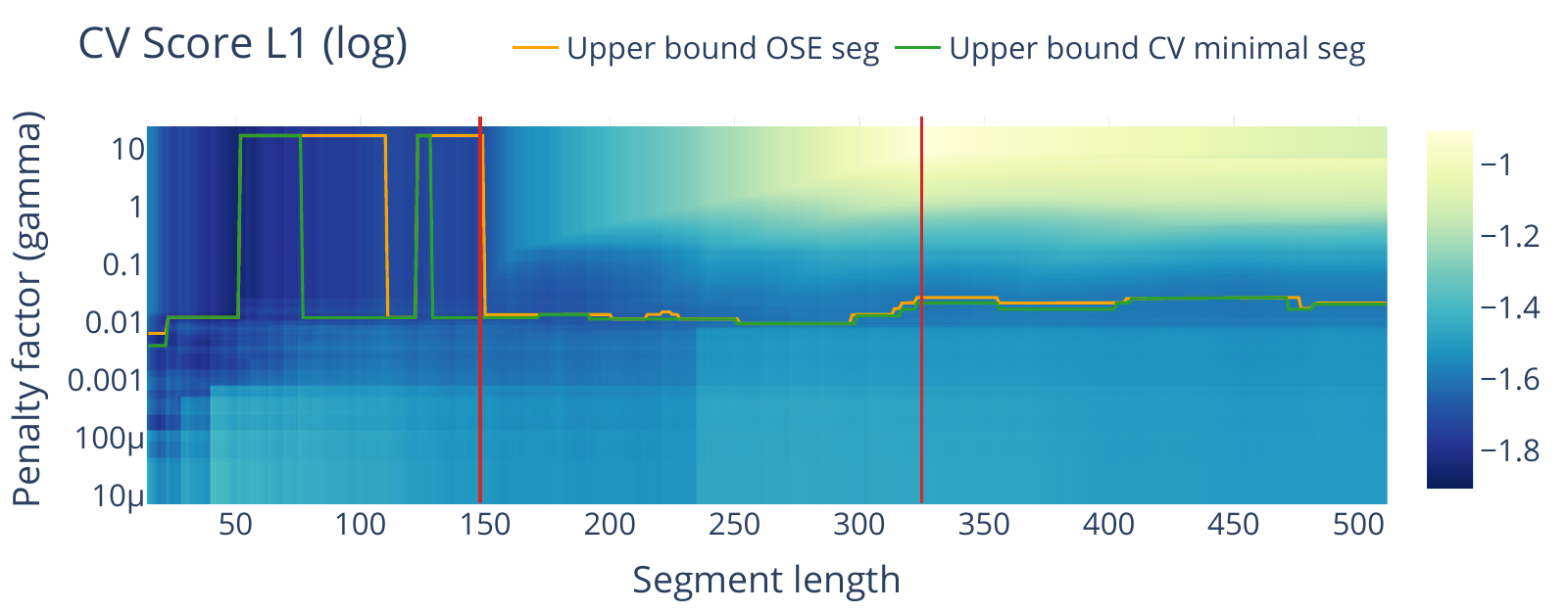}
    \end{subfigure}
    \caption{
      \emph{Top row:} Realizations with 500 samples (dotted)
      of the signals used in Figure~\ref{main:fig:accuracy},
      and OSE solutions using $L^1$ and $L^2$ CV scores (solid).
      \emph{Middle row:} CV score (color, on log scale) vs. penalty factor $\gamma$ and length of the signal part ($(y_r)_{r=1, \ldots, R}$ for $R = 1,\ldots,n-1$).
      The vertical red lines depict the true jumps.
      Curves from left to right depict the optimal parameters  $\gamma_{\mathrm{CV}}$ and $\gamma_{\mathrm{OSE}}.$
      In the second signal, these curve exhibits a clear drop after the first jump because the second segments needs more dofs.
      \emph{Bottom row:} The same experiment but using the absolute value in the CV score.
    }
    \label{main:fig:cv_vs_time}
  \end{figure}

  \subsection{Complete algorithm and  analysis of the computational complexity}\label{main:sec:total-time-complexity}

  Algorithm~\ref{alg:pseudocode-main} summarizes the main steps of the proposed algorithm that computes
  the regularization paths of \eqref{main:eq:proposed_model_intro}, the hyperparameters $\gamma_{\text{CV}}$ and $\gamma_{\text{OSE}},$
  and corresponding piecewise regression functions.
  \begin{algorithm}
    \begin{algorithmic}[1]
  \Input
  \State {Timeseries sample given by $(t_i)_{i=1,...,n}$, $(y_i)_{i=1,...,n}$}
  \State {Errors $d_S^\nu$ of local models for feasible segments $S$ and degrees of freedom $\nu$}
  \State {Optional user parameters: maximal total and local degrees of freedom $\nu_\text{total},$
  $\nu_\text{max}$}
  \EndInput
  \Output
  \State {Piecewise model $\omega^*_\gamma$ for timeseries data}
  \EndOutput

\item[]
  \For{feasible $r \in 1:n, \nu \in 1:r$} \Comment{Compute tabulation $B$ of Bellman values}
  \State{\parbox[t]{.9\linewidth}{Compute $B_r^\nu$ using the recurrence in \eqref{main:eq:bellman_recursion}.}}
  \EndFor
  \For{$r \in 1:n$}
  \State{\parbox[t]{.9\linewidth}{Use the tabulation $B$ and~\eqref{main:eq:tabulation_gamma} to compute the
      critical values of the piecewise affine function $\gamma \mapsto \min_{\nu \in 1:r} B_r^\nu + \gamma \nu$
      using algorithm described in supplementary Section~\ref{sup:sec:pointwise}.
      This provides the correspondence of the parameters $\gamma$ of \eqref{main:eq:proposed_model_intro}
      and $\nu$ of \eqref{main:eq:degree_of_freedom_partition_problem} for each partial data with
  indices $1:r.$}}
  \State{\parbox[t]{.9\linewidth}{Determine the solution of \eqref{main:eq:degree_of_freedom_partition_problem}
  for each $\nu \in 1:r$ by backtracking, resolving ambiguities by RMGT as follows:}}
  \State{Start at node $(r, \nu)$ in the cut graph}
  \While{there is an outgoing edge $(r', \nu')$ at the current node}
  \State{\parbox[t]{.9\linewidth}{Add segment $r'+1:r$ with
  $\nu - \nu'$ degrees of freedom to the current partition and set the the current node to $(r',\nu')$.}}
  \EndWhile
  \State{Add the final segment to the partition.}
  \EndFor
  \For{$r \in 1:n$}
  \State{\parbox[t]{.9\linewidth}{Use the correspondence between $\gamma$ and $\nu$ and the solutions to
      problem~\eqref{main:eq:degree_of_freedom_partition_problem} to determine the piecewise constant mapping
      $\gamma \mapsto \omega^{*}_{\gamma, r},$ and compute the piecewise constant mapping
      $\CV_r: \gamma \mapsto (\omega^{*}_{\gamma, r}(t_{r+1}) - y_{r+1})^2.$
      (Note: For this step, it is sufficient to determine $\omega^{*}_{\gamma, r}$ on the rightmost segment of its
  corresponding partition.)}}
  \EndFor
  \State{Compute the piecewise constant mapping $\CV = \frac{1}{n-1}\sum_{r=1}^{n-1} CV_r,$ and determine $\gamma_{\CV}$ and $\gamma_{\text{OSE}}.$}
  \State{Determine the  piecewise regression functions $\omega^*_\gamma$
  for the hyperparameters $\gamma = \gamma_{\CV}$ or $\gamma = \gamma_{\text{OSE}}.$}
\end{algorithmic}

    \caption{The main DofPPR algorithm}
    \label{alg:pseudocode-main}
  \end{algorithm}
  For further details we refer to the supplementary material and  to the commented source code.

  We prove the following central result on the proposed algorithm:
  \begin{theorem}\label{main:thm:complexity}
    The time complexity for the proposed algorithm is in
    $\O(mn \max\{ m, n \Phi(m,n), n^2 \})$
    where  $n \in \N$ is the length of the input timeseries, $m \in \N$ is the maximal number of degrees of freedom to consider for each segment -- so the maximal local model complexity,
    $\Phi(m,n)$ is the cost of fitting a model with $m$ degrees of freedom to data of length $n$.

    In particular, the time complexity is $\O(n^3 m^2)$
    whenever $\Phi \in \O(mn).$
    For polynomials with least squares errors the time complexity is $\O(n^3 m)$.
  \end{theorem}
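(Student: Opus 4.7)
The plan is to dissect Algorithm~\ref{alg:pseudocode-main} into its individual stages, bound each stage's cost in terms of $n$, $m$, and $\Phi(m,n)$, and combine them into the claimed asymptotic maximum.

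First I would bound the precomputation of the local fit errors $d_S^\nu$: there are $O(n^2)$ feasible segments $S$ and $O(m)$ admissible degrees of freedom per segment, so $O(n^2 m)$ local fits are needed, each of cost at most $\Phi(m,n)$, contributing $O(n^2 m\,\Phi(m,n)) = O(mn\cdot n\Phi(m,n))$. Next I would bound the Bellman-table construction from Theorem~\ref{main:thm:bellman-recursion}: there are $O(nm)$ feasible entries $(r,\nu)$ and each is evaluated as a minimum over $O(nm)$ pairs $(l,p_R)$, giving $O(n^2 m^2)$; using $m\le n$ this is absorbed into $O(mn\cdot n^2)$. Computing the regularization paths invokes the sorted-input pointwise-minimum routine of Section~\ref{sup:sec:pointwise} once per $r$ on $O(m)$ affine functions, for $O(nm)$ in total. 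Backtracking in the cut graph traverses at most $O(n)$ edges per pair $(r,\nu)$ in $O(1)$ work each, yielding $O(n^2 m)$. Finally, the rolling-CV stage visits $O(m)$ breakpoints per $r$ and requires only a precomputed-table lookup followed by a point evaluation on the rightmost segment, contributing at most $O(nm^2)=O(mn\cdot m)$. The final pass selecting $\gamma_{\mathrm{CV}}$ and $\gamma_{\mathrm{OSE}}$ and assembling $\omega^*_\gamma$ is dominated by the above. Taking the asymptotic maximum across all stages yields the main claim $O(mn\max\{m,\,n\Phi(m,n),\,n^2\})$.

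The two specializations then follow by substitution. If $\Phi\in O(mn)$, then $n\Phi(m,n)=\Theta(mn^2)$ dominates both $m$ and $n^2$ inside the maximum, giving $O(mn\cdot mn^2) = O(n^3 m^2)$. For polynomial least squares the sharper bound $O(n^3 m)$ requires one additional observation beyond the generic analysis: by exploiting incremental updates of the normal equations (equivalently, an orthogonal-polynomial recurrence) as one extends an interval by a single data point and as one raises the polynomial degree, all errors $d_{l:r}^\nu$ along each fixed starting index $l$ can be produced in amortized constant work per triple $(l,r,\nu)$. This reduces the fit-precomputation stage to $O(mn^2)$; the Bellman stage, still at $O(n^2 m^2)\subseteq O(mn^3)$, then dominates, and the total becomes $O(mn^3)=O(n^3 m)$.

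The principal obstacle is precisely this polynomial specialization. The generic plug-in with $\Phi(m,n)=O(mn)$ gives only $O(n^3 m^2)$, so the improvement to $O(n^3 m)$ cannot be extracted from a black-box fit cost and must rest on an explicit amortization across the family of nested intervals and polynomial degrees. All other components of the proof -- enumerating Bellman entries, counting edges in the cut graph, composing piecewise-constant CV summands -- are routine bookkeeping once Theorem~\ref{main:thm:bellman-recursion} and the sorted-input pointwise-minimum subroutine are in place. I would therefore defer the details of the incremental polynomial fit updates to the supplementary material, where the standard normal-equation recursion can be written out explicitly.
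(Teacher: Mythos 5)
Your stage-by-stage accounting of Algorithm~\ref{alg:pseudocode-main} is the right approach and all of your final bounds are correct, but two of your intermediate counts are off, and it is worth fixing them even though neither affects the conclusion. First, the Bellman table does not have $\O(nm)$ feasible entries: the index $\nu$ in $B_r^\nu$ is the \emph{total} number of degrees of freedom accumulated over all segments of $1{:}r$, so it ranges up to $r$, not up to the local cap $m$. There are $\O(n^2)$ entries, each a minimum over $\O(n)$ choices of $l$ and $\O(m)$ choices of $p_R$, giving $\O(n^3 m)=\O(mn\cdot n^2)$ for the Bellman stage --- which is exactly the $n^2$ term in the stated maximum, so you land on the right absorbed bound, but via an undercount rather than a computation. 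The same conflation of $\nu$ with $m$ appears in your backtracking and rolling-CV counts (correctly $\O(n^3)$ and $\O(n^2)$ pieces respectively); all remain dominated by $\O(mn\cdot n^2)$, so the main claim and the $\O(n^3m^2)$ specialization survive intact.

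Second, for the polynomial case you claim amortized \emph{constant} work per triple $(l,r,\nu)$, i.e.\ $\O(n^2m)$ for all residuals. That is stronger than what is needed and stronger than what the paper's residual-computation scheme (supplementary Section~\ref{sup:sec:efficient-poly}) delivers: the QR-updating algorithm there, based on a data recursion and a degree recursion via Givens rotations, computes all residuals in $\O(n^2m^2)$, i.e.\ amortized $\O(m)$ per triple, because extending a segment by one point costs $\O(m^2)$ (one rotation per column acting on an $\O(m)$-wide triangular factor) and yields the residuals for all $m$ local degrees at once. Your proposed route via incremental normal-equation or orthogonal-polynomial updates is not obviously $\O(1)$ per triple, since adding a data point changes the underlying discrete inner product and hence the entire orthogonal basis. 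The good news is that the weaker $\O(n^2m^2)$ bound suffices: with $m\leq n$ it is dominated by the $\O(n^3m)$ Bellman stage, so the $\O(n^3m)$ total for least-squares polynomials follows without the amortized-constant claim. I recommend replacing that claim with a citation of the $\O(n^2m^2)$ bound and letting the Bellman stage carry the final estimate.
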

  The proof is given in the supplementary material.

  \begin{remark}
    Theorem~\ref{main:thm:complexity} contains the $\O(n^3)$ time complexity result for the piecewise constant case  \cite[Proposition 2]{friedrich2008complexity} as  the special case $m=1$.
  \end{remark}

  \section{Implementation, simulation study and applications}\label{main:sec:experiments}

  \subsection{Implementation and experimental setup}

  The implementation may be found at \url{https://github.com/SV-97/pcw-regrs}. The published packages are available at \url{https://crates.io/crates/pcw_regrs} and \url{https://pypi.org/project/pcw-regrs-py/}. The core part of the algorithm is implemented in Rust to provide a high-performance basis. On top of this we provide a native Python extension for ease of use.
  Multicore parallel processing is used to speed up the computations of the residual errors.
  The experiments were conducted on a Linux
  workstation with an AMD Ryzen 9 5900X CPU (4.6\GHz) and with 64GB RAM.

  If not stated differently,
  we use in the least squares instance of the
  DofPPR model \eqref{main:eq:proposed_model_intro_poly} with $\Omega^{\leq \nu}$ being the space of polynomials of maximum degree $\nu-1.$
  In addition to the search space reduction stated in
  \eqref{main:eq:numax_corr},
  we  exclude polynomials of order higher than $10$
  because polynomial regression with high degrees
  may lead to numerical instabilities.
  So in practice, we use $\nu_{\mathrm{max}}(I) = \min(\max(1, \ElemCount{I} - 1), 11).$
  Furthermore, we use by default uniform weights $w = (1, \ldots, 1),$ and the hyperparameter choice $\gamma_{\text{OSE}}$ (cf. \eqref{main:eq:CV_OSE}).
  The method accepts an optional parameter
  $\nu_{\mathrm{total}} \leq n$
  which imposes the additional upper bound on the
  total number of degrees of freedom $\sum_{I \in P} \lambda_I \leq \nu_{\mathrm{total}}$; it is useful when more parsimonious results are desired.
  As mentioned in the beginning of Section \ref{main:ssec:uniqueness},
  the (continuous) breakpoints of the piecewise polynomials can be placed anywhere between two data points belonging to two different segments. We here use the point that minimizes the  distance on the ordinate between the lefthand and the righthand polynomial, and take the midpoint if there is no unique solution.

  In the following, we frequently refer to the Turing change point detection (TCPD) benchmark data set which is specifically designed for changepoint detection studies \citep{paper:an-evaluation-of-change-point-detection-algorithms}. This dataset comprises 37 time series, each of which has been annotated by five human experts to establish the ground truth for changepoint locations.
  The paper of \cite{paper:an-evaluation-of-change-point-detection-algorithms}
  also provides a benchmark that compares 13 diverse changepoint detection algorithms on this dataset.
  The benchmark is designed for unsupervised detection so there is no split into training and test data set.
  More details will be given further below and in the paper of \cite{paper:an-evaluation-of-change-point-detection-algorithms}.

  \subsection{Results on simulated data}

  \begin{figure}[!t]
    \newcommand{\myincludegraphics}[2]{
      \begin{subfigure}[b]{0.45\textwidth}
        \centering
        \includegraphics[width=\textwidth, trim=0 0 150 0, clip]{#1}
        \caption{#2}
      \end{subfigure}
    }
    \myincludegraphics{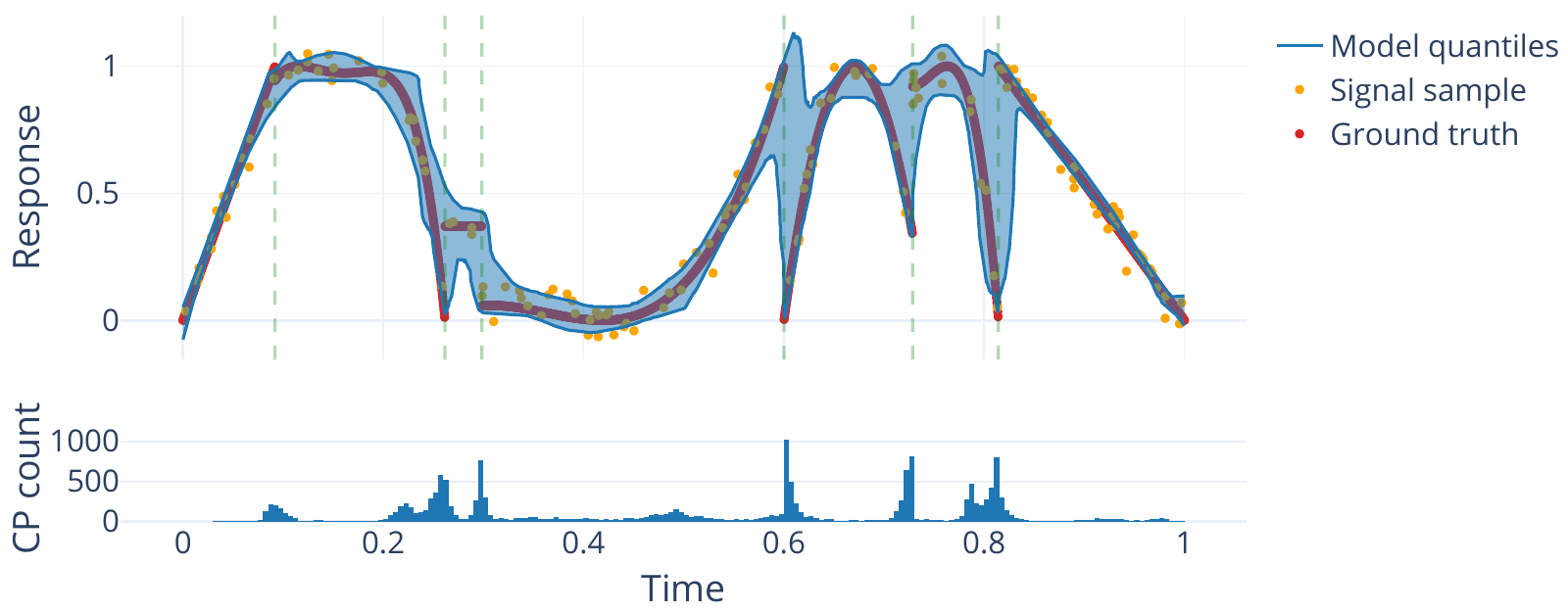}{$N = 150$}\hfill
    \myincludegraphics{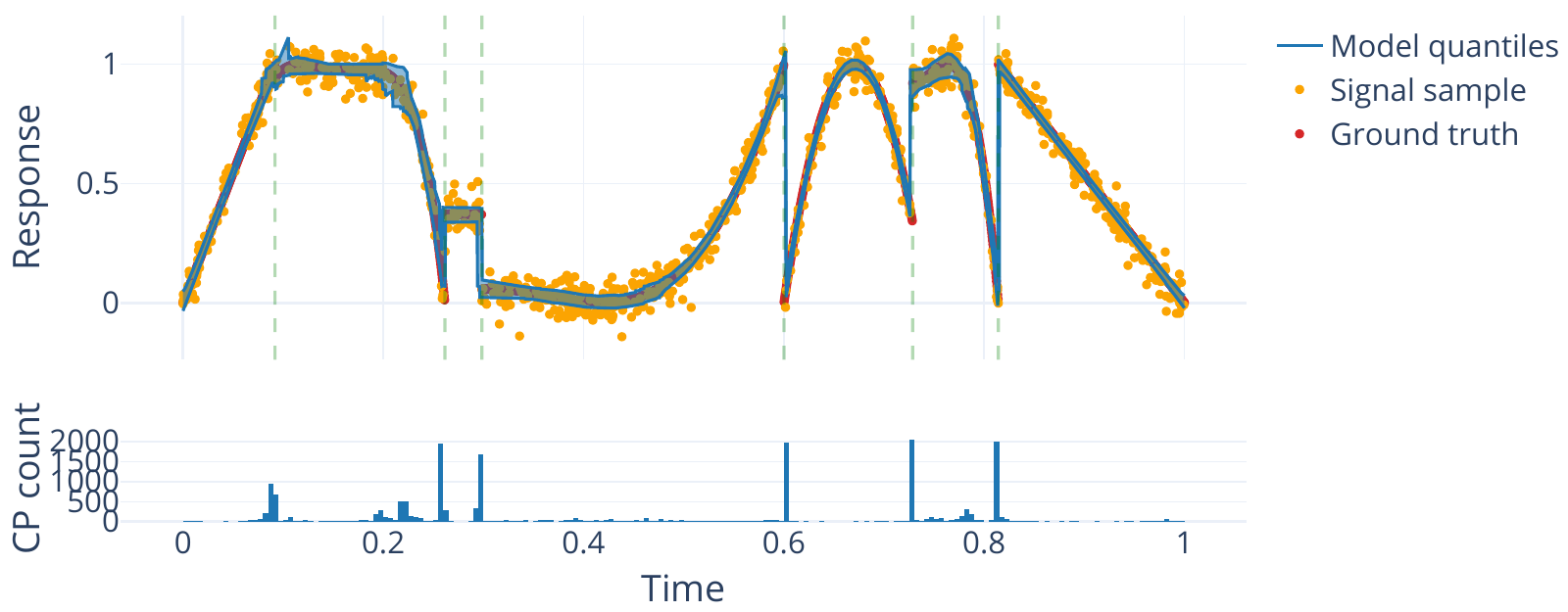}{$\sigma=0.05$}
    \myincludegraphics{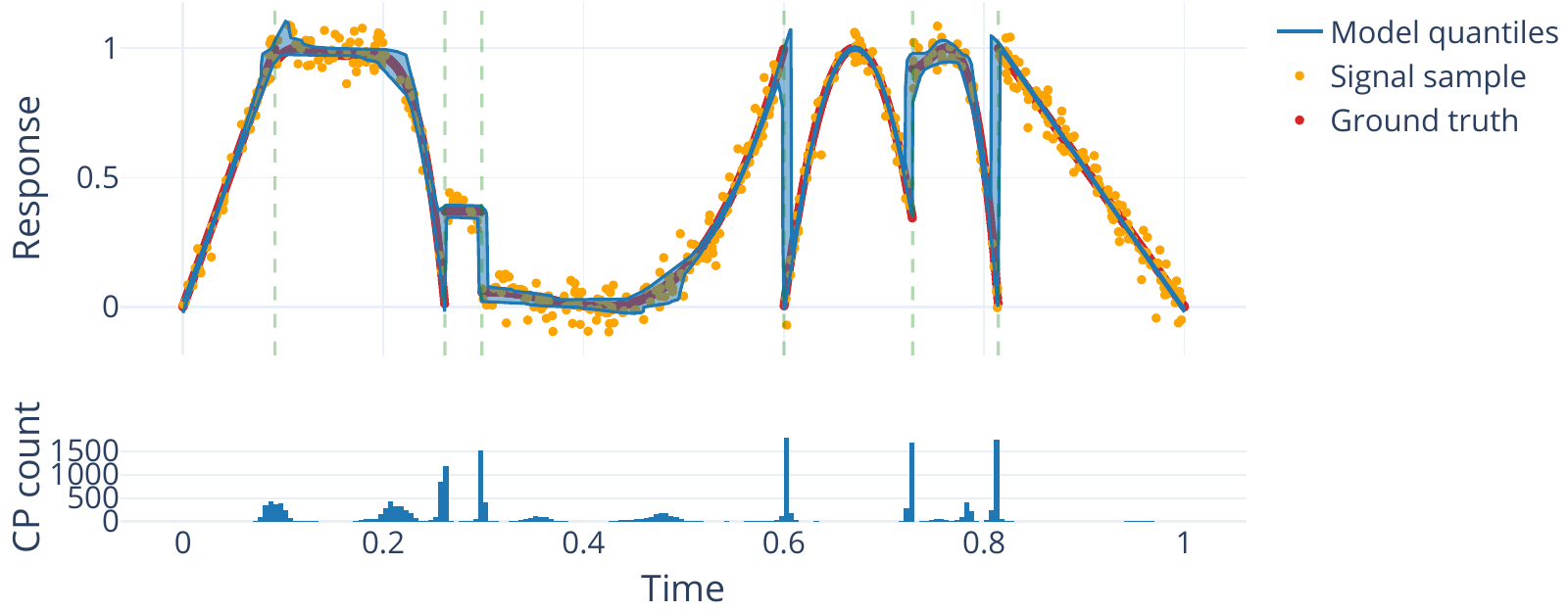}{$N = 500$}\hfill
    \myincludegraphics{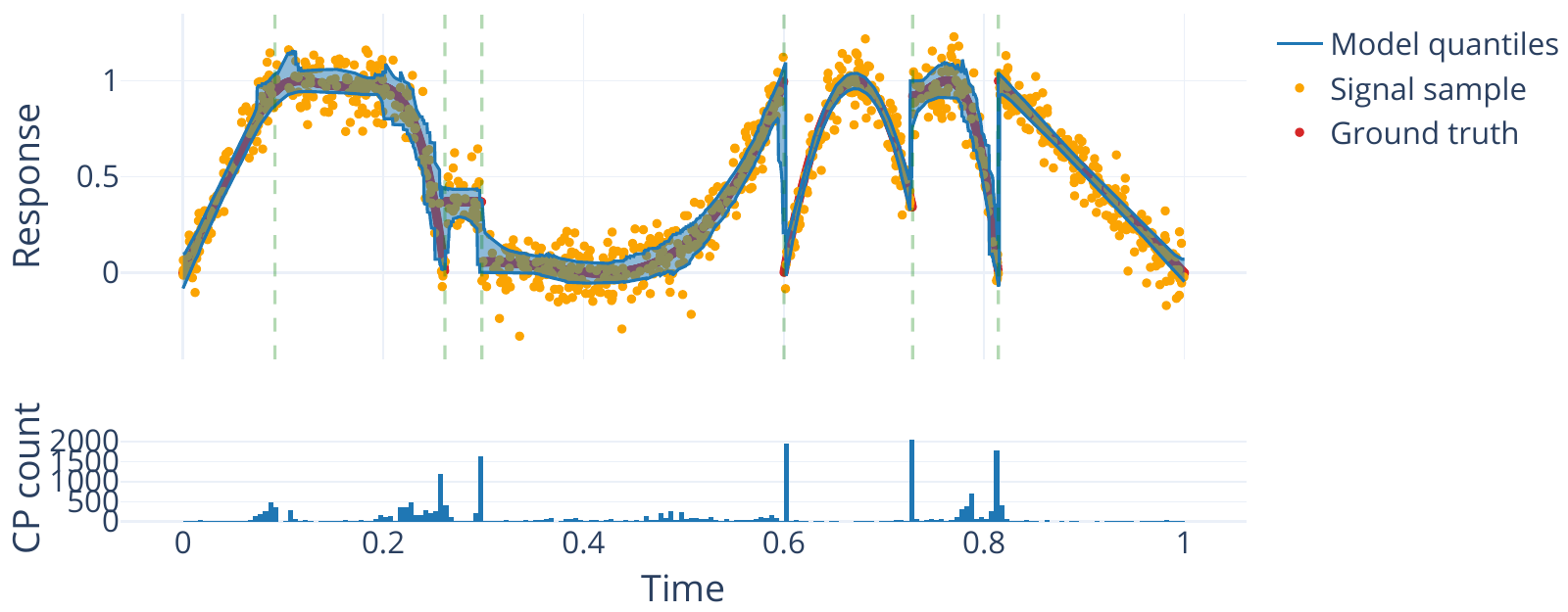}{$\sigma=0.1$}
    \myincludegraphics{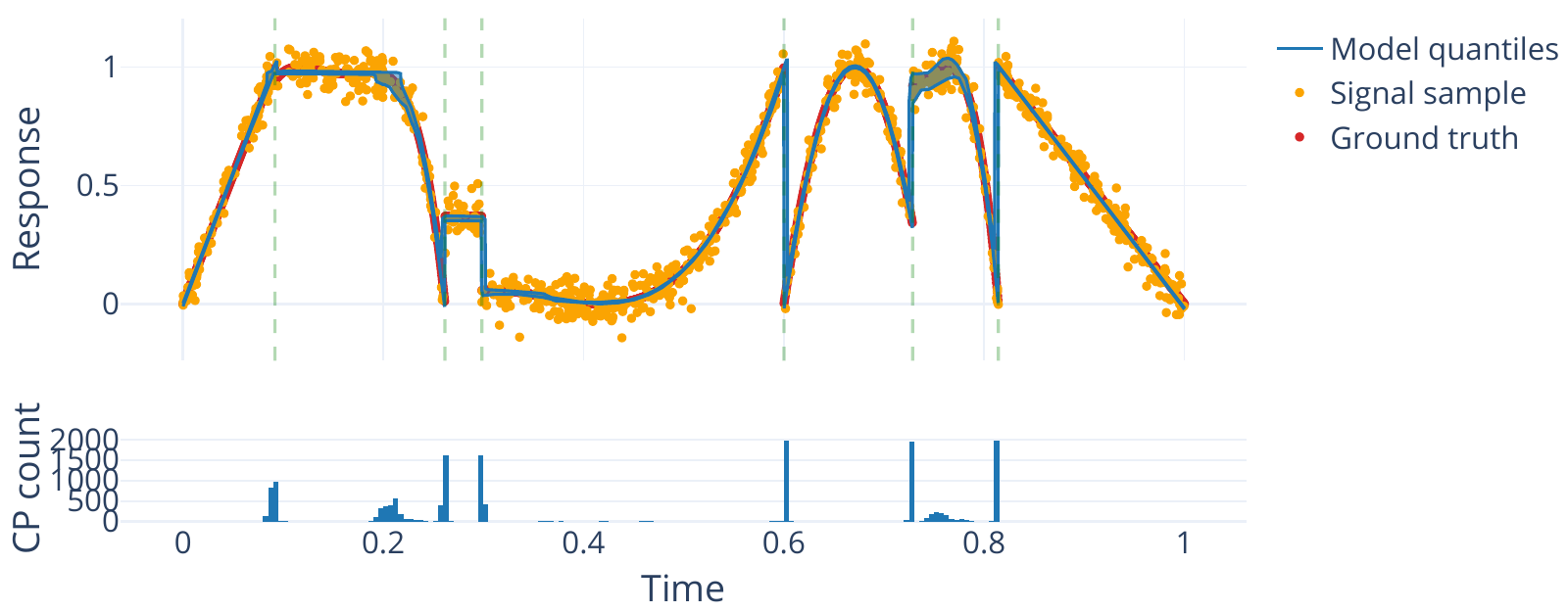}{$N=1000$}\hfill
    \myincludegraphics{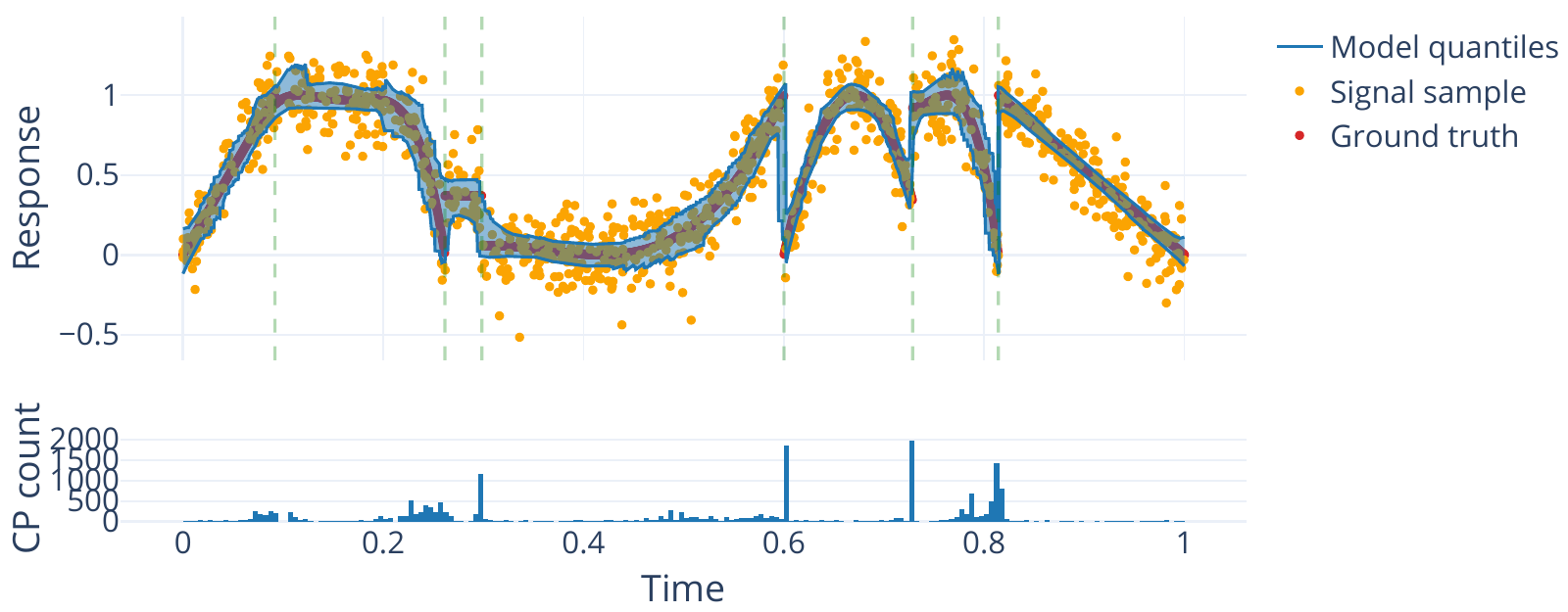}{$\sigma=0.15$}

    \caption{
      \emph{Left column:} We vary the number of samples in our timeseries samples from $150$ over $500$ to $1000$. The noise is kept fixed at $\sigma=0.05$ throughout. As we can see the pointwise accuracy clearly increases quite quickly. The histograms show that the spikes at the true changepoints clearly grow with the number of samples. One point that remains problematic for a lot of noise realizations is the supposed changepoint at about $0.2$ which results from the quite high degree of the ground truth admitting a quite simple decomposition into two low degree segments.
      \emph{Right column:} We vary the magnitude of the noise $\epsilon$ in our timeseries samples from $\sigma=0.05$ (top) to $\sigma=0.15$ (bottom) in increments of $0.05$. The number of samples is kept fixed at $500$ throughout. As we can see the pointwise accuracy clearly suffers as noise increases but all models remain sensible. The spike close to the first changepoint is likely a result of the models placing the first CP slightly too far to the right. We see that in particular the changepoints where the ground truth is close to being continuous aren't recognized as well at higher noise levels while the others are still being recognized quite well.
      \emph{Legend:} The orange dots depict a sample signal, the red line is the ground truth,
      the blue shadings the pointwise quantiles over 2000 realizations.
    }
    \label{main:fig:synth-poly-len-noise}
  \end{figure}

  The first timeseries we consider is a synthetic piecewise polynomial signal with Gaussian noise. We sample from $p(T) + \epsilon$ where
  $p$ is a synthetically generated piecewise polynomial function on $[0,1]$ with local degree no more than 10 with 6 jump points,
  $\epsilon$ is a normally distributed random variable with mean $0$ and standard deviation $\sigma$,
  and $T$ is a $[0,1]$-uniformly distributed random variable.
  (The full expression for the $p$ displayed can be found in supplementary section \ref{sup:sec:expression-piecewise-poly}.)
  The results are shown in Figure~\ref{main:fig:synth-poly-len-noise}.
  Each plot is based on sampling the signal distribution 2000 times. The top part of each subplot shows the ground truth signal $p$ in red, an exemplary sample in orange and the pointwise 2.5\% to 97.5\% quantiles of the resulting models as shaded blue regions. The bottom part of each subplot shows a histogram (with 200 bins) of the changepoints across all 2000 realizations.

  Figure~\ref{main:fig:quality_control_1} illustrates the application of our approach to a simulated signal from the TCPD dataset. This signal was  generated such that it contains a single changepoint at index 146. As described in the accompanying documentation, the noise characteristics differ before and after this changepoint, with Gaussian noise preceding it and uniform noise following it.
  In the process, five human annotators independently identified changepoints at indices 143, 144 (three times), and 146, respectively.
  The outcome of the DofPPR analysis reveals three segments in total: two constant segments and one linear segment. The predicted changepoints are located at indices 97.5 and 143.
  \begin{figure}
    \centering
    \includegraphics[width=.7\textwidth]{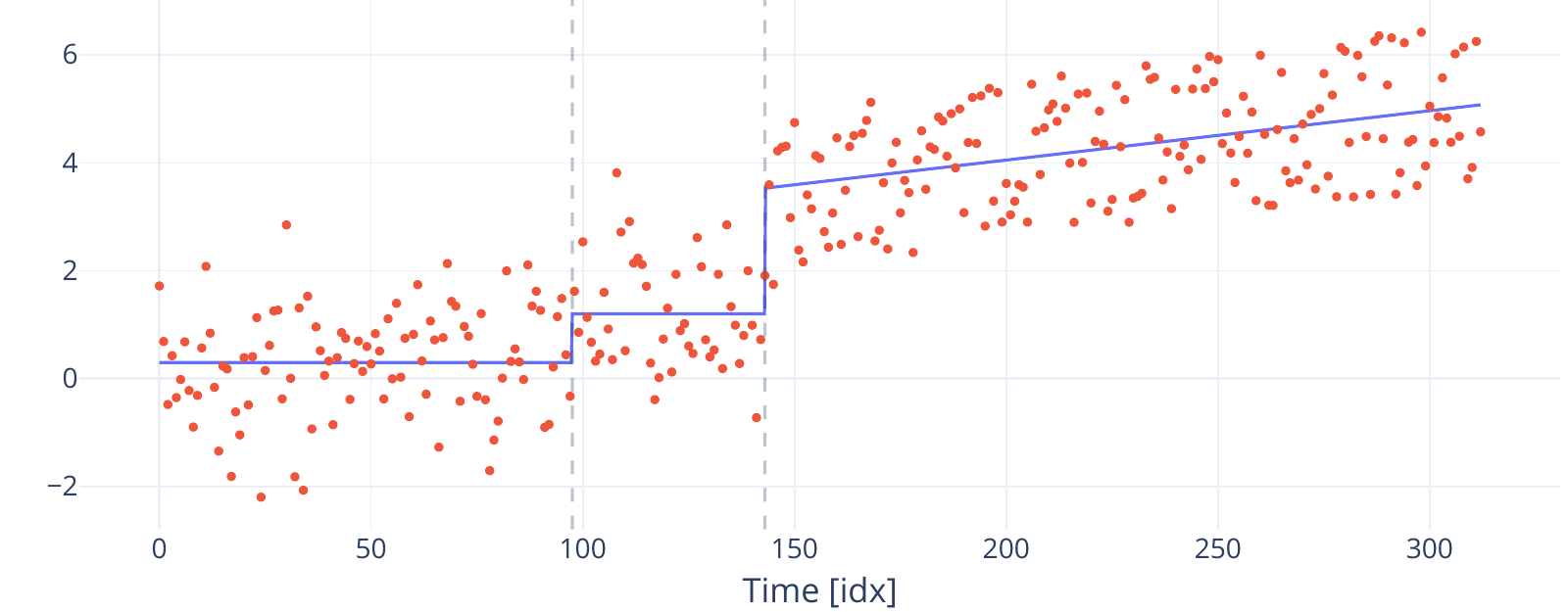}
    \caption{The  signal \texttt{quality\_control\_1} from the TCPD dataset was generated such that is has a single changepoint at index 146. According to the accompanying documentation the noise prior to this changepoint is Gaussian and after the changepoint it is uniform.
      The five human annotators selected the indices 143, 144 ($3 \times$), and 146 as changepoints, respectively.
    The DofPPR result (blue) consists of three segments: two constant and a linear one. The predicted changepoints are at 97.5 and 143.}
    \label{main:fig:quality_control_1}
  \end{figure}

  \begin{figure}[!t]
    \newcommand{\myincludegraphics}[2]{
      \begin{subfigure}[b]{0.45\textwidth}
        \centering
        \includegraphics[width=\textwidth, trim=0 0 150 0, clip]{#1}
        \caption{#2}
      \end{subfigure}
    }
    \myincludegraphics{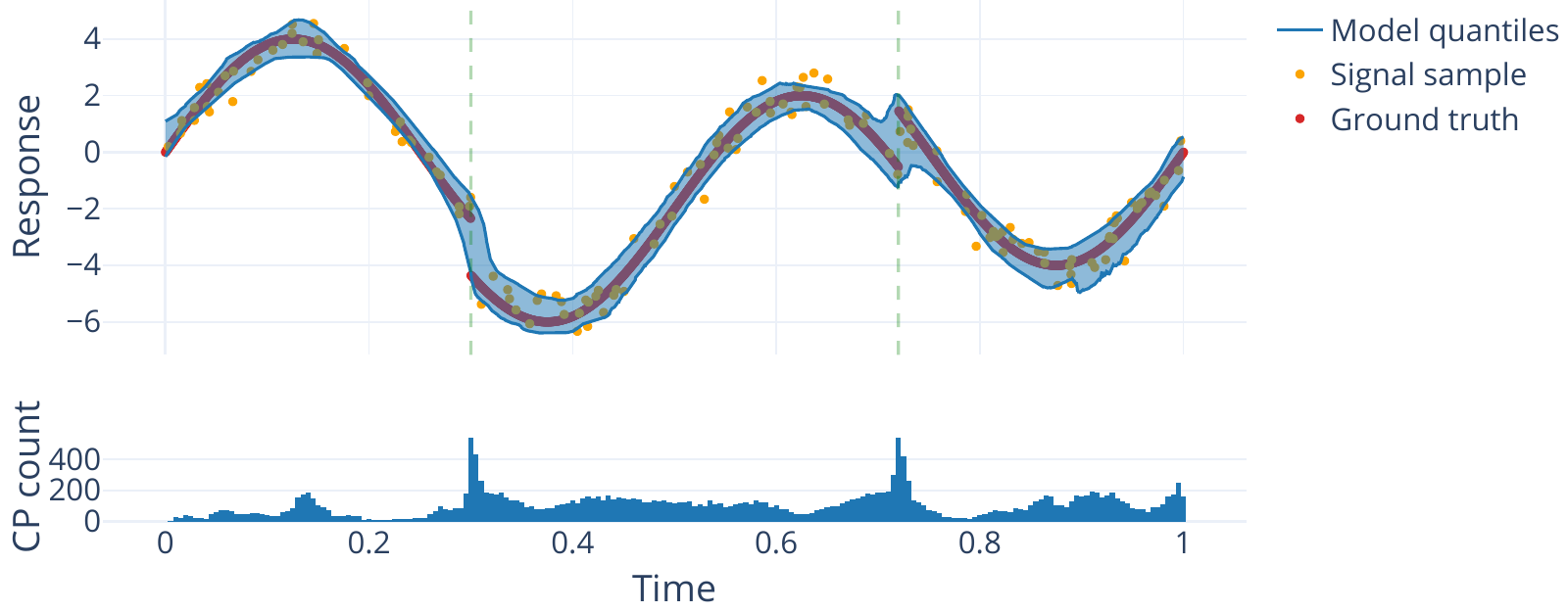}{$N = 150$}\hfill
    \myincludegraphics{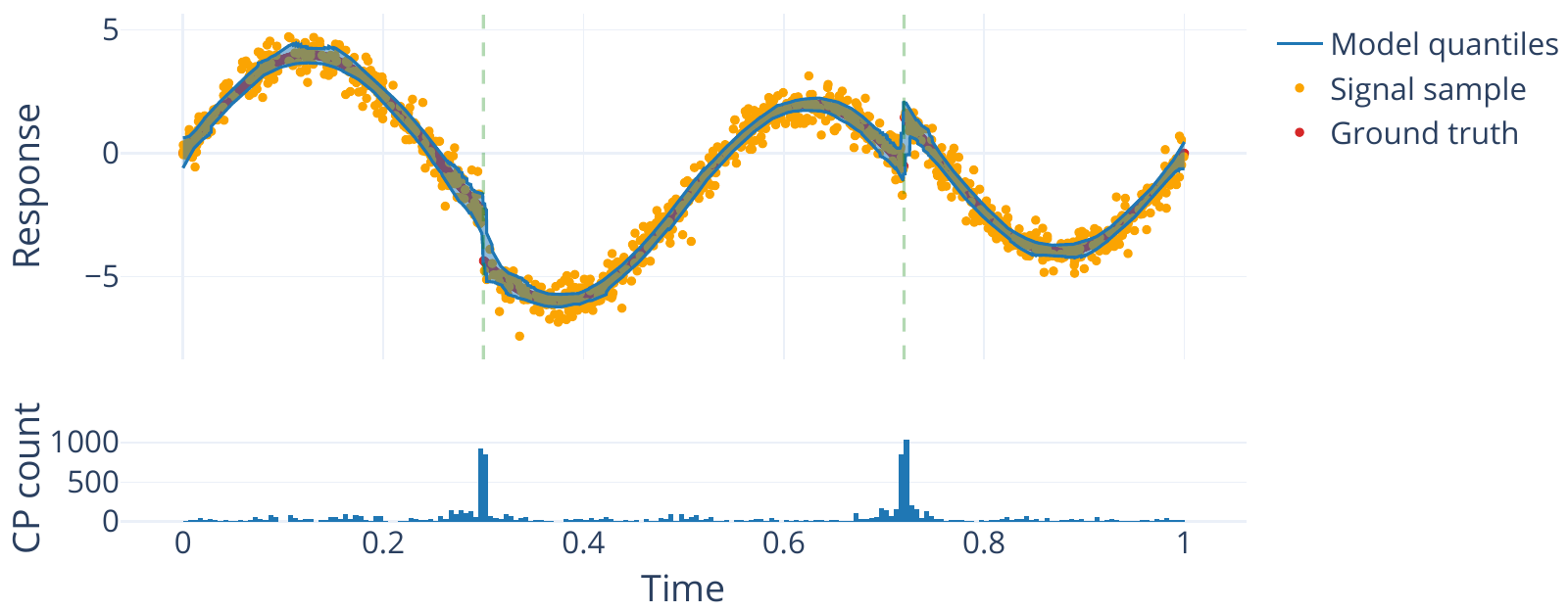}{$\sigma=0.5$}
    \myincludegraphics{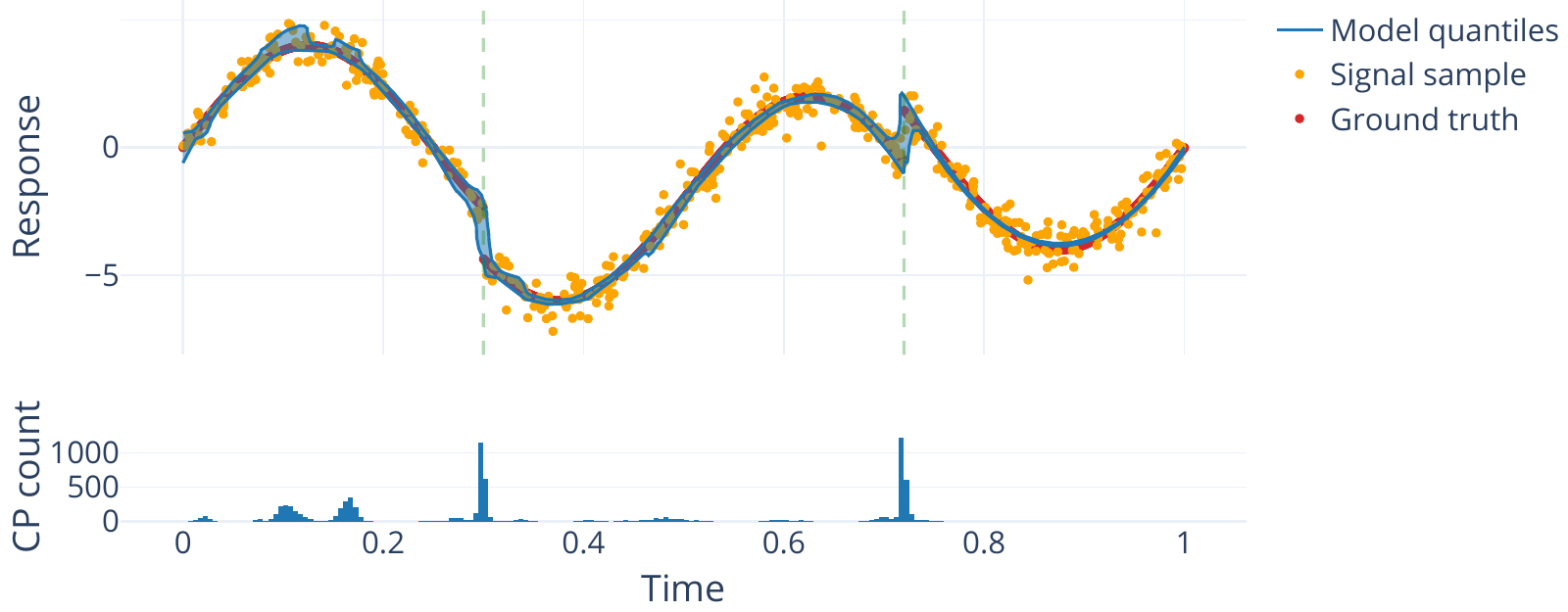}{$N = 500$}\hfill
    \myincludegraphics{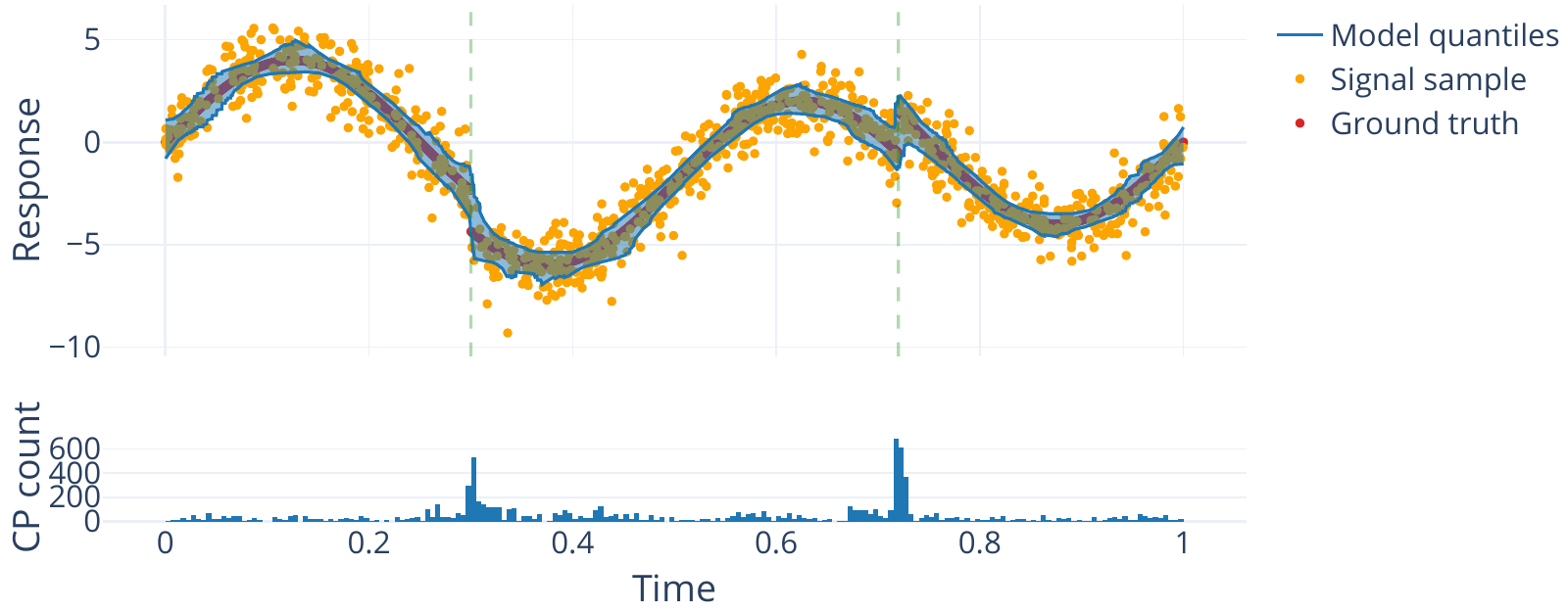}{$\sigma=1$}
    \myincludegraphics{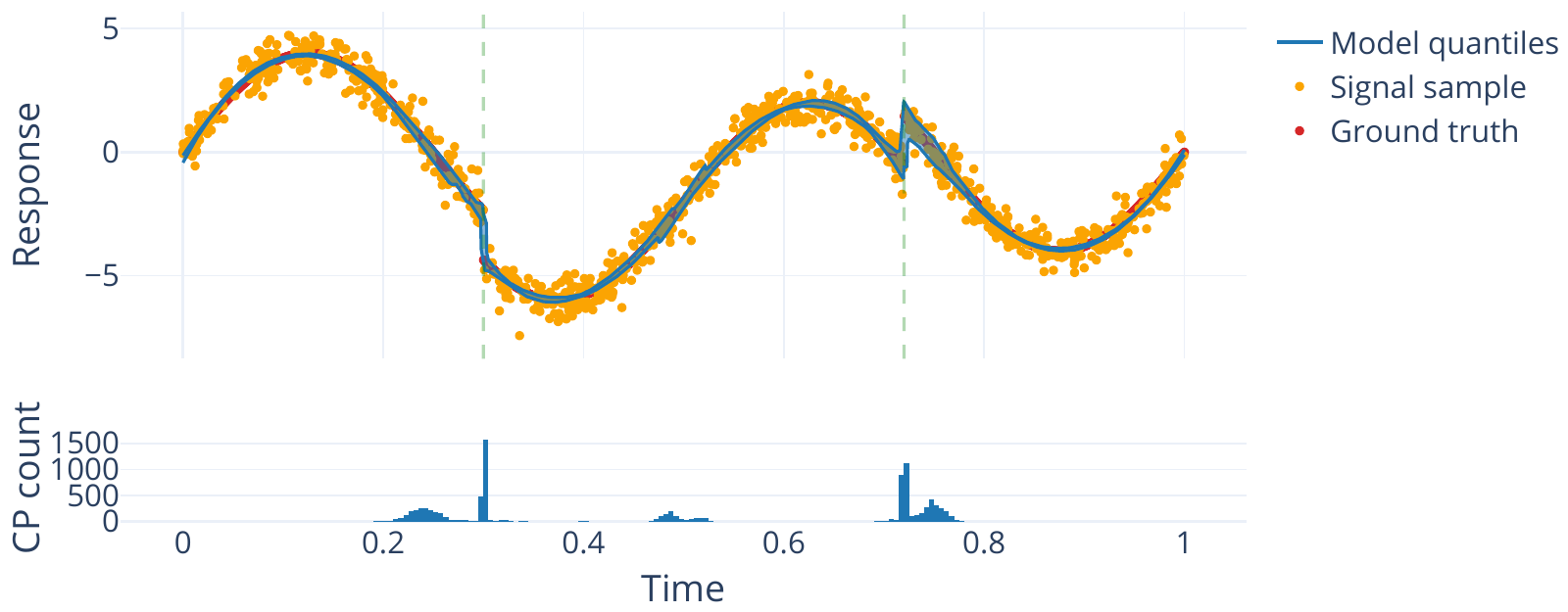}{$N=1000$}\hfill
    \myincludegraphics{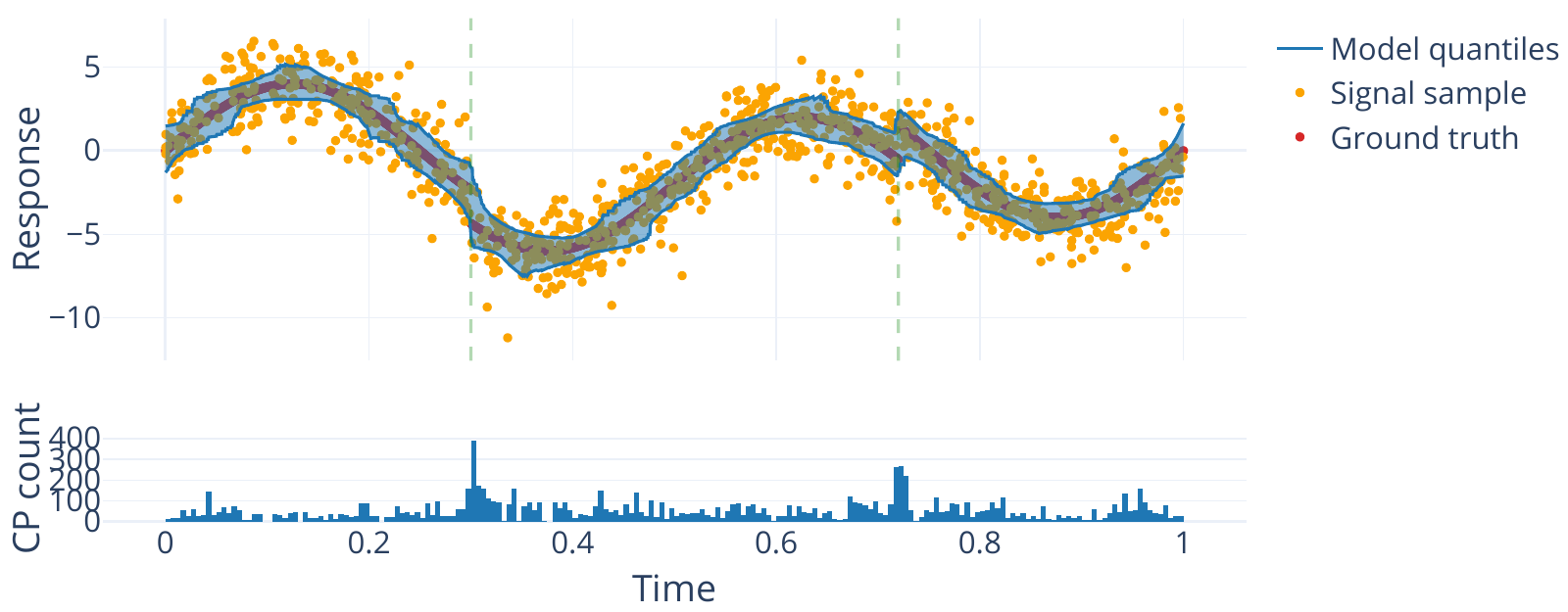}{$\sigma=1.5$}

    \caption{
      This is analogous to Figure~\ref{main:fig:synth-poly-len-noise} applied to the Heavisine signal. The notable difference is that the Heavisine function does not follow our basic model, \ie it is not piecewise polynomial and the complexity of the function does not change from segment to segment.
      \emph{Left column:} We vary the number of samples in our timeseries samples from $150$ over $500$ to $1000$. The noise is kept fixed at $\sigma=0.5$ throughout. As we can see the pointwise accuracy clearly increases quite quickly. The histograms show that the spikes at the true changepoints clearly grow with the number of samples. In the final subplot we can see see that there are a few models with changepoints around $0.23$ and $0.75$ that notably did not occur at lower numbers of samples.
      However when having a look at the evolution of the distributions of incorrect changepoints on intervals $[0, 0.3]$ and $[0.72,1]$ respectively,
      it appears to be the case that they are concentrating and moving towards their respectively closest true changepoints
      as the number of samples increase -- \ie their variance decreases while their mean approaches the true value.
      \emph{Right column:} We vary the magnitude of the noise $\epsilon$ in our timeseries samples from $\sigma=0.5$ (top) to $\sigma=1.5$ (bottom) in increments of $0.5$. The number of samples is kept fixed at $500$ throughout. As we can see the pointwise accuracy clearly suffers as noise increases but all models remain sensible and there is clear spikes around the true changepoints in all cases.
      \emph{Legend:} The orange dots depict a sample signal, the red line is the ground truth,
      the blue shadings the pointwise quantiles over 2000 realizations.
    }
    \label{main:fig:heavisine-len-noise}
  \end{figure}

  In Figure~\ref{main:fig:heavisine-len-noise} we consider the
  \emph{Heavisine} function $h \colon [0,1] \to \R, \quad h(x) = 4 \sin(4 \pi x) - \sgn(x - 0.3) - \sgn(0.72 - x)$
  of \cite{donoho1994ideal} where $\sgn$ is the sign function.
  It is a sine function with added jumps at $0.3$ and $0.72.$
  Similarly to the first example we sample from $h(T) + \epsilon$ with $T, \epsilon$ as above. To account for the Heavisine functions's
  larger range ($\min_x h(x) = -6, \max_x h(x) = 4$) the noise is scaled with a factor of $10$.
  The Heavisine function is not piecewise polynomial and the complexity of the function does not change from segment to segment,
  so the proposed model is not a perfect match for this signals.
  Although the results are not as good as for the piecewise polynomial ground truth signals,
  the model is still able to capture the jumps and the shape of the underlying signal on average.
  This indicates that the model exhibits a certain robustness to deviations from the signal class it was designed for.

  Figure~\ref{main:fig:runtime} shows the runtimes in dependence of the signal length $n.$
  When the number of dofs is not bounded, the runtime grows approximately as $\O(n^3)$ as predicted by Theorem~\ref{main:thm:complexity}.
  A signal of length 1000 can be processed in less than one minute on the above mentioned hardware.
  In practical applications, it can be reasonable to set a maximal total number of dofs to reduce the runtime.
  Then, the runtime grows only approximately quadratically in $n.$
  If we limit the maximal total number of dofs to $200,$ the runtime decreases to less than $10$ seconds for a signal of length 1000.
  The considered example exhibits an empirical runtime that scales approximately as $\O(n^{1.85})$ which is slightly better than the expected bound of $\O(n^2).$
  \begin{figure}[t]
    \newcommand{\myincludegraphics}[2]{
      \begin{subfigure}[b]{0.45\textwidth}
        \centering
        \includegraphics[width=\textwidth, clip]{#1}
        \caption{#2}
      \end{subfigure}
    }
    \myincludegraphics{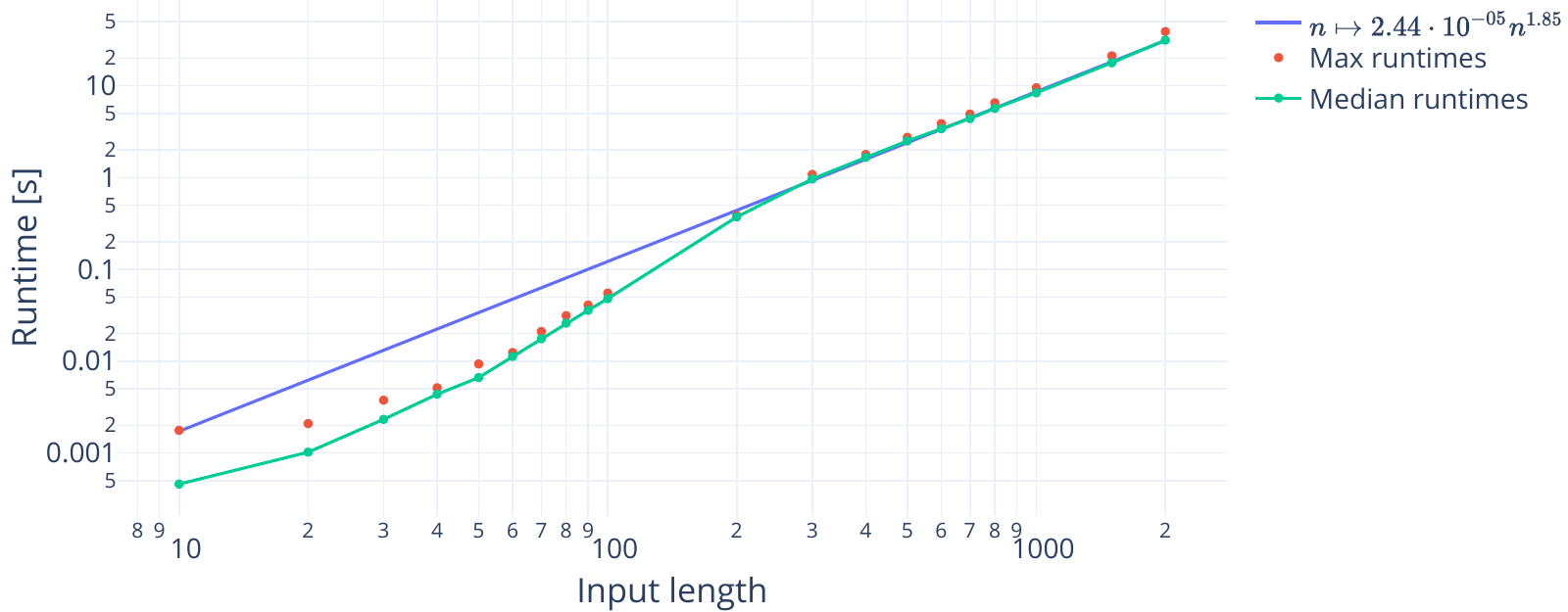}{Maximal total degrees of freedom set to $200$}
    \myincludegraphics{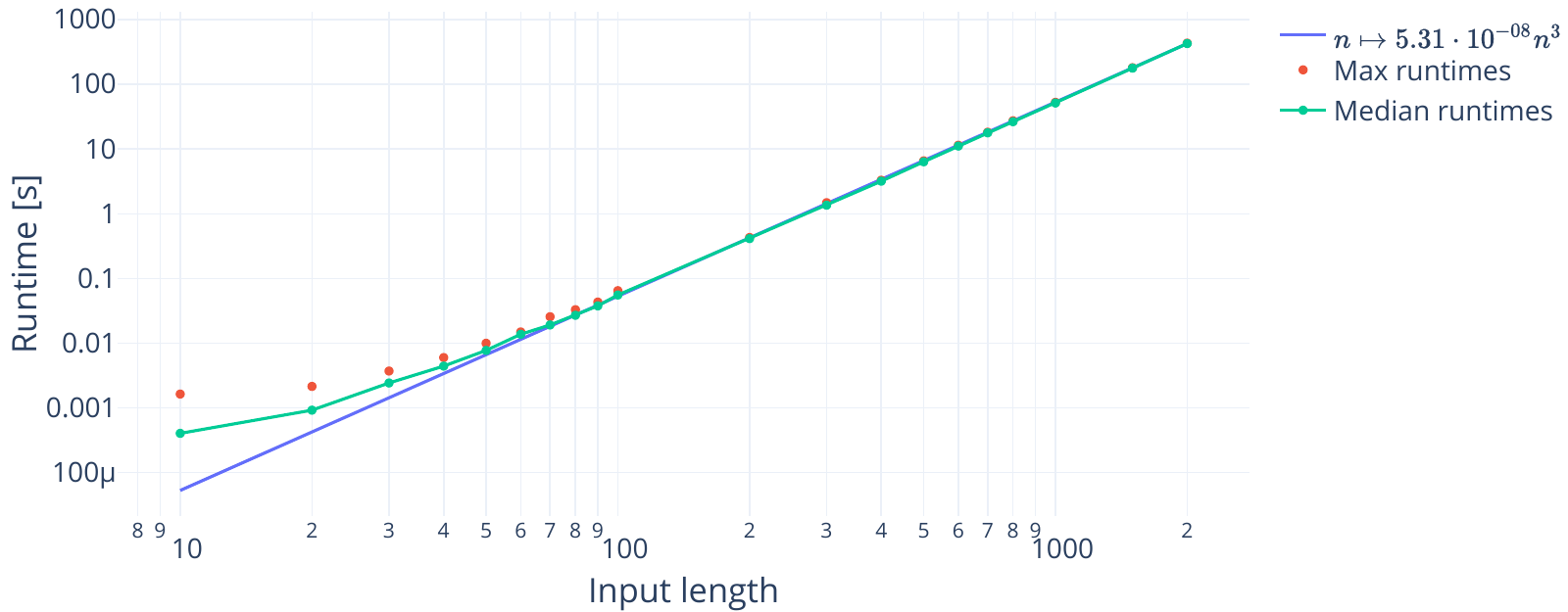}{No maximal total degrees of freedom}
    \caption{Runtime as a function of the length $n$ of the input timeseries. The algorithm was run on 5 different random piecewise polynomial functions sampled at varying numbers of points. The left figure depicts the algorithm with the default parameters of our implementation -- maximal local dofs set to $10$ and maximal total dofs set to $200$ -- while the right figure only uses the local limit of $10$ but no global limit. The plots show the median as well as maximum runtime across across 5 different input runs per number of samples. Finally the legends show the empirical sample complexity obtained via a least squares fit to the doubly-logarithmized sample medians.
      Our theoretical results predict a complexity of $\O(n^3)$ for the unbounded case and we see that this matches the experiment well. The bounded case appears to grow approximately as $\O(n^{1.85})$ which is slightly better than the expected bound of $\O(n^2)$.
    }
    \label{main:fig:runtime}
  \end{figure}

  \subsection{Comparison to partition penalized approaches}

  We compare the proposed method with the frequently used partition penalized approach \eqref{main:eq:partition_penalized_model}.
  As with the proposed method, we use least squares polynomial regression,
  but in the baseline the polynomials have fixed degrees ($0,1,2$ and $5$).
  A standard solver for the latter problem is the PELT algorithm of \cite{killick2012optimal}. The piecewise polynomial fit of fixed degree is implemented based on the Python package \emph{ruptures} from \cite{paper:ruptures}.This approach is referred to as the baseline in this subsection.

  A qualitative illustration using a relatively simple test signal and manually selected hyperparameters has been given in the introduction in Figure~\ref{main:fig:comparison_pcw_DofPPR}.

  We next provide a quantitative comparison using the
  following setup for the baseline method.
  To obtain a  hyperparameter selection strategy comparable to the method
  we apply a $k$-forward cross validation with $k=25$ based on the one standard error rule. So for each penalty, we fit models to data segments $1:25, 1:50, 1:75, ...$ and cross-validate using the one standard error rule based on the resulting prediction errors.
  Additionally, we report the results of an oracle  strategy (\enquote{Or}) where we select the most parsimonious model (\resp largest penalty) that is closest to the true number of changepoints.
  Both of these were based on a model population of $100$ equidistantly spaced penalties from the interval $[0, 50]$.
  Finally we translated the changepoint indices returned by ruptures to the middle of the corresponding intervals to match the convention used in this paper.

  For the synthetic data, we generated 20 random piecewise polynomial signals with 5 changepoints and locally no more than 6 degrees of freedom via the \texttt{rnd-pcw-poly} Python library available on PyPI.
  We sampled these at 1000 points uniformly selected from $[0,1]$ and added a Gaussian noise term with mean 0 and standard deviation of 0.025.
  We report the mean of the residual $L^2$ error and the Hausdorff distance of the indicated changepoints as well as the median of the difference between the indicated and true numbers of changepoints across all these signals.
  The reason for choosing the median in the latter case is that both our model as well as ruptures show extreme outliers in this metric for some signals.

  \begin{table}[!t]

    \centering
    \caption{Comparison of the proposed method with the baseline approach using the partition penalized model solved by the PELT method implemented in the ruptures library. The dataset consists of 20 synthetic piecewise polynomial data with additive Gaussian noise.}
    \begin{tabular}{lcccccc}
      \toprule
      \multicolumn{1}{c}{\multirow{2.5}*{Model}}             & \multicolumn{2}{c}{Residual $L^2$ $\downarrow$} & \multicolumn{2}{c}{Hausdorff CP $\downarrow$} & \multirow{2.5}*{SNR $\uparrow$} & \multirow{2.5}*{Diff. in \# CP $\downarrow$}                  \\ \cmidrule(lr){2-3} \cmidrule(lr){4-5}
      \multicolumn{1}{c}{}                                   & abs.                                            & rel.                                          & abs.                            & rel.                                                          \\
      \midrule
      Baseline Deg=0, Cv                                     & 0.03                                            & 1.87                                          & 0.23                            & 22.18                                        & 21.26 & 194.00 \\
      Baseline Deg=1, Cv                                     & 0.03                                            & 1.87                                          & 0.25                            & 16.38                                        & 22.02 & 194.00 \\
      Baseline Deg=2, Cv                                     & 0.05                                            & 2.83                                          & $\infty$                        & $\infty$                                     & 14.49 & 2.00   \\
      Baseline Deg=5, Cv                                     & 0.09                                            & 5.79                                          & $\infty$                        & $\infty$                                     & 6.59  & 5.00   \\
      Baseline Deg=0, Oracle                                 & 0.06                                            & 3.87                                          & 0.15                            & 18.55                                        & 9.95  & 0.50   \\
      Baseline Deg=1, Oracle                                 & 0.04                                            & 2.64                                          & 0.11                            & 3.71                                         & 14.77 & 0.00   \\
      Baseline Deg=2, Oracle                                 & 0.04                                            & 2.24                                          & 0.10                            & 14.79                                        & 17.05 & 1.00   \\
      Baseline Deg=5, Oracle                                 & 0.03                                            & 1.92                                          & 0.19                            & 25.59                                        & 19.24 & 2.00   \\
      \midrule
      \multicolumn{1}{c}{Ours  with $\gamma = \gamma_{OSE}$} & 0.02                                            & 1.00                                          & 0.08                            & 1.00                                         & 37.25 & 4.50   \\
      \bottomrule
    \end{tabular}
    \begin{tablenotes}
    \item \textit{Table note:} Models named Or are selected using an oracle to select for closest match in the number of changepoints and ones named Cv via cross-validation. Column \emph{residual $L^2$} contains the mean residual $L^2$ distance to the ground truth and mean of the normalized residual $L^2$, column \emph{Hausdorff CP} the mean Hausdorff distance to the true changepoints (and normalized version), column \emph{SNR} the mean signal to noise ratio and \emph{\# CP} the median error in the number of changepoints. A Hausdorff distance of $\infty$ indicates that the model didn't contain any changepoints.
      The signal-to-noise ratio is calculated as $\mathrm{SNR}(\omega) = \Norm{g}_{2}/\Norm{\omega - g}_{2}$ where $g$ is the ground truth and $\omega$ the model.
      The relative errors are measured relative to our score: so a relative score of $x$ means that the error is $x$-times as large as ours, \ie the score is normalized to our score as a reference.
      A column annotation of $\downarrow$ means that lower is better while $\uparrow$ means that higher is better.
    \end{tablenotes}
    \label{tab:comparison_PELT}
  \end{table}

  The results are reported in Table~\ref{tab:comparison_PELT}. The proposed method gives a lower residual  $L^2$ error as well as a lower Hausdorff distance of the changepoints than the baseline approach. This improvement may be attributed to the model's more efficient utilization of the "penalty budget" due to the heterogeneous local degrees: introducing a changepoint into the heterogeneous model has a non-constant cost -- in contrast to the constant jump cost in homogenous models. This allows the model to locally increase the degrees of freedom if they significantly improve the data fit, without influencing the penalty for the remainder of the model. The change point oracle models show that using the baseline it is theoretically possible to get closer to the true number of changepoints on average -- however it is not clear what selection strategy has to be used on the penalty for this in practice. Furthermore the cross-validated models of degrees 0 and 1 show that it's also possible to get very bad models (with respect to this criterion) from the baseline when using standard model selection methods.

  In settings of exploratory analysis both the baseline as well as our algorithm allow users to intervene when they notice an excessive number of changepoints. In the case of baseline approach using the  PELT solver this is achieved by limiting the number of segments; for our algorithm it amounts to limiting the total number of degrees of freedom of the full model.
  Finally it's worth mentioning that selecting both the degree as well as penalty for the baseline models comes with a relatively large runtime cost that is avoided by our algorithm's automatic selection strategy.

  \subsection{Results on real data}\label{main:sec:construction_dataset}

  Next we study two real data examples from the TCPD dataset \cite{paper:an-evaluation-of-change-point-detection-algorithms}.

  Figure \ref{main:fig:global_co2}  presents the application of the proposed method to the \texttt{global\_co2} data.
  Two human annotators annotated changepoints after indices 46, 90, and 47, 91, respectively,
  and three human annotators saw no changepoint.
  The  proposed method estimates two breaks at the indices 68 and 90.

  \begin{figure}[!t]
    \centering
    \includegraphics[width=0.7\textwidth]{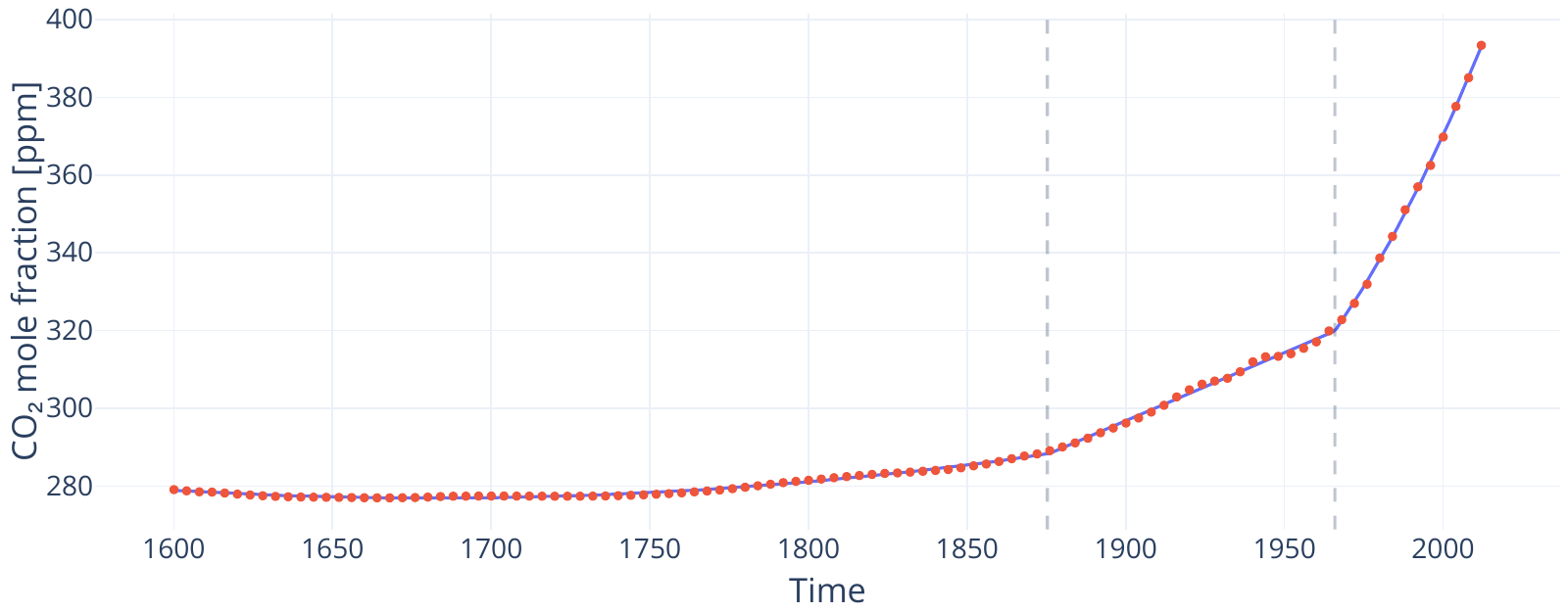} \hfill
    \caption{The TCPD dataset \texttt{global\_co2}. This dataset is based on the CMIP6 dataset of \cite{meinshausen2017historical} showing the monthly global hemispheric means of carbon dioxide in the atmosphere. The original timeseries contains over $24,000$ datapoints and has been cropped by \cite{paper:an-evaluation-of-change-point-detection-algorithms} for distribution as part of TCPD. Fitting the DofPPR model gives  a piecewise polynomial with three segments: a quadratic segment followed by a linear and another quadratic segment. The predicted changepoints are at 12.04.1875 and 19.11.1965 corresponding to indices 68 and 90. }
    \label{main:fig:global_co2}
  \end{figure}

  The next example is the \texttt{construction} dataset which contains the total
  private construction spending in the US over multiple years (Figure~\ref{main:fig:construction}). The accompanying documentation of this dataset suggests that potential change points occur at economical recessions.
  Here we have mapped the dates of the time interval into [0, 1] prior to analysis for simplicity. This dataset is interesting as it shows both seasonal waves as well as global trends.
  We extracted them using the proposed method in a two-stage exploratory data analysis.
  In a first step we fitted a model with the default settings to the data samples.
  We then manually restricted the total degrees of freedom to $\nu_{\mathrm{total}} \leq 81$
  which eliminated some spurious segments.
  Using the seasonal models we calculated the means for each segments by integrating the local polynomials across their corresponding intervals. In a second step, we applied the proposed method to those points.
  The breakpoints of the corresponding piecewise polynomial are (in \texttt{dd.mm.yyyy} format)
  07.07.1995,
  09.04.2001,
  03.03.2005,
  17.11.2009,
  12.01.2019.
  The breakpoints can be seen to line up with economically significant events to some extent. Interestingly the local extrema of the polynomial pieces also match up to such significant points.

  \begin{figure}[!t]
    \def\figwidth{0.7\textwidth}
    \centering
    \centering
    \includegraphics[width=\figwidth]{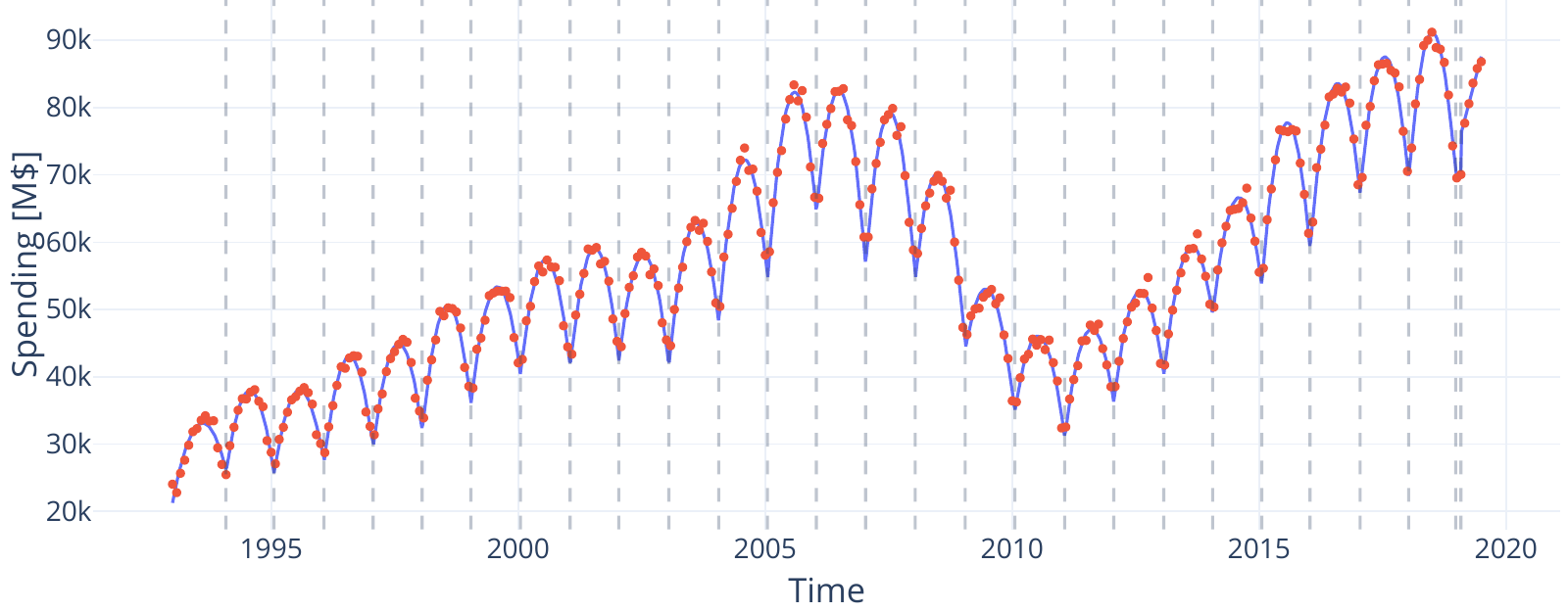}
    \centering
    \includegraphics[width=\figwidth]{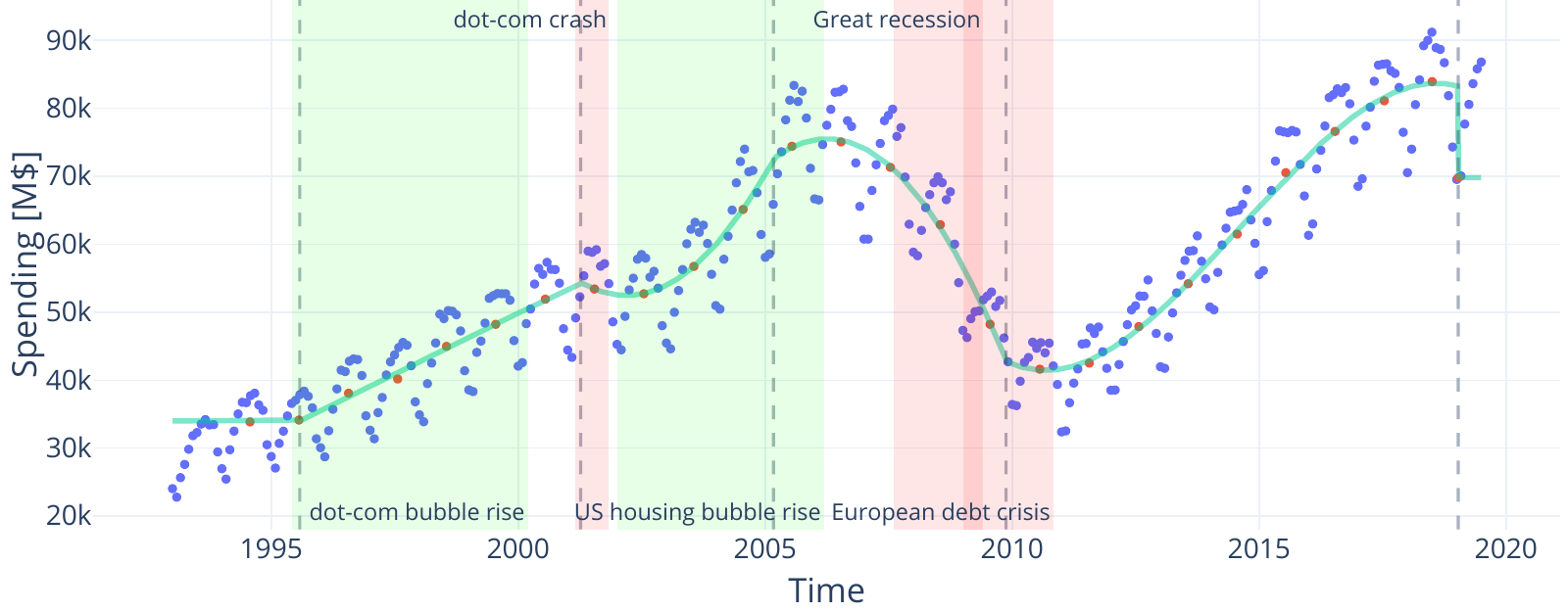}
    \caption{Exploratory data analysis on the construction data (top, red dots): In a first step, we applied the proposed model to data and manually restricted the total degrees of freedom to $\nu_{\mathrm{total}} \leq 81$ such that the resulting fit captures the seasonality. The dashed vertical lines show the changepoints indicated by this model. In a second step, we calculated the means for each segments by integrating the local polynomials across their corresponding domains (bottom, red dots), and applied the proposed method with default parameters to the resulting data points (green line). Economically significant developments are depichte as green and red shaded regions.}
    \label{main:fig:construction}
  \end{figure}

  \subsection{Results on the TCPD benchmark for  unsupervised changepoint detection}\label{main:sec:turing}

  We now evaluate the proposed method using the full TCPD benchmark \citep{paper:an-evaluation-of-change-point-detection-algorithms}. The benchmark consists of two separate evaluations:
  The Default setting aims to mimic the typical use-case where a data analyst, unfamiliar with the optimal parameter configurations, applies the algorithm to detect changepoints in a time series.
  The Oracle setting involves conducting a  grid search across the hyperparameters to identify the configuration that yields the highest performance for each algorithm.
  For the proposed DofPPR method,
  the Default setting refers to hyperparameter selection based on rolling cross validation with the OSE rule as described above.
  For the Oracle setting, the $\gamma$-parameter was varied
  with a grid of $101$ values evenly spaced on a log scale between $10^{-3}$ and $10^{3},$
  following the setup used for other penalized models in the benchmark.
  (We note that our Oracle setup  could be further refined
  by exploiting the access to the full regularization paths however this is not easily implementable due to how the benchmark is structured.)

  The results  are reported in the row \textsc{DofPPR-OSE} of Table~\ref{tab:turing}.
  We observe that the Oracle score is the highest among all competitors.
  By contrast, the Default score seems disappointing at first glance.
  To understand this  discrepancy, a deeper inspection of the human annotations,
  the parameter choice of the competitors and our parameter choice strategy is required.
  Human annotators were shown five control time series containing known changepoints to validate annotation quality.
  These control data had a maximum of two changepoints per series. Inspecting  some further annotations reveals that the human annotators preferred changepoint patterns on broader time scales to changepoint patters narrower-scale changes (e.g. the construction data set discussed in Section~\ref{main:sec:construction_dataset}).
  As outlined in Section~\ref{main:sec:construction_dataset}, when the hyperparameter
  $\gamma_{\text{OSE}}$ is used, it tends to adapt to narrower-scale  changepoint patterns such a seasonal patterns, often proposing a number
  of changepoints higher than a human annotator would.
  Another observation is that
  the top-performing method in the benchmark, \textsc{binseg}, uses a hard constraint on the number of change points,
  and so do most other of the best performing competitors (\textsc{segneigh}, \textsc{amoc}). Even the trivial method \textsc{zero}, which  always returns zero changepoints, is competitive.
  So assuming relatively few
  changepoints appears to be a strong prior information,
  which is used by the best competitors.
  To obtain a fair comparison with these methods, we propose a simple variant of our method enforcing a maximum of five changepoints as \textsc{binseg} does. This corresponds in the proposed model to a maximum of six degrees of freedom, $\nu_{\mathrm{total}} \leq 6.$
  (A piecewise constant model with five changepoints has six segments, corresponding to six degrees of freedom.)
  Using this constrained variant, referred to as \textsc{DofPPR-OSE-6} in the table,
  yields superior scores in both the cover metric and the F1 metric for the Default method.
  By the constraint, the oracle score is reduced but remains competitive.
  \begin{table}
    \centering
    \caption{Results on the TCPD benchmark \citep{paper:an-evaluation-of-change-point-detection-algorithms}}
\begin{tabular}{lrrcrrcrrcrr}
\toprule
 & \multicolumn{5}{c}{Default} &  & \multicolumn{5}{c}{Oracle} \\
\cmidrule(lr){2-6} \cmidrule(lr){8-12}
 & \multicolumn{2}{c}{Univariate} &  & \multicolumn{2}{c}{Multivariate} &  & \multicolumn{2}{c}{Univariate} &  & \multicolumn{2}{c}{Multivariate} \\
\cmidrule(lr){2-3} \cmidrule(lr){5-6} \cmidrule(lr){8-9} \cmidrule(lr){11-12}
 & Cover & F1 & & Cover & F1 & & Cover & F1 & & Cover & F1\\
\midrule
\textsc{amoc}              & \textbf{0.668} & 0.653          &  &                &                &  & 0.717          & 0.773          &  &                &                \\
\textsc{binseg}            & \textbf{0.672} & \textbf{0.698} &  &                &                &  & \textbf{0.774} & \textbf{0.873} &  &                &                \\
\textsc{bocpd}             & \textbf{0.594} & \textbf{0.662} &  & \textbf{0.455} & \textbf{0.610} &  & \textbf{0.783} & \textbf{0.886} &  & \textbf{0.801} & \textbf{0.941} \\
\textsc{bocpdms}           & \textbf{0.590} & 0.495          &  & \textbf{0.512} & \textbf{0.426} &  & 0.753          & 0.659          &  & \textbf{0.689} & 0.654          \\
\textsc{cpnp}              & 0.488          & 0.586          &  &                &                &  & \textbf{0.759} & \textbf{0.845} &  &                &                \\
\textsc{DofPPR-OSE (ours)} & 0.275          & 0.385          &  &                &                &  & \textbf{0.792} & \textbf{0.905} &  &                &                \\
\textsc{DofPPR-OSE-6 (ours)}   & \textbf{0.676} & \textbf{0.753} &  &                &                &  & \textbf{0.783} & \textbf{0.866} &  &                &                \\
\textsc{ecp}               & 0.470          & 0.560          &  & \textbf{0.402} & \textbf{0.545} &  & 0.693          & 0.773          &  & \textbf{0.590} & \textbf{0.725} \\
\textsc{kcpa}              & 0.069          & 0.124          &  & 0.047          & 0.071          &  & 0.608          & 0.686          &  & \textbf{0.649} & \textbf{0.774} \\
\textsc{pelt}              & \textbf{0.652} & \textbf{0.674} &  &                &                &  & \textbf{0.772} & \textbf{0.864} &  &                &                \\
\textsc{prophet}           & 0.522          & 0.472          &  &                &                &  & 0.554          & 0.502          &  &                &                \\
\textsc{rbocpdms}          & 0.561          & 0.397          &  & \textbf{0.485} & \textbf{0.352} &  & 0.717          & 0.677          &  & \textbf{0.649} & 0.559          \\
\textsc{rfpop}             & 0.341          & 0.476          &  &                &                &  & \textbf{0.784} & \textbf{0.870} &  &                &                \\
\textsc{segneigh}          & \textbf{0.642} & \textbf{0.635} &  &                &                &  & \textbf{0.777} & \textbf{0.875} &  &                &                \\
\textsc{wbs}               & 0.264          & 0.365          &  &                &                &  & 0.366          & 0.482          &  &                &                \\
\textsc{zero}              & \textbf{0.566} & 0.645          &  & \textbf{0.464} & \textbf{0.577} &  & 0.566          & 0.645          &  & 0.464          & 0.577          \\
\bottomrule
\end{tabular}
    \begin{tablenotes}
    \item
    \end{tablenotes}
    \label{tab:turing}
  \end{table}

  \subsection{Further comparisons}

  For a further comparison with alternative piecewiese regression methods, we provide the results of
  two recent methods on the TCPD data shown in the paper; cf. Figures \ref{main:fig:quality_control_1}, \ref{main:fig:global_co2}, and \ref{main:fig:construction} above.
  The first method is cpop \citep{fearnhead2019detecting,fearnhead2024cpop}
  which estimates data by a continuous piecewise linear spline
  and is based on a change in slope model with a penalty on the number of slope changes.
  The second one is DeCAFS \citep{romano2022detecting}
  which combines elements of a Mumford-Shah/Blake-Zisserman type model with an autoregressive term.
  The results using the default parameters are reported in Figure~\ref{main:fig:comparison_other_methods}.
  We observe that the cpop method captures the shape of the first signal,
  but it requires two changes in slope for the jump point;
  in the other examples the number of changepoints appears to be overestimated.
  The DeCAFS model captures the jump of the first signal and the corresponding estimates exhibits some variation on the pieces.
  In the  construction data it adapts to the long term variation rather than the seasonal pattern.
  In the last signal, spurious discontinuities appear in regions with steep gradients. This phenomenon, observed in models that impose a penalty on the slope, was termed by \cite{blake1987visual} as the gradient limit effect.
  The results on the synthetic
  data  (Figures~\ref{main:fig:synth-poly-len-noise} and \ref{main:fig:heavisine-len-noise})
  are given in the supplementary material, noting that the DeCAFS method could not be applied to these data
  as it does not support non-equidistantly spaced data at the time of writing.

  \begin{figure}[!t]
    \centering
    \begin{subfigure}{\textwidth}
      \includegraphics[width=0.32\textwidth]{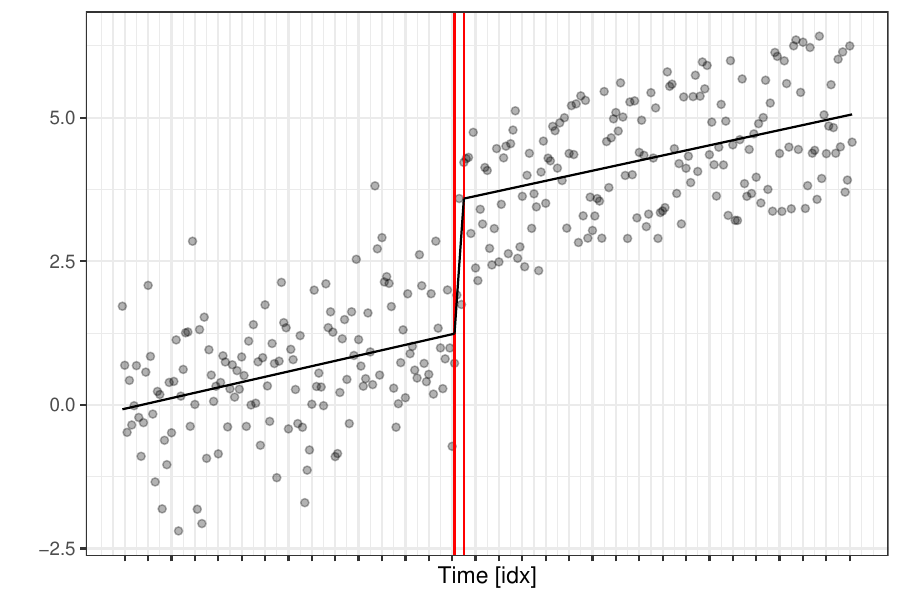} \hfill
      \includegraphics[width=0.32\textwidth]{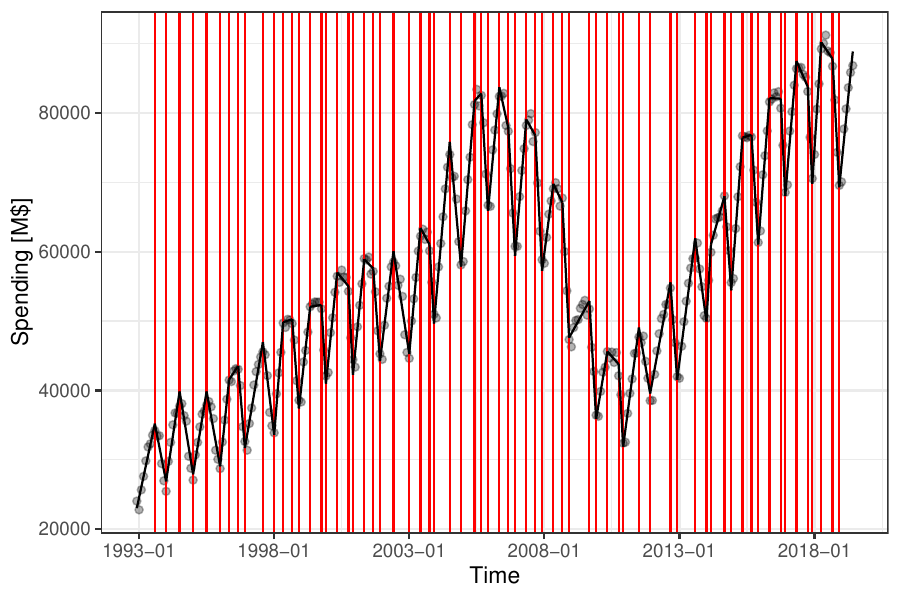} \hfill
      \includegraphics[width=0.32\textwidth]{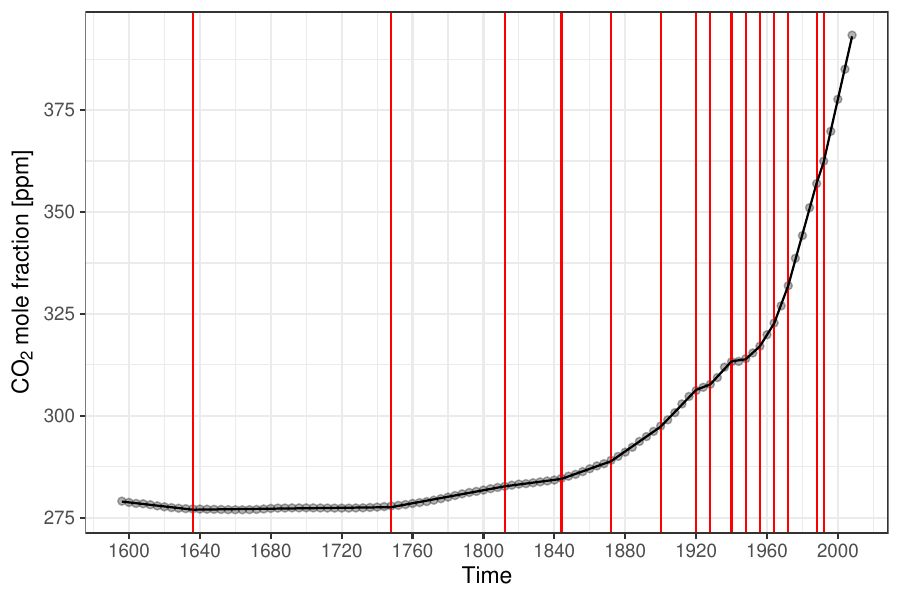}
      \caption{cpop \citep{fearnhead2019detecting,fearnhead2024cpop}}
    \end{subfigure}
    \begin{subfigure}{\textwidth}
      \includegraphics[width=0.32\textwidth]{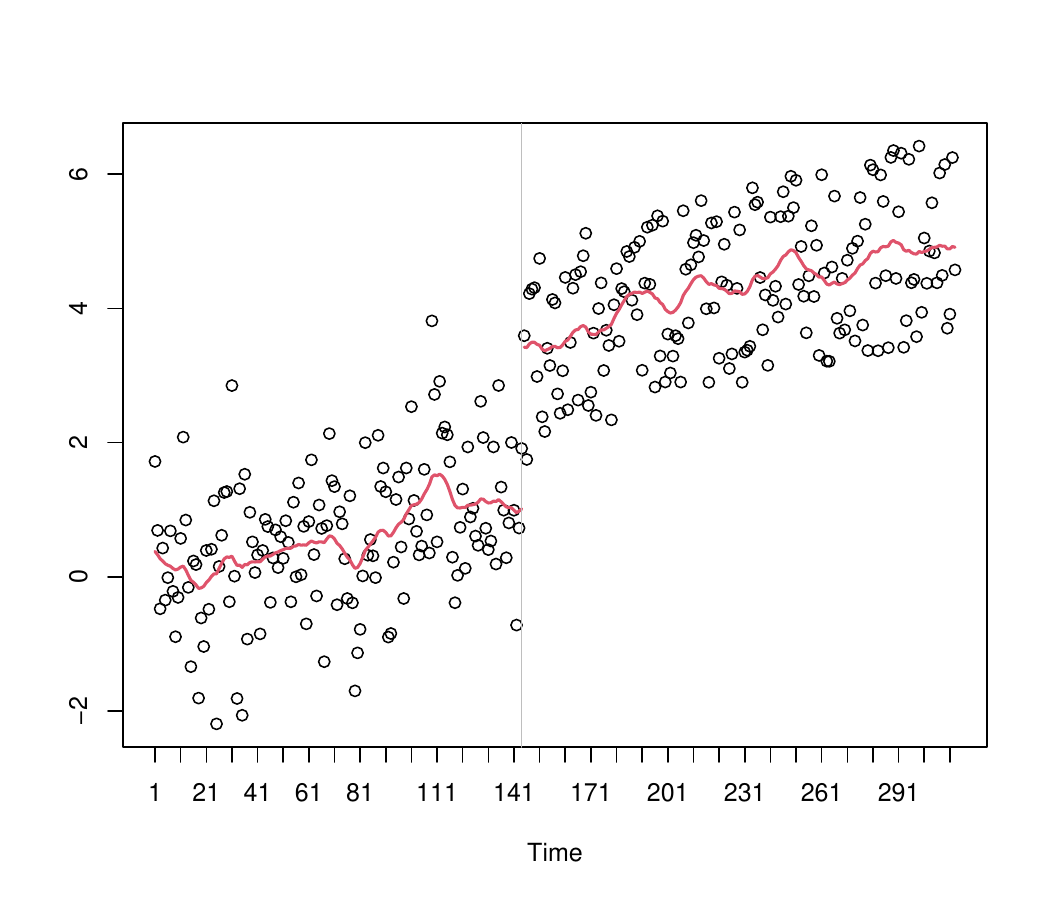} \hfill
      \includegraphics[width=0.32\textwidth]{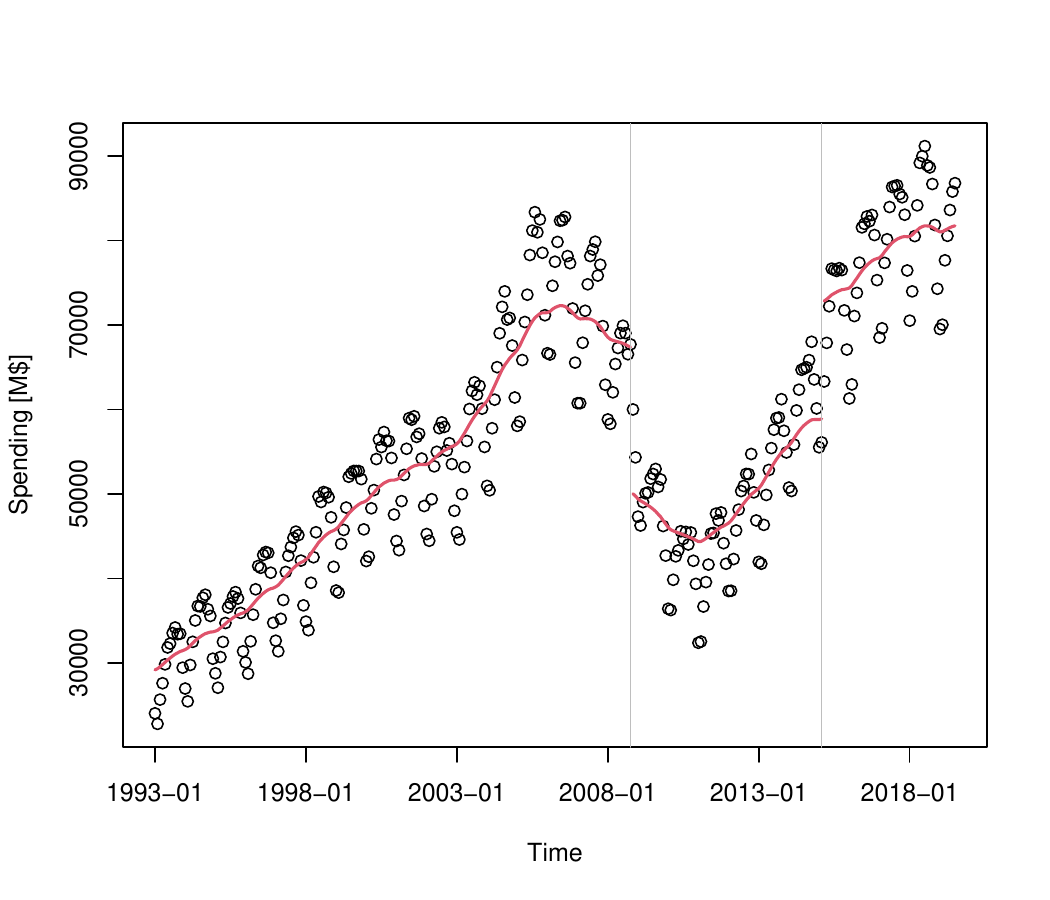} \hfill
      \includegraphics[width=0.32\textwidth]{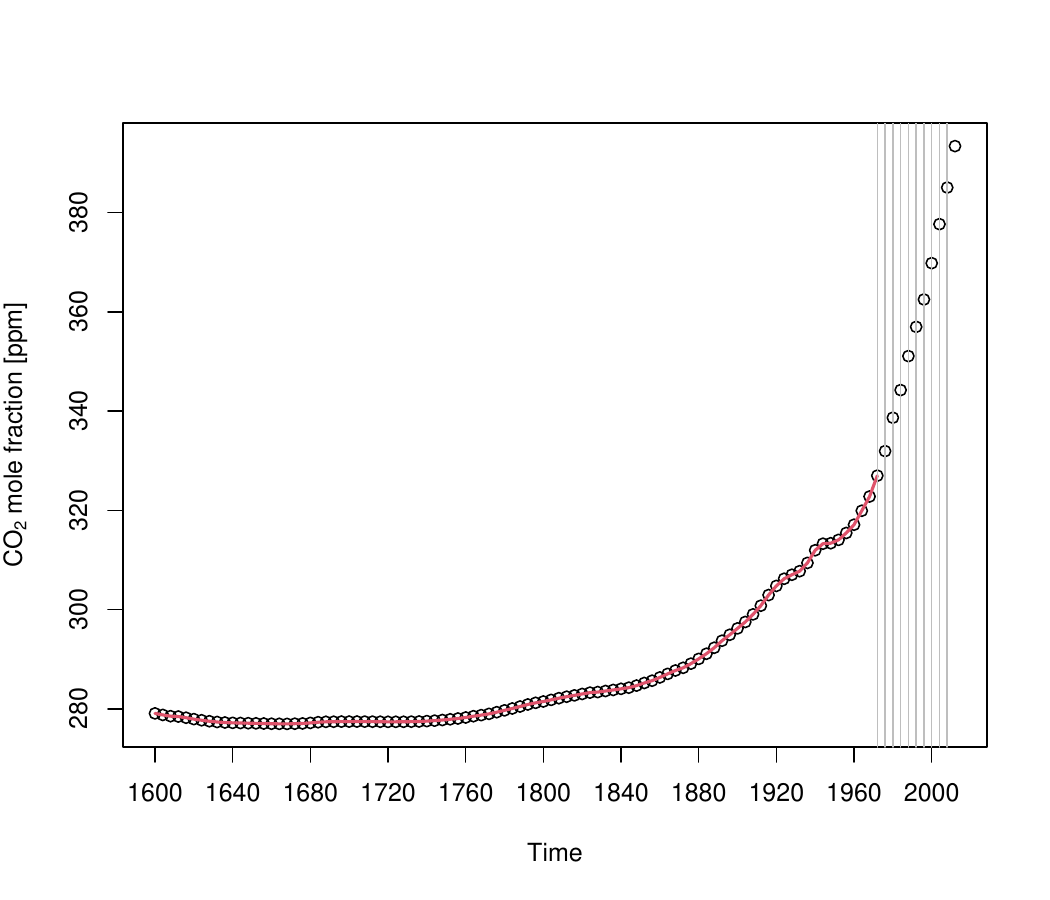}
      \caption{DeCAFS \citep{romano2022detecting}}
    \end{subfigure}
    \caption{Results on the TCPD time series \texttt{quality\_control\_1} (left), \texttt{construction} (middle), and \texttt{global\_co2} (right) using the cpop and DeCAFS methods for two recent piecewise regression models with default parameters of the corresponding R packages.
    }
    \label{main:fig:comparison_other_methods}
  \end{figure}

  \section{Discussion and Conclusion}\label{main:sec:conclusion}

  We have studied a piecewise regression model
  which is based on optimizing a tradeoff  between a goodness of fit term and a penalty on the total number of degrees of freedoms of the regression function, and the tradeoff is weighted by a model hyperparameter.
  On the analytical side, we have shown that the minimizer is unique (in an Lebesgue almost everywhere setting) for the important piecewise least squares setup
  when interpolatory functions are excluded.
  For other setups, we have proposed using distinguished solutions with
  parsimonious models and largest rightmost segments.

  We have developed a fast algorithm that computes the complete regularization paths.
  Its worst case time complexity is $\O(n^3 m^2)$
  whenever
  is the cost of fitting a model with $m$ degrees of freedom to data of length $n$ is in $\O(mn).$ For polynomials with least squares errors the time complexity is $\O(n^3 m)$.
  Furthermore, we have seen that performing model selection based on rolling cross validation (with or without one-standard-error rule) comes with little additional effort.
  For the least square instance with polynomials up to degree $10,$
  the total processing time (including model selection) is less than a minute
  for a signal of length $1,000$ on a standard desktop computer, and only around $10$ seconds
  when the total number of degrees of freedom is restricted to $200.$

  We have provided a full reference implementation of the piecewise polynomial case.
  The experimental results on synthetic data have illustrated the potential of the model for piecewise regression with mixed complexities.
  In particular we have seen that the model automatically gives satisfactory results for piecewise regression when the underlying signal is piecewise polynomial and the noise is i.i.d. Gaussian.

  We also demonstrated how the method suits exploratory data analysis.
  The method can be used to progressively reduce the degrees of freedom
  in a multiscale analysis fashion.
  By the automatic hyperparameter tuning,
  it is suitable for users which are not familiar with the proposed model
  and the hyperparameters. Yet, it is possible to request more parsimonious results by restricting the total number of degrees of freedom.

  Evaluation on the TCPD benchmark underpins the models strength in unsupervised changepoint detection outperforming the competitors in the oracle score.
  A variant that leverage an a priori restriction on the maximum number of changepoints -- as used by the best competing methods -- resulted in the highest score yet recorded for this benchmark.

  Though our results are promising, the model's hyperparameter selection could benefit from refinement. The current strategy based on rolling cross validation with one-standard-error rule tends to adapt to high-frequency changepoint patterns which may not always align with human interpretation of the correct scale. This discrepancy suggests the need to explore the human-perceivable scale for changepoints. Incorporating seasonal components into the model could be a potential solution. Additionally, the implementation of robust estimators, such as $\ell^1$
  data terms, could provide resilience against outliers.

  For solving related penalized models with a fixed $\gamma$-penalty,
  remarkable speedups in computatation time have been obtained using pruning strategies  \citep{killick2012optimal,fearnhead2024cpop,storath2014fast,runge2020change}.
  These could be applied as well for the proposed model when computing the solution for a fixed penalty.
  It is an interesting question of future research if  pruning can be extended for
  the proposed algorithm which computes the solution paths for the entire range of $\gamma$-penalties.

  A possible extension is an approach with two separate penalties for the number of segments and the degrees of freedom.

  \section*{Data Availability}
  Implementations of the algorithms developed in this paper are provided at \url{https://github.com/SV-97/pcw-regrs}.
  The simulated data were generated using the package \texttt{rnd-pcw-poly} available at \url{https://pypi.org/project/rnd-pcw-poly/} with the primary datum being given in supplementary Section \ref{sup:sec:expression-piecewise-poly}.
  The time series data are from the Turing Change Point Detection dataset  provided at \url{https://github.com/alan-turing-institute/TCPD} by the authors of the corresponding paper \citep{paper:an-evaluation-of-change-point-detection-algorithms},
  and the corresponding benchmark code is provided at \url{https://github.com/alan-turing-institute/TCPDBench}.

  \section*{Acknowledgement}
  Martin Storath  was supported by the project DIBCO funded by the research program \enquote{Informations- und Kommunikationstechnik} of the Bavarian State Ministry of Economic Affairs, Regional Development and Energy  (DIK-2105-0044 / DIK0264).
  Andreas Weinmann acknowledges support of Deutsche Forschungsgemeinschaft (DFG) under project number 514177753.


  \printbibliography[heading=myheadingmain]
  \newrefsection


  \newpage

  \part{Supplementary Material}

  \renewcommand{\theequation}{S\arabic{equation}}
  \renewcommand{\thefigure}{S\arabic{figure}}
  \renewcommand{\thetable}{S\arabic{table}}
  \renewcommand{\thesection}{S\arabic{section}}


  \section{Proofs for Section~\ref{main:ssec:uniqueness}}

  \subsection{Proof of Lemma \ref{main:lem:projections}}

  \begin{proof}
    We start out to consider the corresponding representing matrices $\bar\pi_{P,\lambda},$ $\bar\pi_{Q,\mu}$ given by \eqref{main:eq:MatrixRep}.
    If $\bar\pi_{P,\lambda} \neq \bar\pi_{Q,\mu},$ then at least one of them does not equal the identity. W.l.o.g. assume $\pi_{P,\lambda}$ does not equal the identity.
    Then, by \eqref{main:eq:QuadWithProjection}, $G_{P, \lambda}(y)$ is a non-constant quadratic form w.r.t. $y.$
    In turn, $G_{P, \lambda}(y) - G_{Q, \mu}(y)$ is a non-constant quadratic form (with gradient $\pi_{P,\lambda} - \pi_{Q,\mu}$), and thus the set
    $\{ y \in \R^n : G_{P, \lambda}(y) = G_{Q, \mu}(y) \}$ has Lebesgue measure zero as a lower-dimensional manifold.
    If $\bar\pi_{P,\lambda} = \bar\pi_{Q,\mu},$ then, by \eqref{main:eq:QuadWithProjection},
    $G_{P, \lambda}(y) - G_{Q, \mu}(y) =  \gamma \sum_I  \left( \lambda_I -  \mu_I \right)$
    which is constant.
  \end{proof}

  \subsection{Proof of Theorem \ref{main:thm:uniquessFunction}}

  \begin{proof}
    For a given partitioning $P$ and given degreee vector $\lambda = (\lambda_I)_{I \in P},$ we consider those $(P,\lambda)$
    which yield the same projection/hat matrices $\bar\pi_{P,\lambda},$ i.e., we let
    \begin{equation}\label{sup:eq:DefMPnu}
      M_{(P,\lambda)} = \{ (Q,\mu):  \pi_{P,\lambda} = \pi_{Q,\mu}  \},
    \end{equation}
    and pick  $(Q,\mu) \in M_{(P,\lambda)}.$
    We observe that the corresponding  quadratic functionals $G_{P, \lambda}(y)$ and $G_{Q, \mu}(y)$ only differ by a constant by Lemma~\ref{main:lem:projections}.
    Hence, there is $(P^\ast,\lambda^\ast)$  such that
    \begin{equation}
      G_{P^\ast, \lambda^\ast}(y) = \min_{(Q,\mu) \in M_{(P,\lambda)}}  G_{Q, \mu}(y).
    \end{equation}
    From each $M_{(P,\lambda)}$ we pick such a   $G_{P^\ast, \lambda^\ast},$ and observe that there are at most countably many
    different $G_{P^\ast, \lambda^\ast}.$
    Further, the minimizer $\omega^\ast$ of Equation \eqref{main:eq:proposed_model_intro_poly}  equals
    $
    \min_{(P^\ast, \lambda^\ast)} G_{P^\ast, \lambda^\ast}(y)
    $
    for any data $y.$
    Taking two different sets  $M_{(P,\lambda)} \neq M_{(P',\lambda')},$
    their minimizing representatives $(P^\ast,\lambda^\ast)$ and $(P'^\ast, \lambda'^\ast)$ yield different projections
    $\bar\pi_{P^\ast, \lambda^\ast} \neq \bar\pi_{P'^\ast, \lambda'^\ast}.$ Hence, by Lemma~\ref{main:lem:projections}.,
    the set
    $\{ y \in \R^n : G_{P^\ast,\lambda^\ast}(y) = G_{P'^\ast, \lambda'^\ast}(y) \}$ has Lebesgue measure zero.
    Since there are at most countably many such zero sets where two functionals are equal, we may consider their union as an exclusion set $X$ of Lebesgue measure zero. On its complement $X^c$, the minimizing $P^\ast, \lambda^\ast$ is unique, its whole $ M_{(P,\lambda)}$ yields the same function and thus the minimizer is unique on the complement of the exclusion set.
  \end{proof}

  \subsection{Proof of Lemma~\ref{main:lem:PolynomialsFulfillAss}}
  \begin{proof}
    Part (i) of Assumption~\ref{main:assum:Basis} is a consequence of the invertibility of the corresponding Vandermonde matrices.
    Concerning part (ii) first note that the particular choice of polynomials is irrelevant, e.g. by plugging a change of basis matrices into the definition
    of $\pi_{I, \lambda_I}$ in \eqref{main:eq:ProjectionOnInterval}.
    Towards a contradiction, assume that $\pi_{I, \lambda_I}$ can be further decomposed into two or more block diagonal matrices.
    We denote these block matrices by $B_1,\ldots,B_k$ and their block sizes by $d_1,\ldots,d_k.$
    If the size of one of the block matrices exceeds the degree $\lambda_I$,
    say $d_r \geq \lambda_I,$ for $r \in \{1,\ldots,k\},$
    then we consider the corresponding subinterval $I_r \subset I$ together with data $y_r: I_r \to \mathbb R.$
    If $y_r$ equals $0$ (in all components), then the estimate $\hat y_r$ equals $0$ on $I_r$ by linearity.
    We now extent $y_r$ in two ways to functions $y_I$ and $y'_I$ on $I:$ we let $y_I$ be defined by $0$ on the complement of $I_r$ in $I,$
    and we let $y'_I$ be defined by $1$ on the complement of $I_r$ in $I.$
    By the fundamental theorem of algebra, both $\hat y_I$ and $\hat{y'}_I$ equal the zero vector on $I.$
    But since the polynomial hat operator reproduces constants,
    $\hat{y'}_I$ should equal $1$ on the complement of $I_r$ in $I.$
    This is a contradiction.
    We are hence left with the case
    that all block matrices  $B_1,\ldots,B_k$ have  block sizes $d_1,\ldots,d_k$ smaller than $\lambda_I.$
    Consider such a block $B_r,$  $r \in \{1,\ldots,k\},$ and arbitrary data $y_r: I_r \to \mathbb R.$
    Then $y_r$ is a sample of a polynomial $p$ of degree lower than $\lambda_I.$
    We consider this polynomial $p,$
    sample it on the whole interval $I$ to obtain data $y,$
    and apply the hat operator  $\pi_{I, \lambda_I}.$
    By polynomial reproduction, the hat operator reproduces the sample of $p,$
    and, in turn, $\hat y_r= y_r,$ it is the hat operator on $I_r$ equals the identity.
    Since $r$ was arbitrary, the hat operator on $I$ equals the identity.
    This contradicts the assumption $m<n$ representing the fact of more sample points than polynomial degree plus one.
    In consequence, $\pi_{I, \lambda_I}$ cannot be further decomposed which completes the proof.
  \end{proof}

  \subsection{Proof of Lemma~\ref{main:lem:standardDecompositionExUn}}

  \begin{proof}
    By construction, $\bar\pi_{P,\lambda}  = \bar\pi_{\tilde P,\tilde \lambda}$ which is formulated as statement (i).
    To see (ii) consider two standard partitions $\tilde P_1, \tilde P_2$ together with the degree vector $\tilde \lambda_1,\tilde \lambda_2$ corresponding to the partitioning $P$ with degree vector $\lambda = (\lambda_I)_{I \in P}.$
    Then, if $I$ in $\tilde P_1$ is of length $1$ it corresponds to an identity block in $P$ which results in a $1 \times 1$ block in
    $\bar\pi_{\tilde P_1,\tilde \lambda_1}$ which implies that  $I$ belongs to $\tilde P_2$ as well. Interchanging $\tilde P_1$ and $\tilde P_2$ shows the converse inclusion and in turn implies equality for identity blocks.
    Atomic blocks corresponding to non-interpolatory intervals of $P$ are not modified both in  $\tilde P_1$ and  $\tilde P_2,$ respectively.
    Together, $\tilde P_1 = \tilde P_2$ and $\tilde \lambda_1 = \tilde \lambda_2$ which implies the uniqueness of the standard decomposition.
  \end{proof}

  \subsection{Proof of Theorem~\ref{main:thm:UniquenessAEPart}}

  \begin{proof}
    The statement on the minimizing function was formulated as Theorem~\ref{main:thm:uniquessFunction}.
    So it remains to show the statement on the segments of a minimizing partitioning.
    For a given partitioning $P$ and given degree vector $\lambda = (\lambda_I)_{I \in P},$ we consider
    the set $M_{(P,\lambda)}$ defined by \eqref{sup:eq:DefMPnu}
    consisting of those partitions and degree vectors $(Q,\mu)$ which yield the same projection matrices
    $\bar\pi_{P,\lambda}.$
    We make the crucial observation that, in $M_{(P,\lambda)},$ all elements have the same standard block decomposition
    $\bar\pi_{P,\lambda}$ given by \eqref{main:eq:MatrixStandardDec} as a consequence of Lemma~\ref{main:lem:standardDecompositionExUn}.
    Hence, the non-interpolating intervals $I$ of all partitions are identical.
    Passing to the complement of the zero set $X$ identified in the proof of Theorem~\ref{main:thm:uniquessFunction},
    the segments $I^\ast$ of a minimizing partitioning corresponding to non-interpolating estimation are hence unique.
    This shows the second assertion of the theorem.
  \end{proof}

  \subsection{Proof of Corollary~\ref{main:cor:Uniqueness4Pol}}

  \begin{proof}
    By Lemma~\ref{main:lem:PolynomialsFulfillAss}  the system of polynomials fulfills Assumption~\ref{main:assum:Basis}.
    In consequence, the assertion of the corollary is a consequence of Theorem~\ref{main:thm:UniquenessAEPart}.
  \end{proof}

  \subsection{Proof of Corollary~\ref{main:cor:UniquenessNuMax}}

  \begin{proof}
    By the proof of Theorem~\ref{main:thm:UniquenessAEPart} we may minimize the problem in \eqref{main:eq:proposed_model_intro_poly}
    by minimizing w.r.t. all standard block decomposition instead of all partitionings.
    On a non-$1 \times 1$ block of the corresponding projection matrix of a block decomposition,
    the projection is non-interpolating. Thus the polynomial degree is strictly lower than the number of data points in the segment minus one
    which shows the assertion.
  \end{proof}

  \section{Proofs of Section \ref{main:sec:fast_algorithm}}

  \section{Efficient computation of residual errors in least squares polynomial fits}\label{sup:sec:efficient-poly}

  The general model allows us to fit optimal piecewise polynomial models to timeseries valued in $\R$. This necessitates the computation of the residual errors of all possible polynomial models for all segments of the timeseries:
  for all $S \in \Seg(1:n)$ and degrees of freedom $\nu \in 1:|S|$ we want to find the residual error corresponding to the polynomial $p$ of degree $\nu - 1$ minimizing
  \begin{align}
    \sum_{i \in S} (p(t_i) - y_i)^2
  \end{align}
  such that $\phi^\nu(t_S, y_S) = p$.

  We will now describe an algorithm that computes the residual errors of \emph{all} models with at most degree $d$ on data of length $n$ in just $\O(n^2 d^2)$; so it can in particular compute all possible models in $\O(n^4)$. The basic idea of the described algorithm is not original -- however the precise formulation resulting in the advantageous complexity may well be novel and appears to be currently unpublished. Note that a very similar algorithm to the one presented can be found in \cite{smoothing-for-signals-with-discontinuities}. This algorithm differs from ours in that it only deals with what we call the \emph{data recursion}, and only handles equidistantly spaced data.

  \subsection{Computing the residual error without the associated model}

  We will call a polynomial $p$ of degree $k$ a \emph{polynomial with $k+1$ degrees of freedom} and denote $k+1$ by $\dof(p)$.
  Let $\R[x]_{<\nu}$ be the linear space of all polynomials with no more than $\nu$ degrees of freedom (so degree less than $\nu$) with real coefficients and let $e_1, ..., e_\nu$ be a basis of $\R[x]_{<\nu}$ such that $\dof(e_j) = j$.
  Then there are $(\alpha_1, ..., \alpha_\nu) = \alpha \in \R^\nu$ such that the least squares polynomial $p$ we're after equals $\sum_{i=1}^\nu \alpha_i e_i$. We thus want to find $\alpha$ minimizing
  \begin{align}
    \sum_{j=1}^n \left( \sum_{i = 1}^\nu \alpha_i e_i(t_j) - y_j \right)^2,
  \end{align}
  which we easily show to be equivalent to
  \begin{align}
    \min_{\alpha \in \R^n} \Norm{A\alpha - y}^2_2
  \end{align}
  with $A =
  \begin{pmatrix}
    e_i(t_j)
  \end{pmatrix}_{\substack{j=1,...,n \\ i=1,...,\nu}} \in \R^{n, \nu}$ and $y = (y_1, ..., y_n)^T \in \R^n$.

  The central fact used for the efficient computation is the following lemma.

  \begin{lemma}
    There is an orthogonal matrix $Q \in \Ortho(n)$ such that the residual error of this minimization is given by $\Norm{(Q^T y)_{\nu+1:n}}_2^2$.
  \end{lemma}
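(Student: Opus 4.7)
The plan is to invoke the QR decomposition of the design matrix and then exploit orthogonal invariance of the Euclidean norm.

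First, I would take a (full) QR decomposition of $A \in \R^{n \times \nu}$, writing $A = QR$ with $Q \in \Ortho(n)$ and $R \in \R^{n \times \nu}$ upper trapezoidal; explicitly, $R$ has an upper triangular block $\tilde R \in \R^{\nu \times \nu}$ in its top $\nu$ rows and a zero block in its bottom $n - \nu$ rows. Such a decomposition always exists, e.g.\ via Householder or Gram--Schmidt applied to the columns of $A$.

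Next, using that $Q^T$ is an isometry on $\R^n$, I would rewrite the least squares cost as
\begin{equation*}
  \Norm{A\alpha - y}_2^2 \;=\; \Norm{Q^T(A\alpha - y)}_2^2 \;=\; \Norm{R\alpha - Q^T y}_2^2.
\end{equation*}
Splitting the vector on the right into its first $\nu$ and its last $n - \nu$ coordinates, and using the block form of $R$, this equals
\begin{equation*}
  \Norm{\tilde R \alpha - (Q^T y)_{1:\nu}}_2^2 + \Norm{(Q^T y)_{\nu+1:n}}_2^2.
\end{equation*}
The second summand does not depend on $\alpha$. Minimizing over $\alpha \in \R^\nu$, the first summand can be driven to zero: if $A$ has full column rank then $\tilde R$ is invertible and $\alpha^* = \tilde R^{-1}(Q^T y)_{1:\nu}$ does the job, while in the rank-deficient case one picks any $\alpha$ in the preimage of $(Q^T y)_{1:\nu}$ under $\tilde R$ restricted to the column space (this is nonempty because the column space of $\tilde R$ equals the column space of $R$, which contains the projection of $Q^T y$ onto its first $\nu$ coordinates after an appropriate reordering argument). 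Either way, the minimal value is exactly $\Norm{(Q^T y)_{\nu+1:n}}_2^2$, which is the claim.

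There is essentially no hard step here; the main subtlety is the rank-deficient case, but since the paper's Assumption~\ref{main:assum:Basis}(i) guarantees full column rank for the admissible design matrices, one can simply assume $A$ has full column rank and obtain the invertibility of $\tilde R$ for free. A practical remark worth including is that a single QR decomposition of the design matrix for the maximal degree $\nu_{\max}$ suffices to produce, by restriction to the leading $\nu$ columns, the $Q$ and $R$ for every smaller $\nu$; this is what yields the advertised $\O(n^2 d^2)$ runtime bound when iterated over all segments and degrees.
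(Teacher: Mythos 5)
Your argument is the same as the paper's: full QR decomposition of the design matrix, orthogonal invariance of the Euclidean norm, the block split into $\Norm{\tilde R\alpha - (Q^Ty)_{1:\nu}}_2^2 + \Norm{(Q^Ty)_{\nu+1:n}}_2^2$, and invertibility of the triangular block to kill the first summand. The only caveat is your aside on the rank-deficient case: if $\tilde R$ is not invertible then $(Q^Ty)_{1:\nu}$ need not lie in its range and the first summand cannot in general be driven to zero (so the stated residual formula would fail there), but since the polynomial design matrices in this setting are Vandermonde-type with distinct nodes and $\nu \le n$, full column rank holds and the remark is immaterial.
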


  \begin{proof}
    Given a QR decomposition of $A$ into an orthogonal matrix $Q \in \Ortho(n)$ and a real matrix $\tilde{R} =
    \begin{pmatrix}
      R \\
      0_{n-\nu, \nu}
    \end{pmatrix}$ where $R \in \GL(\nu)$ we find that for all $\alpha$

    \begin{align}
      \Norm{A\alpha - y} = \Norm{Q^T(A\alpha - y)} = \Norm{\tilde{R}\alpha - Q^T y} = \Norm{
        \begin{pmatrix}
          R\alpha - (Q^T y)_{1:\nu} \\
          -(Q^T y)_{\nu+1:n}
      \end{pmatrix}}.
    \end{align}

    Squaring this expression we find
    \begin{align}
      \Norm{A\alpha - y}^2_2 = \Norm{R\alpha - (Q^T y)_{1:\nu}}_2^2 + \Norm{(Q^T y)_{\nu+1:n}}_2^2.
    \end{align}
    Since the rightmost term is independent of $\alpha$ it has to correspond to the residual error: minimizing our original target amounts to minimizing $\Norm{R\alpha - (Q^T y)_{1:\nu}}_2^2$, but since $R$ is invertible and the norm nonnegative this term trivially vanishes for the optimal solution $\alpha = R^{-1} (Q^T y)_{1:\nu}$. This shows that $\Norm{(Q^T y)_{\nu+1:n}}_2^2$ is indeed the residual error we're after.
  \end{proof}

  \subsubsection{Data recursion}

  We will now derive an recursion linking the residual error on data points $1,...,n$ with the one on data points $1,...,n+1$.

  Assume now that we already know the QR decomposition as stated above for data $1,...,n$. If we want to add another data point $t_{n+1}, y_{n+1}$ this amounts to adding another row to $A$ and $y$ to obtain $A' :=
  \begin{pmatrix}
    A \\
    e_{1:\nu}(t_{n+1})
  \end{pmatrix} \in \R^{n+1, \nu}$ and analogously a new right hand side $y' \in \R^{n+1}$. Setting $\tilde{Q} :=
  \begin{pmatrix}
    Q       & 0_{n,1} \\
    0_{1,n} & 1
  \end{pmatrix} \in \Ortho(n+1)$ we find that
  \begin{align}
    \tilde{Q}^TA' =
    \begin{pmatrix}
      R \\
      0 \\
      e_{1:\nu}(t_{n+1})
    \end{pmatrix}.
  \end{align}

  We can eliminate the last row using a sequence of Givens rotations $G_{1, n+1}$, $G_{2, n+1}, ..., G_{\nu, n+1} \in \Ortho(n+1)$ to obtain a QR decomposition for $A'$:
  \begin{align}
    A' = \underbrace{\prod_{i=1}^\nu G \tilde{Q}}_{\mathclap{=: Q' \in \Ortho(n+1)}}
    \begin{pmatrix}
      R' \\
      0
    \end{pmatrix}.
  \end{align}

  We find the corresponding residual to be $\Norm{(Q'y')_{\nu+1:n+1}}$. Since the Givens rotation $G_{j, n+1}$ only affects rows $j$ and $n+1$ of whatever matrix it acts on, we find the \emph{data recursion}
  \begin{align}
    \Norm{(Q'y')_{\nu+1:n+1}}_2^2 & = \Norm{(Qy)_{\nu+1:n}}_2^2 + \Norm{(Q'y')_{n+1}}_2^2 \\
    & = \Norm{(Qy)_{\nu+1:n}}_2^2 + ((Q'y')_{n+1})^2
  \end{align}
  relating the residual error of a $\nu$ degree of freedom model on data $1:n$ with the residual error of a $\nu$ degree of freedom model on data $1:n+1$. We want to emphasize the fact that $Q'$ is closely connected to $Q$ essentially via left-multiplication by a sequence of Givens rotations.

  \subsubsection{Degree recursion}

  We will now similarly derive a recursion linking the residuals of the $\nu$ and $\nu+1$ degree of freedom models on a given segment $1,...,n$.

  Assume now that we already know the QR decomposition as stated above for data $1,...,n$. We now extend our basis for $\R[x]_{<\nu}$ to a basis $e_1, ..., e_{\nu+1}$ of $\R[x]_{<\nu+1}$ and want to find the residual error of the least squares estimate from this larger space for the same data (of course under the assumption that $n \geq \nu+1$). Adding the new basis element amounts to adding another column to $A$ to obtain $A' :=
  \begin{pmatrix}
    A & e_{\nu+1}(t_{1:n})
  \end{pmatrix} \in \R^{n, \nu + 1}$.

  We factor $A'$ as
  \begin{align}
    A' =
    \begin{pmatrix}
      A & e_{\nu+1}(t_{1:n})
    \end{pmatrix} =
    \begin{pmatrix}
      Q
      \begin{pmatrix}
        R \\ 0
      \end{pmatrix} & e_{\nu+1}(t_{1:n})
    \end{pmatrix} = Q
    \begin{pmatrix}
      \begin{pmatrix}
        R \\ 0
      \end{pmatrix} & Q^T e_{\nu+1}(t_{1:n})
    \end{pmatrix}
  \end{align}
  and letting $G \in \Ortho(n)$ be the product of Givens rotations eliminating components $\nu+2$ through $n$ of $Q^T e_{\nu+1}(t_{1:n})$ from row $\nu+1$ we further factor
  \begin{align}
    A' = QG^T
    \begin{pmatrix}
      G
      \begin{pmatrix}
        R \\ 0
      \end{pmatrix} &
      \begin{pmatrix}
        w \\ 0
      \end{pmatrix}
    \end{pmatrix}
  \end{align}
  where $
  \begin{pmatrix}
    w \\ 0_{\nu+2, 1}
  \end{pmatrix} = GQ^T e_{\nu+1}(t_{1:n})$. Since $G$ is a product of Givens rotations that only operate on rows $\nu+1 : n$ we have $G
  \begin{pmatrix}
    R \\ 0_{n-\nu, 1}
  \end{pmatrix} =
  \begin{pmatrix}
    R \\
    0_{n-\nu,1}
  \end{pmatrix}$. Setting $R' =
  \begin{pmatrix}
    \tilde{R} & w
  \end{pmatrix}$, $Q' = QG^T$ we obtain the QR decomposition
  \begin{align}
    A' = Q'
    \begin{pmatrix}
      R' \\ 0_{n-(\nu+2)}
    \end{pmatrix}
  \end{align} and the residual error thus has to equal
  \begin{align}
    \Norm{(Q'^T y)_{\nu+2:n}}_2^2 = \Norm{(G Q^T y)_{\nu+2:n}}_2^2.
  \end{align} Once again using the property that $G$ really only ''modifies'' rows $\nu+1 : n$ we can easily obtain the new residuals from $(Q^T y)_{\nu + 1 : n}$ by applying the correct sequence of Givens rotations to it.

  \subsubsection{The full algorithm}

  We will now describe the full algorithm for obtaining the residuals for data $1:r$ for all $r \in 1:n$.

  To avoid conditioning problems the algorithm uses the Newton basis.
  Given a real sequence $x_1,...,x_d$ define the associated Newton basis for the space $\R[x]_{\leq d}$ of polynomials of degree no more than $d$ by
  \begin{align}
    N^0(x) & = 1,                                                    \\
    N^k(x) & = \prod_{i=1}^k (x - x_i) \qquad \text{for } k=1,...,d.
  \end{align}
  One central property of this basis is that $N^k(x_j) = 0$ for all $j \leq k$.

  The algorithm splits into two major parts: a rather involved core algorithm computing all residuals for segments starting at the first data point and a simple wrapper for finding the residuals on the the remaining segments by calling into the core for each possible starting point.

  The core algorithm iteratively constructs the matrix $\tilde{R}$ row by row and column by column by applying Givens rotations to the system matrix. This yields one new residual per eliminated matrix element. It's possible to do this without actually constructing the full system matrix and instead work with just a $(\nu + 1)\times (\nu + 1)$ matrix.

  We provide an extensively documented implementation of the algorithm in the form of a Rust crate at \url{https://crates.io/crates/polyfit-residuals}. The repository also contains a python implementation of the core algorithm, including a basic comparison with a numpy-based implementation of the naive algorithm.

  \begin{remark}
    The algorithm's outer loop is trivially parallelizeable. On a theoretical level (given an unbounded number of processes) this yields an algorithm of complexity $\O(nd^2)$; in practice the complexity will remain the same, however parallelization still allows for some great performance increases for all but the smallest data sizes and maximal degrees. Some benchmarks comparing the sequential and parallel implementations may be found on the crates' website mentioned above.
  \end{remark}

  \section{Linear algorithm for pointwise minimum of a set of affine functions}\label{sup:sec:pointwise}

  One commonly used algorithm for computing pointwise minima is a divide-and-conquer algorithm of quadratic complexity. A simpler linear time algorithm can be obtained by translating the computation of pointwise minima to that of finding convex-hulls on the dual of the homogenized functions. This correspondence is well-known. The Graham scan may then be used as a linear time algorithm.

  Alternatively the following algorithm may be used. It's robust to the potential numeric problems that can occur and integrates Occam's razor as desired by the main algorithm.
  (The algorithm is probably known, and we state it here for completeness as we did not find a suitable reference for it.)

  Where the usual recursive divide-and-conquer algorithm works by partitioning $\R$ directly and finding definitive minima at the points of intersection, the one described instead successively makes assumptions about the minimum and backtracks once it realizes those assumptions were incorrect. It does so by scanning through the input collection in order of the slopes calculating intersections of the unchecked functions with assumed minimal segments.
  A visualization of the main loop is shown in Figure \ref{fig:pointwise-min-iter-example}.

  \lstinputlisting[language=Python]{listings/pointwise_min_nocomment.py}

  \begin{figure}
    \centering\includegraphics[width=\textwidth]{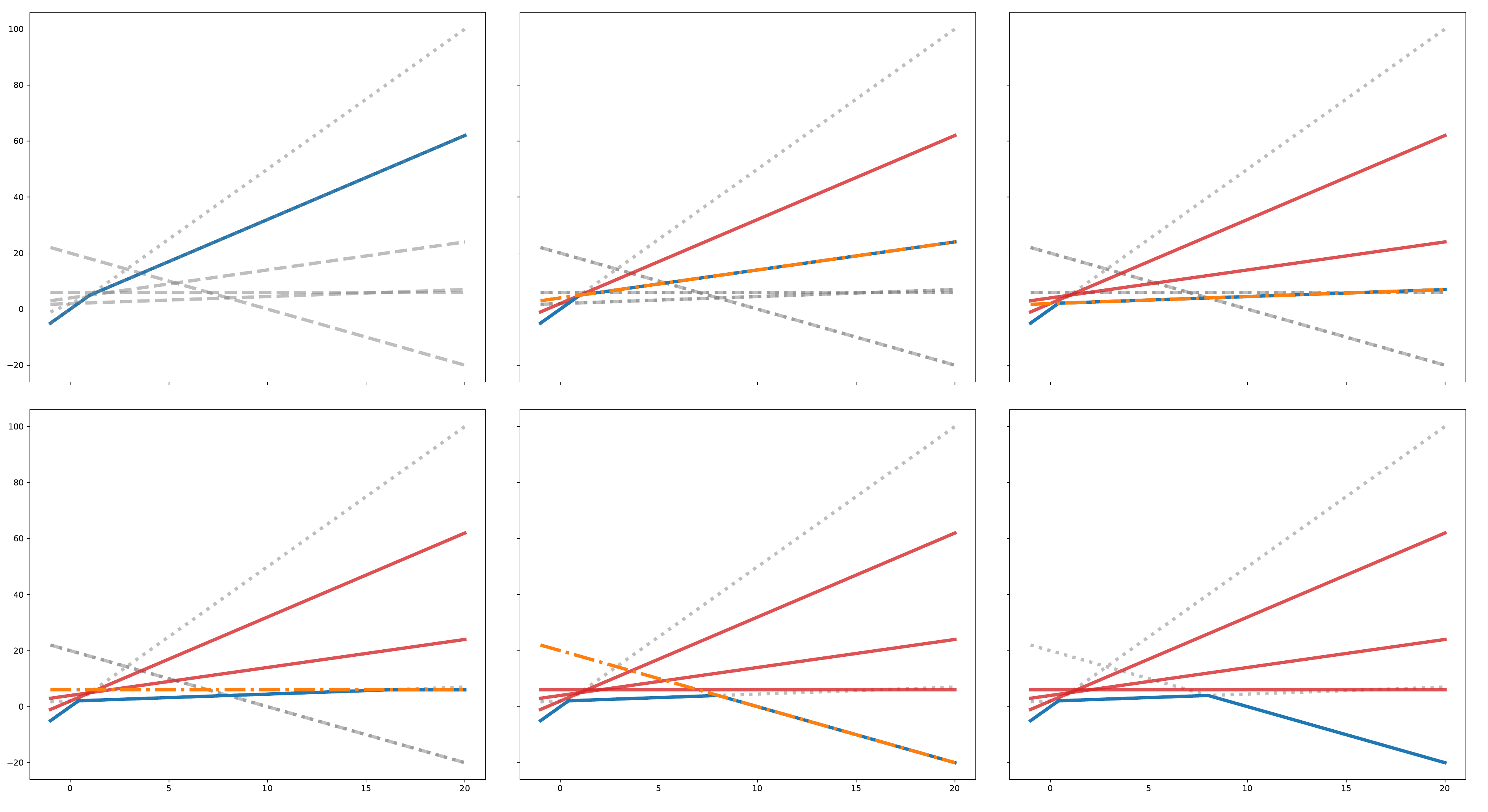}
    \caption{Visualization of the iterative algorithm. The top leftmost picture shows the initial state: we assume $f_1, f_2$ to make up the minimum with the border between the segments being at the point where the two graphs intersect. The other images in order left to right, top to bottom show successive iterations of the main loop finishing with the final state in the bottom right corner. The currently assumed pointwise minimum $F$ is drawn in blue, lines that we know to be nonminimal in red and lines we're yet to consider in dashed grey. During the main loop the currently considered \emph{right function} is drawn in orange.}
    \label{fig:pointwise-min-iter-example}
  \end{figure}

  \subsection{Proof of correctness}

  We will now prove that this algorithm actually works.

  \begin{proof}
    Denote by $f_1, ..., f_n$ the input functions, by $a_1, ..., a_n$ their slopes and by $F$ the pointwise minimum.
    We will use an induction argument on the size $n$ of the input assuming that $\mathcal{F} = \{f_1, ..., f_n\}$ such that $a_1 > ... > a_n$.

    For $n=2$ the main loop doesn't do anything and the returned initial state correctly represents the solution.

    We thus assume that the algorithm works for $n \geq 2$ and show that it also works for inputs of size $n+1 = |\mathcal{F}|$. By the induction assumption in the last iteration of the loop, the two (nonempty) stacks form the pointwise minimum $\widetilde{F}$ of $\{f_1, ..., f_n\}$. We thus only have to show that the last iteration will find $x \mapsto \min \{\widetilde{F}(x), f_{n+1}(x)\}$.
    It's not hard to show that for large enough $x$ the function $f_{n+1}$ of lowest slope will be minimal. Thus $f_{n+1}$ will be part of the correct solution $F$ and has to intersect $\widetilde{F}$ in some point $\xi$. Lets assume that $\widetilde{F}$ is given by $f_{i_1}$ on the open interval $I_1$, $f_{i_2}$ on $I_2$ and so on up to $f_{i_k}$ on $I_k$ \st $a_{i_1} > a_{i_2} > ... > a_{i_k}$.
    There are two cases to consider:
    \begin{enumerate}
      \item $\xi \in I_\nu$ for some $\nu$ ($f_{n+1}$ intersects $\widetilde{F}$ in a linear segment)
      \item $\xi \not\in I_\nu$ for all $\nu$ ($f_{n+1}$ intersects $\widetilde{F}$ in a corner)
    \end{enumerate}

    We start by analyzing the first case:
    Assume that $\nu = k$, $f_{n+1}(\xi) = f_{i_\nu}(\xi)$. Since $a_{i_\nu} > a_{n+1}$ and $\xi_k < \xi$ this implies that $f_{n+1}(\xi_k) > f_{i_\nu}(\xi_k)$ where $\xi_k$ denotes the top of stack border and the backtracking stops immediately without removing anything resulting in the correct minimum being returned.

    If on the other hand $\nu < k$ we know that $f_{n+1}(\xi) = f_{i_\nu}(\xi) < f_{i_k}(\xi)$ for $\xi < \xi_k$ and thus $f_{n+1}(\xi_k) < f_{i_k}(\xi_k)$ which means $f_{i_k}$ and $\xi_k$ are removed from their stacks in the first backtracking step. At this point the rest of the algorithm will behave exactly the same as if it was started on $\mathcal{F} \setminus \{f_{i_k}\}$ for which we know that the correct result will be returned by the induction hypothesis.

    A similar argument works for the second case completing the proof.
  \end{proof}

  \section{Expression for synthetic piecewise polynomial}\label{sup:sec:expression-piecewise-poly}

  The piecewise polynomial function used in the numerical experiments is given (to 3 decimal places) by
  \begin{align}
    \begin{cases}
      0.0 + 10.88 x                                                   & x \in [0.0, 0.092)   \\
      -3.688 + 128.812 x - 1299.917 x^2 + 5707.356 x^3 - 9229.693 x^4 & x \in [0.092, 0.262) \\
      0.37                                                            & x \in [0.262, 0.298) \\
      -3.881 + 33.877 x - 95.406 x^2 + 87.499 x^3                     & x \in [0.298, 0.6)   \\
      -88.748 + 267.536 x - 199.38 x^2                                & x \in [0.6, 0.729)   \\
      1523.272 - 6132.631 x + 8230.53 x^2 - 3679.993 x^3              & x \in [0.729, 0.814) \\
      5.383 - 5.383 x                                                 & x \in [0.814, 1.0]
    \end{cases}.
  \end{align}

  \section{Graphical scheme of algorithm}
  Figure \ref{fig:algo_graphic} depicts the basic architecture of the full algorithm.
  \begin{figure}
    \begin{center}
      \includegraphics[width=\textwidth, height=\textheight, keepaspectratio]{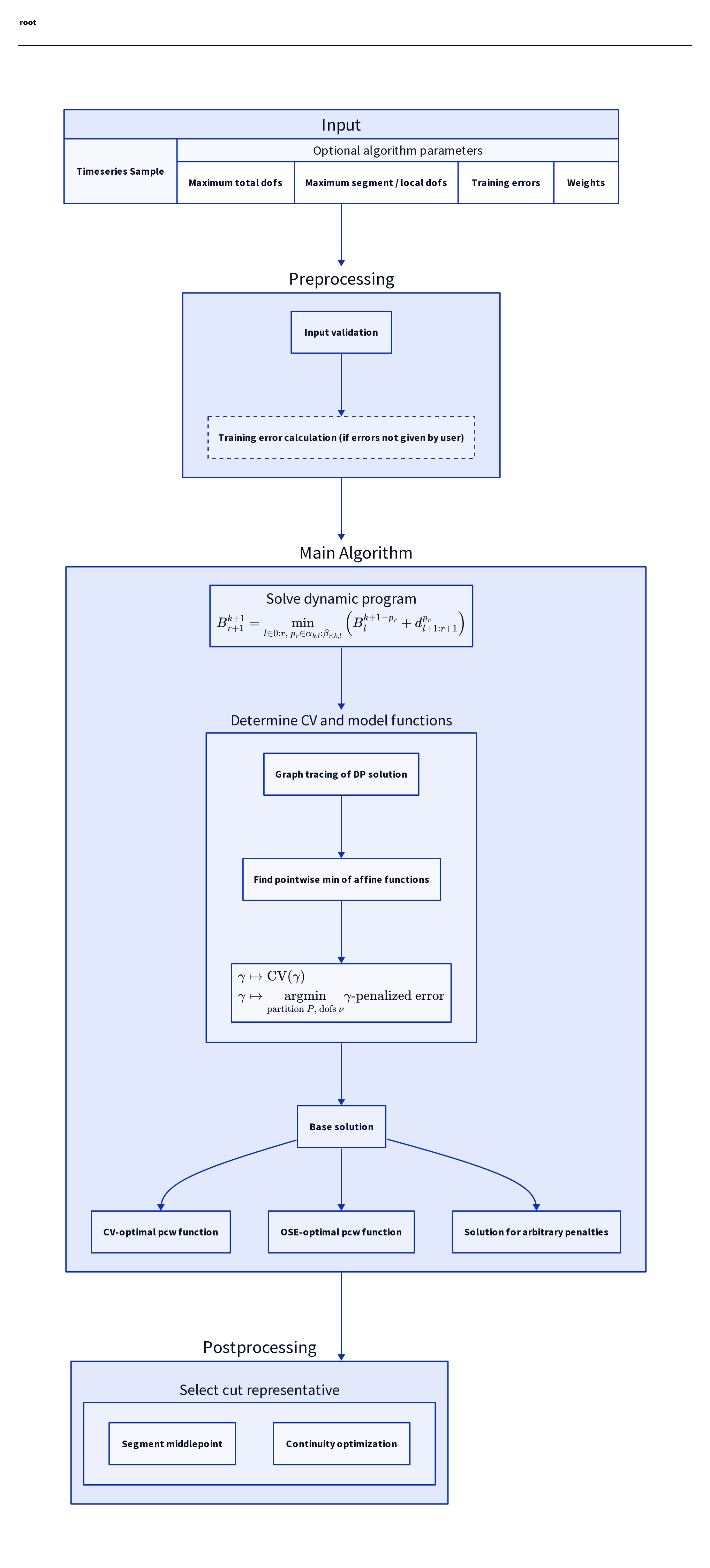}
    \end{center}
    \caption{Basic architecture of the full algorithm.}
    \label{fig:algo_graphic}
  \end{figure}

  \section{Comparison to $L^1$ based CV score}
  Figure~\ref{supp:fig:accuracy}
  shows the results of the experiment shown in Figure~\ref{main:fig:accuracy} of the main document
  for the $L^1$ based CV score.
  \begin{figure}
    \centering
    \begin{subfigure}{\textwidth}
      \includegraphics[width=0.49\textwidth]{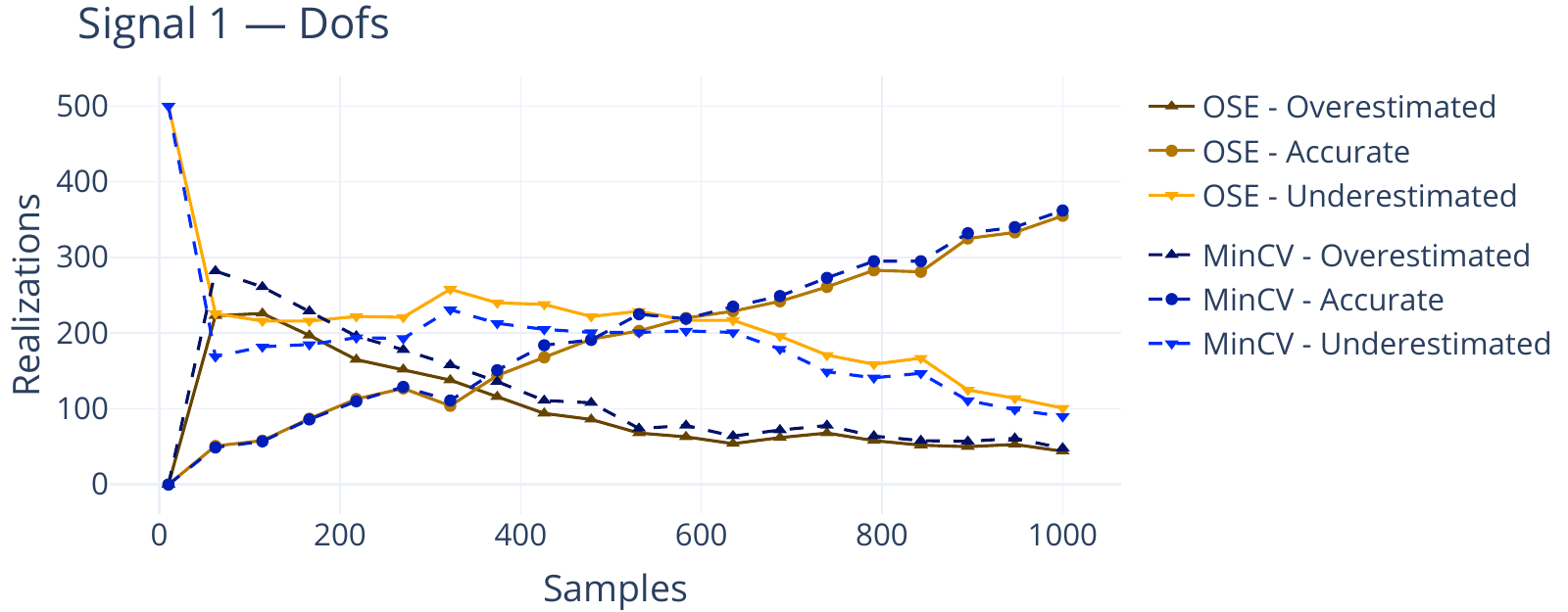}
      \includegraphics[width=0.49\textwidth]{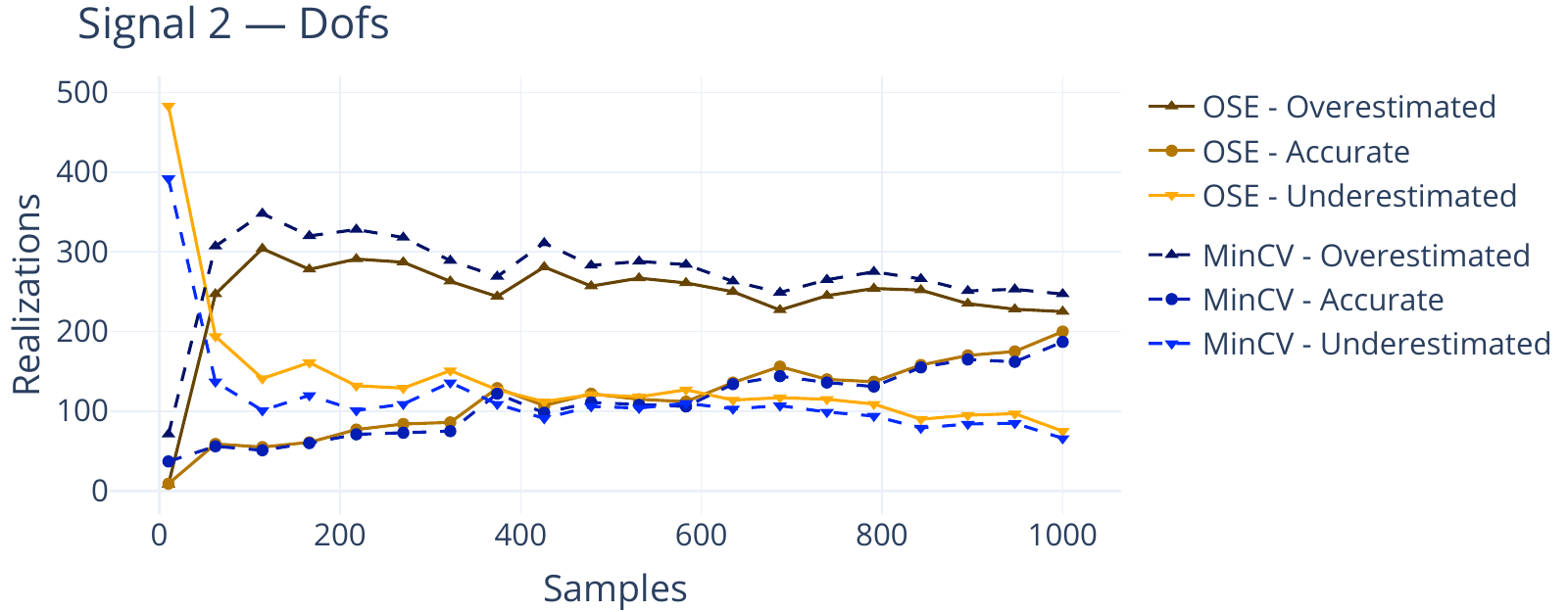}
    \end{subfigure}
    \\
    \begin{subfigure}{\textwidth}
      \includegraphics[width=0.49\textwidth]{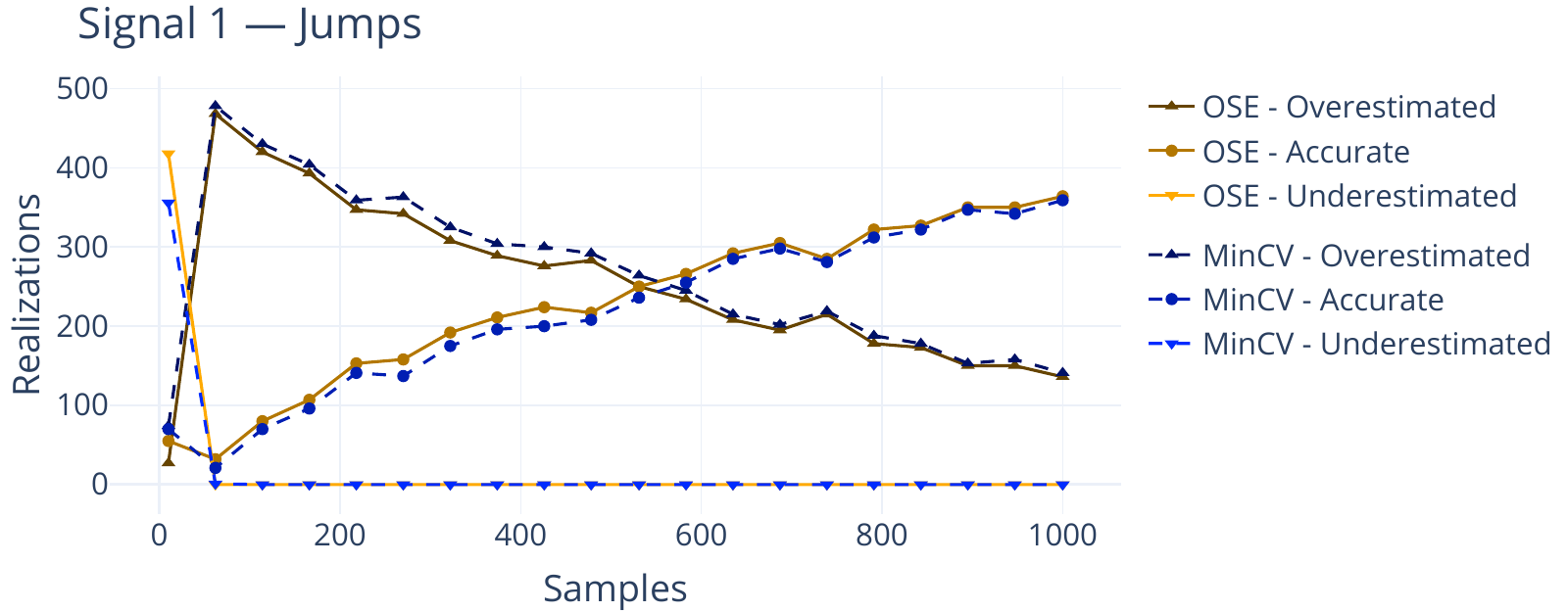}
      \includegraphics[width=0.49\textwidth]{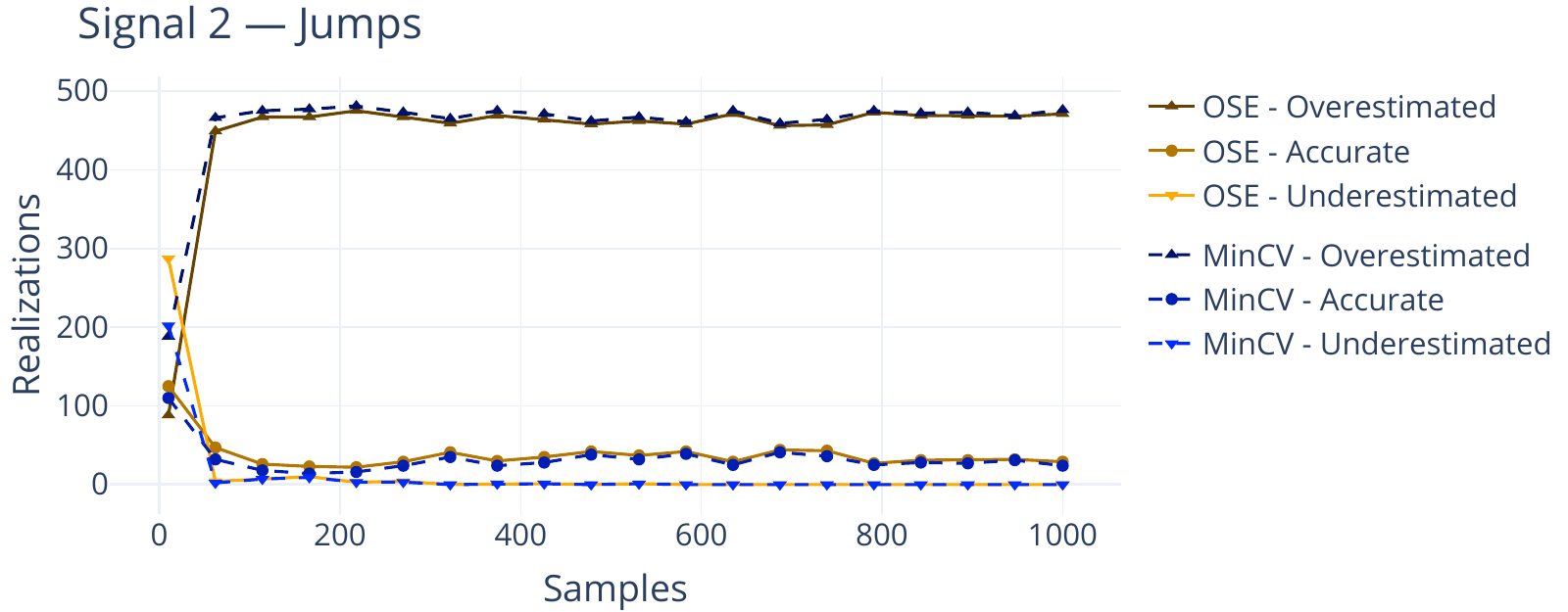}
    \end{subfigure}
    \caption{Results of the experiment in Figure~\ref{main:fig:accuracy} of the main document
      for the $L^1$ based CV score.
    }
    \label{supp:fig:accuracy}
  \end{figure}

  \section{Further numerical comparison for the synthetic data}\label{sup:sec:more-experiments}

  Results of the Figures \ref{main:fig:synth-poly-len-noise} and \ref{main:fig:heavisine-len-noise}
  of the main document for the cpop algorithm \citep{fearnhead2024cpop}
  are shown in Figures \ref{fig:synth-poly-len-noise-cpop} and \ref{fig:heavisine-len-noise-cpop} respectively.

  \begin{figure}[!t]
    \newcommand{\myincludegraphics}[2]{
      \begin{subfigure}[b]{0.45\textwidth}
        \centering
        \includegraphics[width=\textwidth, trim=0 0 150 0, clip]{#1}
        \caption{#2}
      \end{subfigure}
    }
    \myincludegraphics{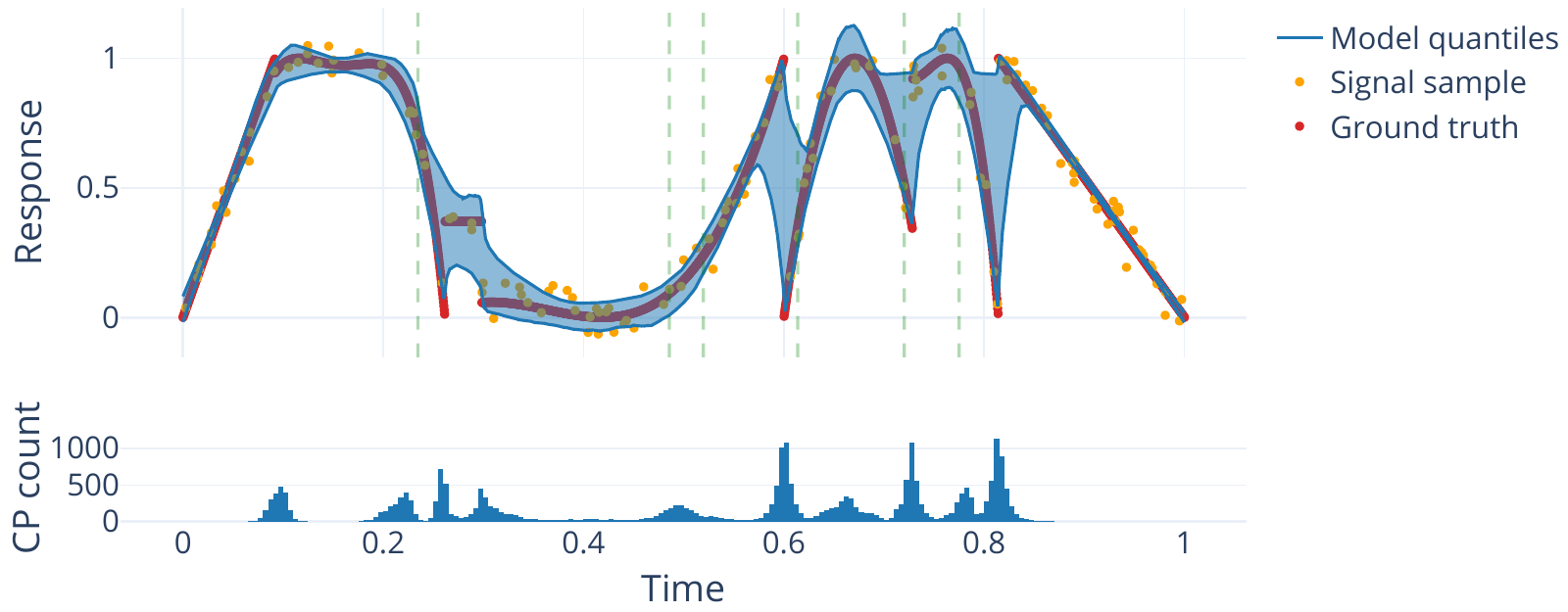}{$N = 150$}\hfill
    \myincludegraphics{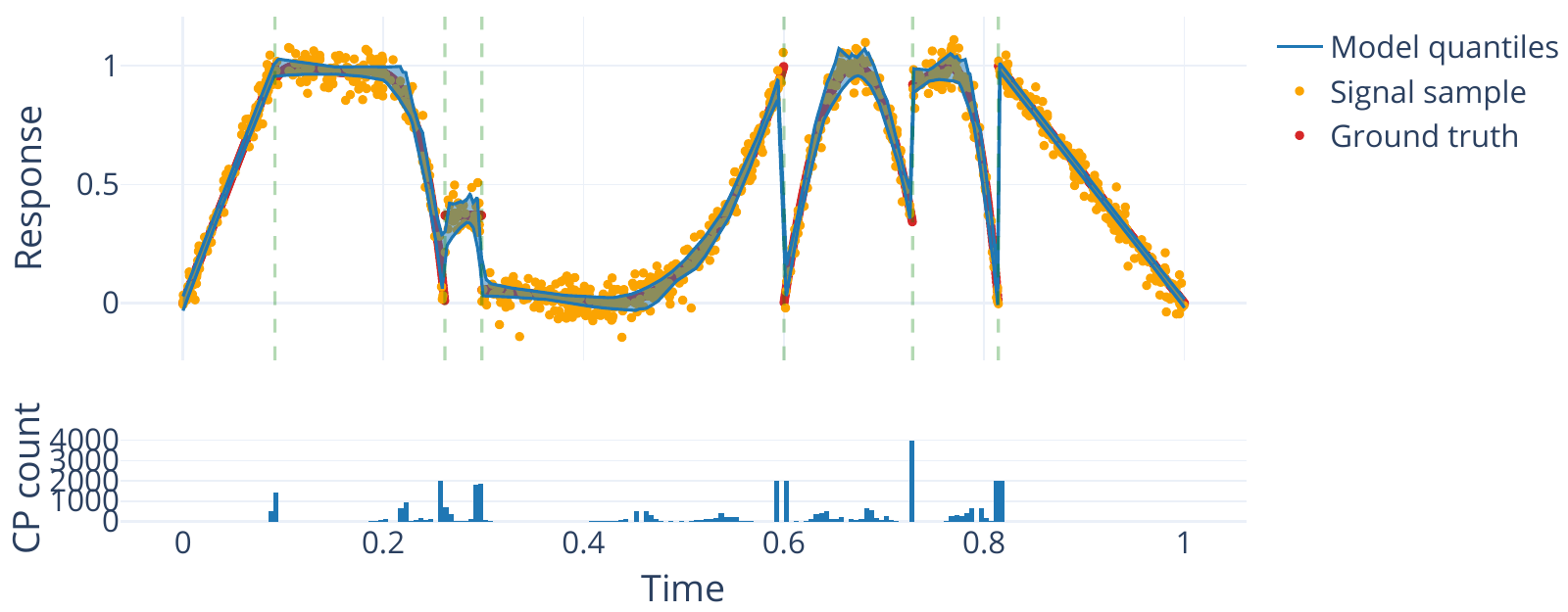}{$\sigma=0.5$}
    \myincludegraphics{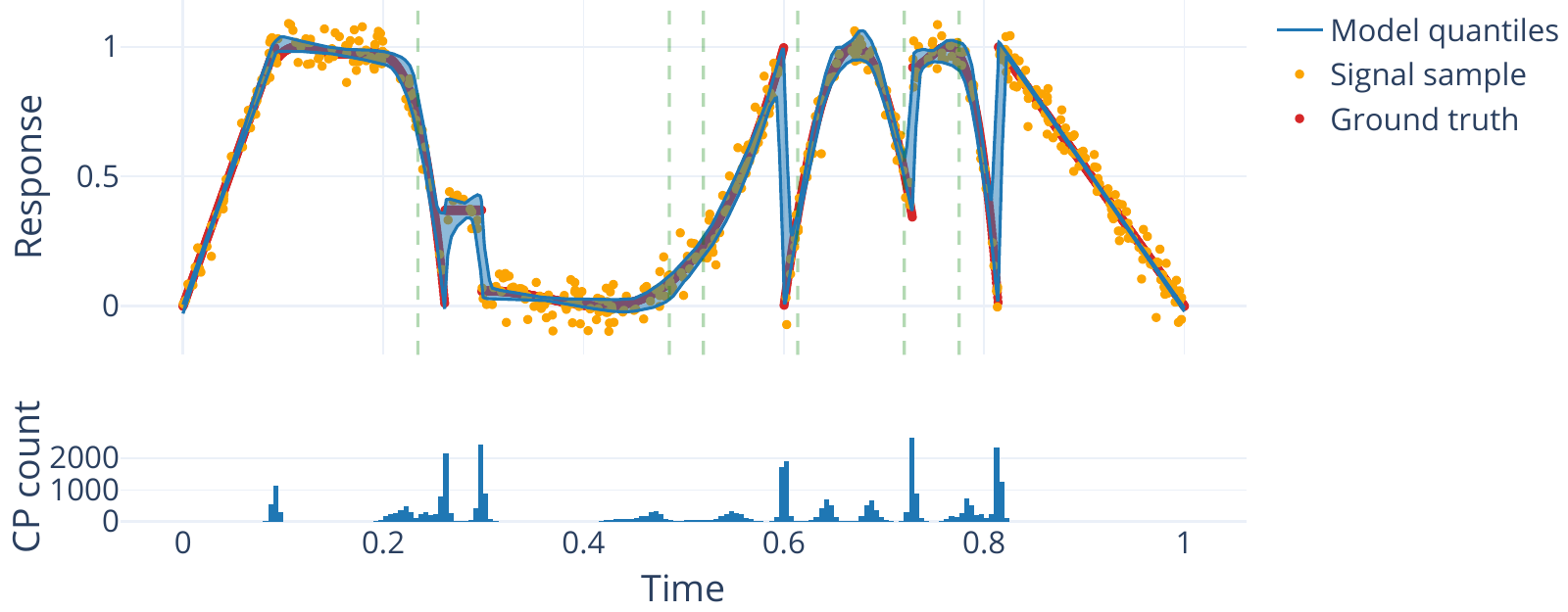}{$N = 500$}\hfill
    \myincludegraphics{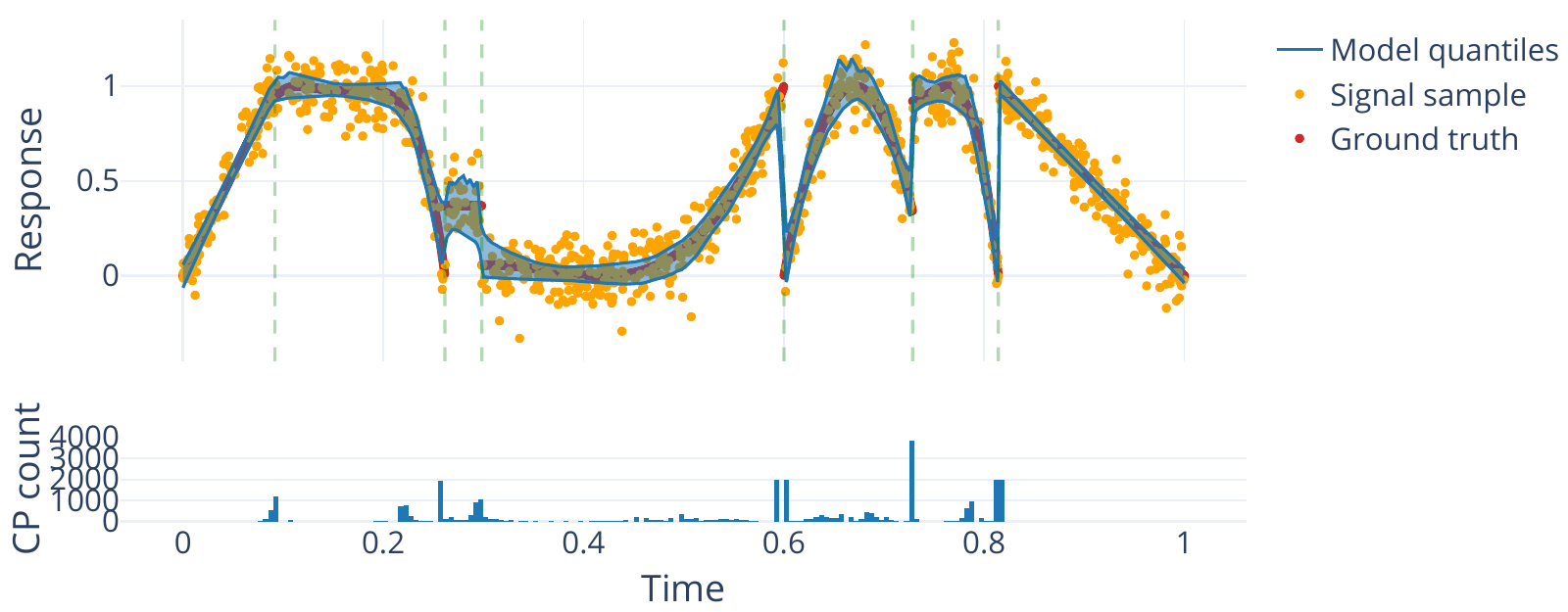}{$\sigma=1$}
    \myincludegraphics{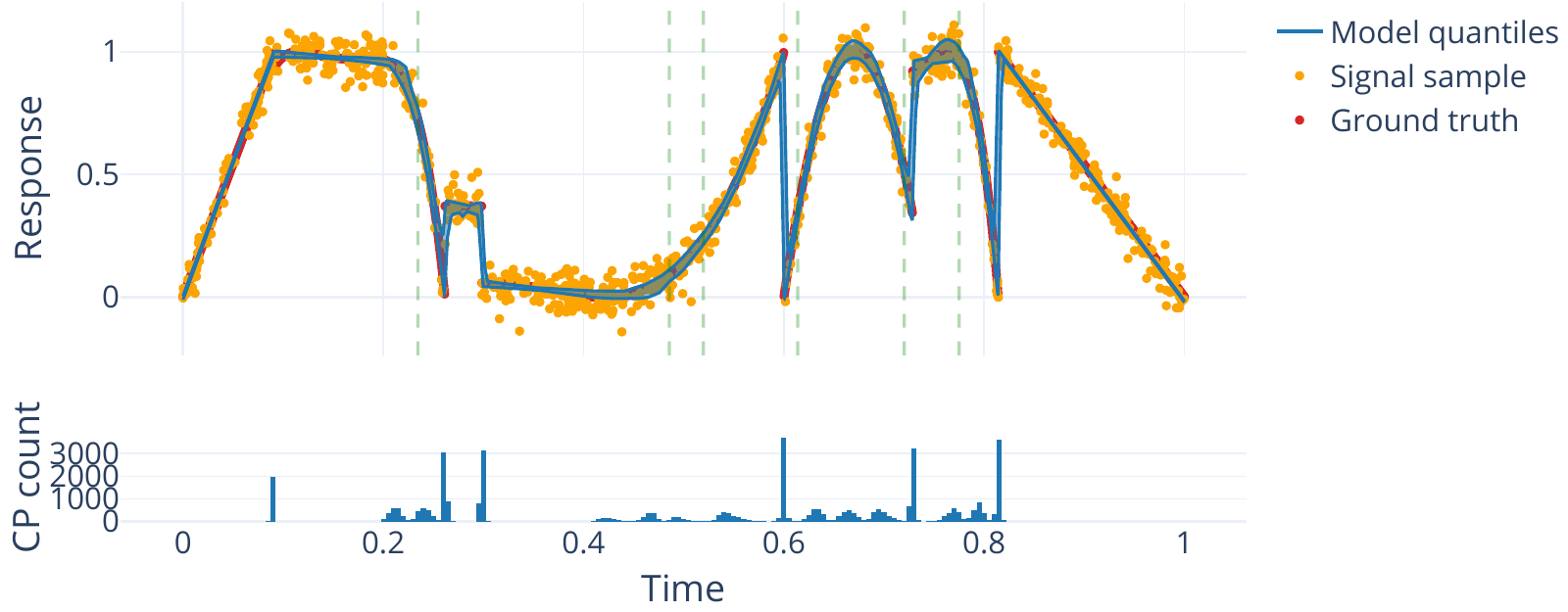}{$N=1000$}\hfill
    \myincludegraphics{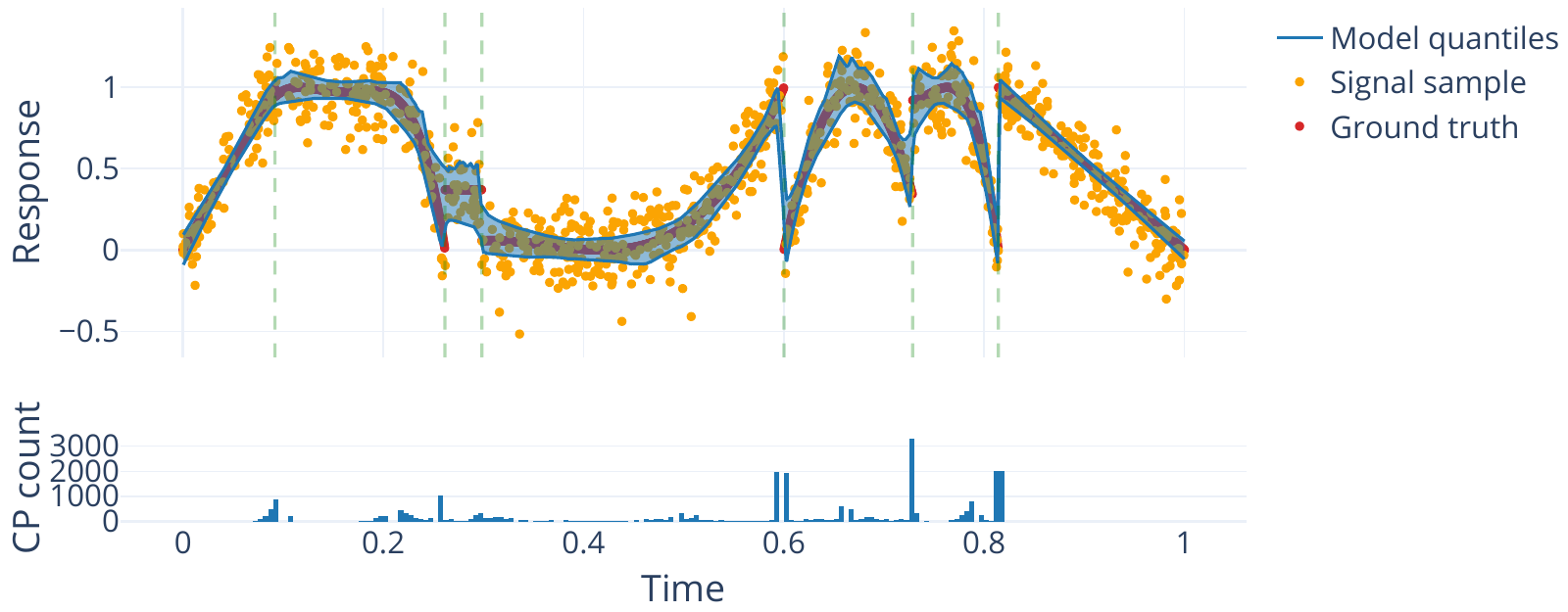}{$\sigma=1.5$}
    \caption{
      Results analogous to Figure~\ref{main:fig:synth-poly-len-noise} of the main document for
      the cpop algorithm \cite{fearnhead2024cpop}.
    }
    \label{fig:synth-poly-len-noise-cpop}
  \end{figure}

  \begin{figure}[!t]
    \newcommand{\myincludegraphics}[2]{
      \begin{subfigure}[b]{0.45\textwidth}
        \centering
        \includegraphics[width=\textwidth, trim=0 0 150 0, clip]{#1}
        \caption{#2}
      \end{subfigure}
    }
    \myincludegraphics{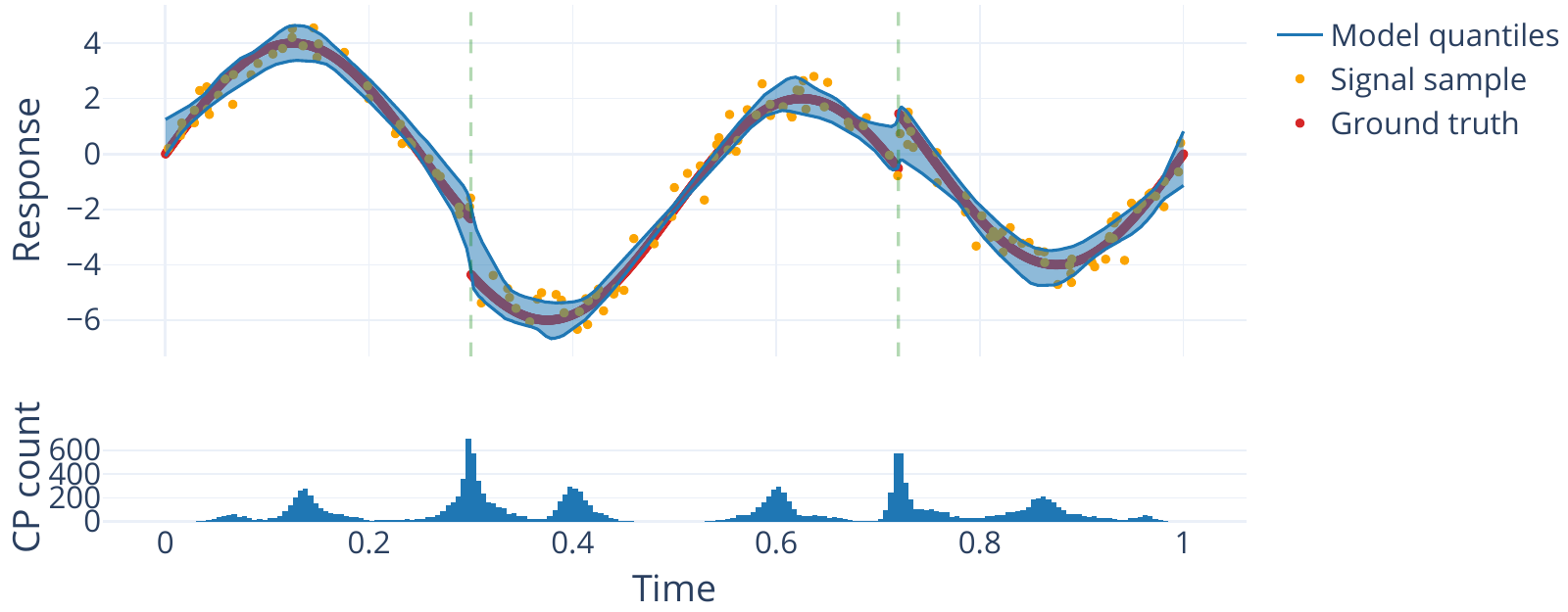}{$N = 150$}\hfill
    \myincludegraphics{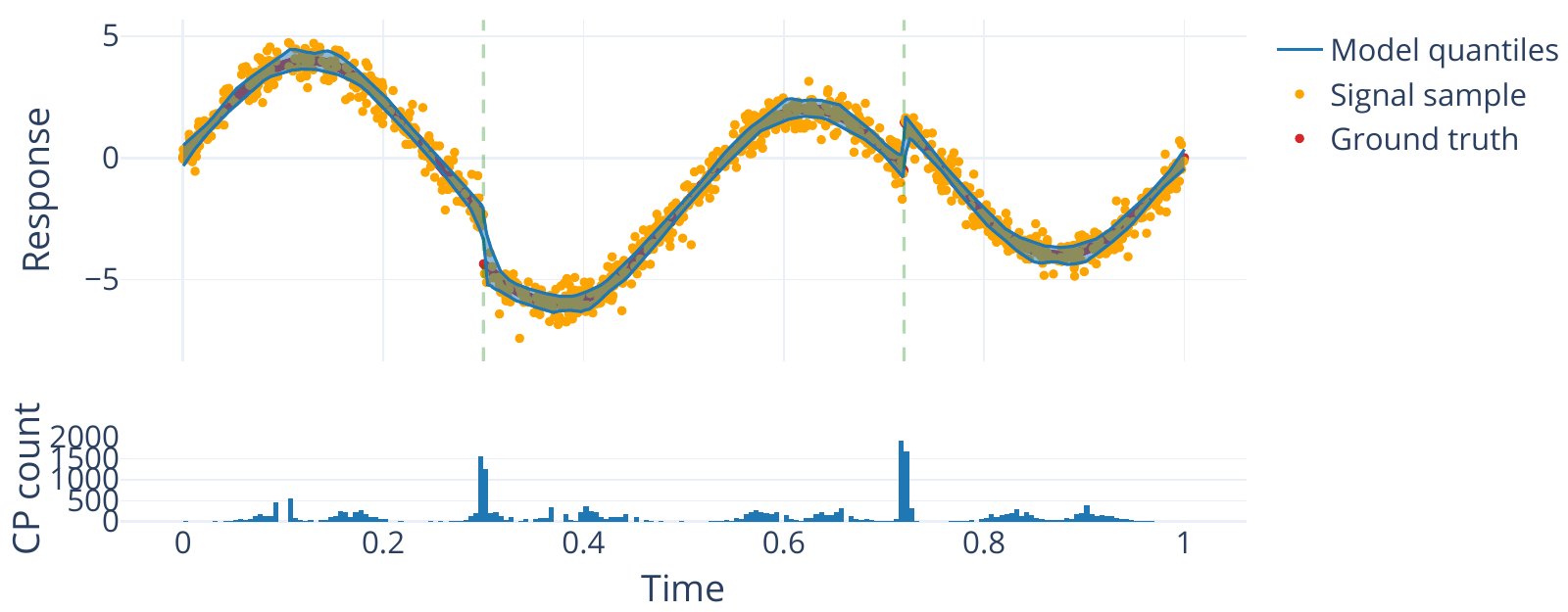}{$\sigma=0.5$}
    \myincludegraphics{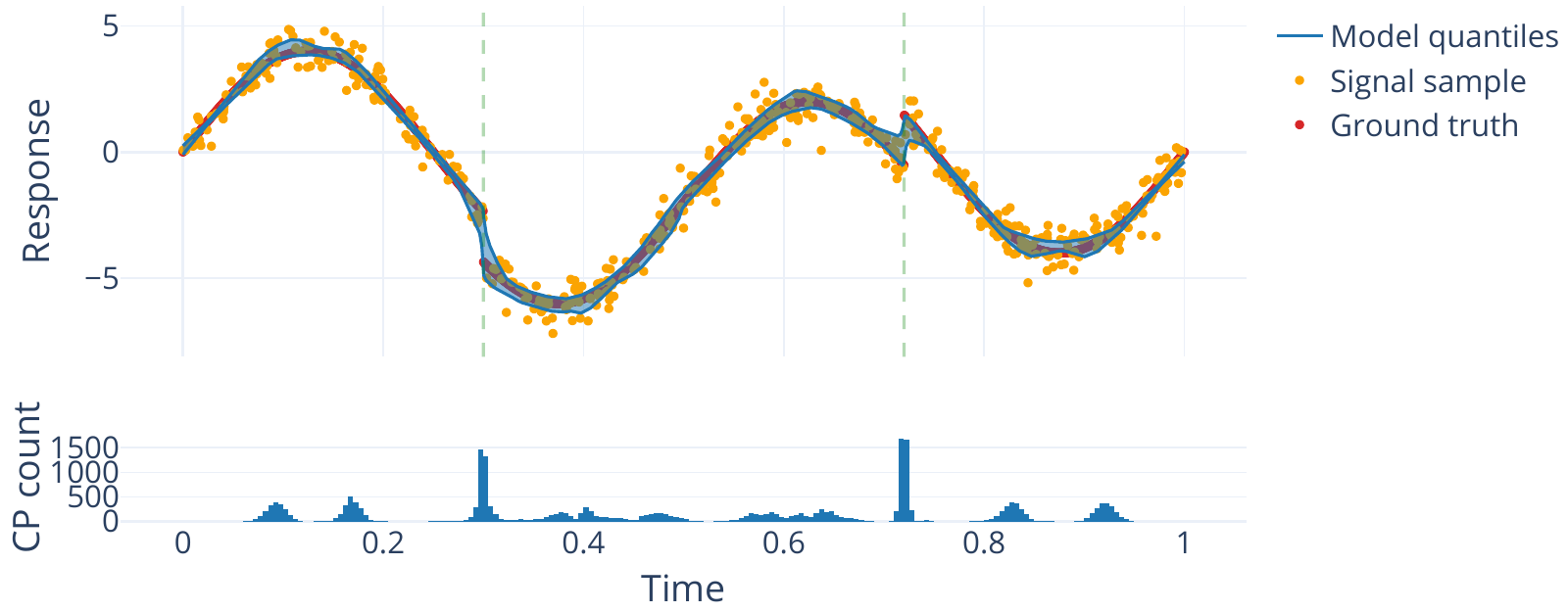}{$N = 500$}\hfill
    \myincludegraphics{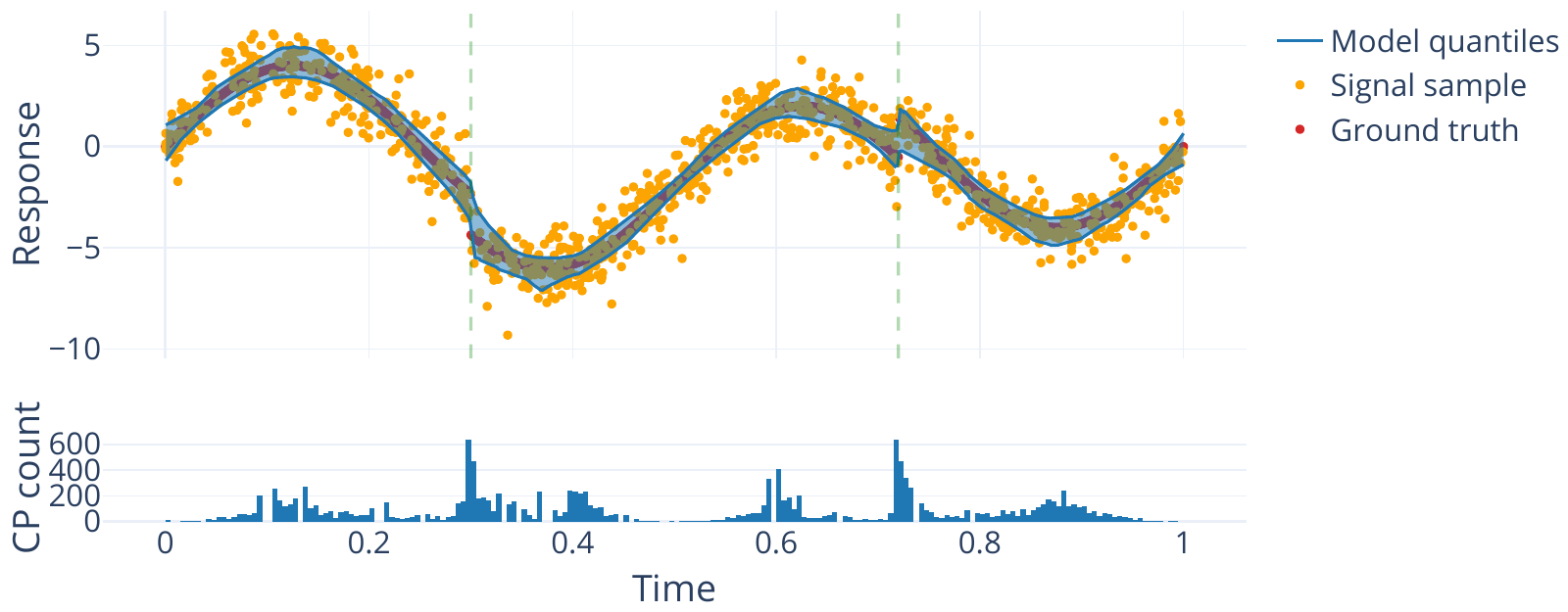}{$\sigma=1$}
    \myincludegraphics{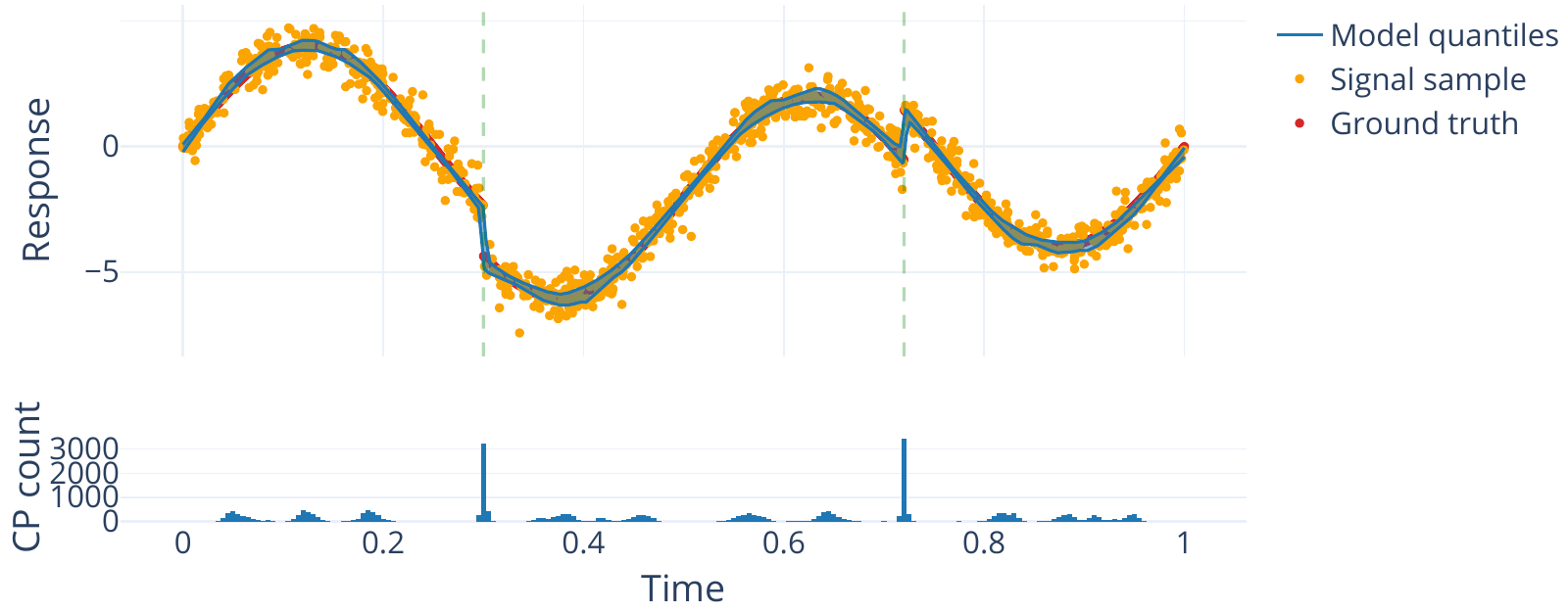}{$N=1000$}\hfill
    \myincludegraphics{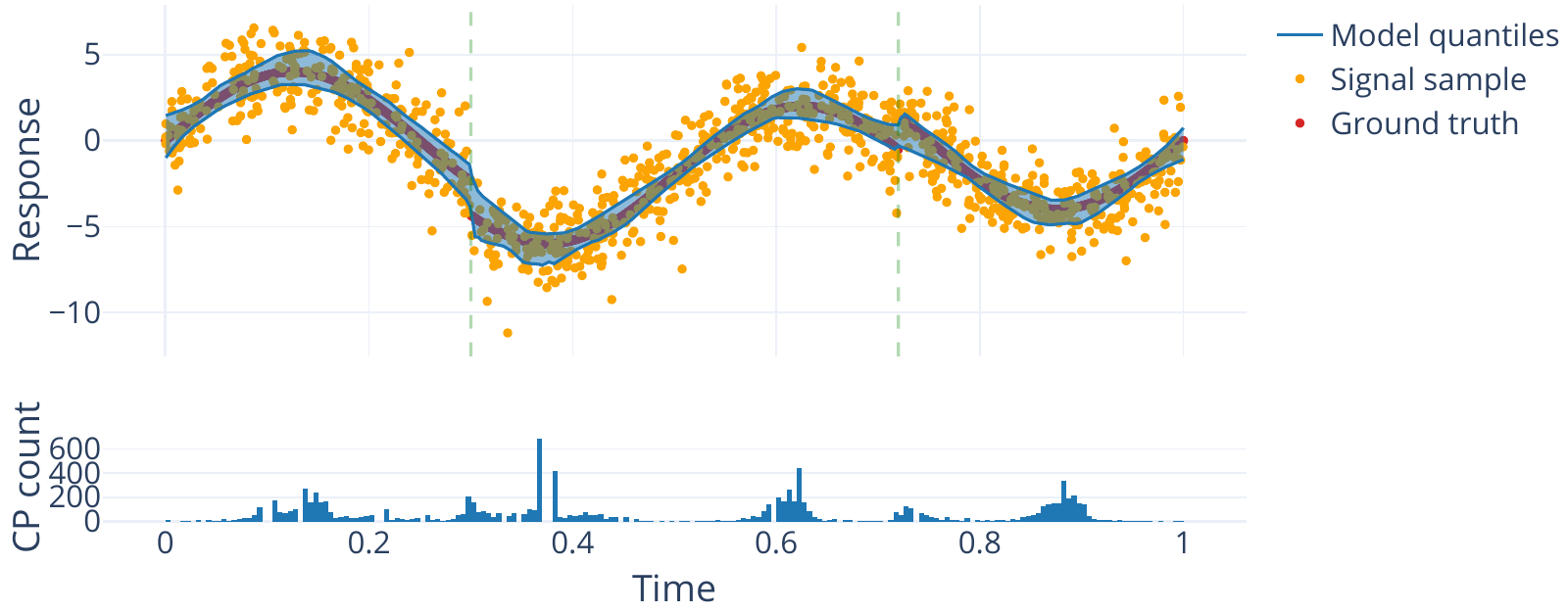}{$\sigma=1.5$}
    \caption{
      Results analogous to Figure~\ref{main:fig:heavisine-len-noise} of the main document for
      the cpop algorithm \cite{fearnhead2024cpop}.
    }
    \label{fig:heavisine-len-noise-cpop}
  \end{figure}

  \printbibliography[heading=myheadingsupplementary]

  \end{document}